\declaretheorem[name=Theorem]{theorem}
\declaretheorem[name=Lemma,numberwithin=section]{lemma}
\declaretheorem[name=Corollary,sibling=lemma]{corollary}
\newcommand{\ceil}[1]{\left\lceil #1\right\rceil}
\newcommand{\ang}[1]{\langle #1\rangle}
\newcommand{\half}[1]{\frac{#1}{2}}
\renewcommand{\ang}[1]{\langle #1\rangle}
\newcommand{\RE}{\mathbb{R}}            
\newcommand{\eps}{\varepsilon}          
\newcommand{\ST}{\,:\,}                 
\newcommand{\MM}{\mathscr{M}}
\newcommand{\QQ}{\mathcal{Q}}
\newcommand{\bd}{\partial}
\newcommand{\etal}{\textit{et al.}}
\DeclareMathOperator{\vol}{vol}
\DeclareMathOperator{\area}{area}
\DeclareMathOperator{\conv}{conv}
\DeclareMathOperator{\width}{wid}
\DeclareMathOperator{\ray}{ray}
\DeclareMathOperator{\dist}{dist}
\DeclareMathOperator{\base}{base}
\DeclareMathOperator{\interior}{int}
\DeclareMathOperator{\poly}{poly}
\newcommand{\dcap}[2]{\textup{dcap}_{#1}(#2)}
\newcommand{\pcap}[2]{\textup{cap}_{#1}(#2)}
\newcommand{\icone}[2]{\textup{icone}(#1,#2)}
\newcommand{\ocone}[2]{\textup{ocone}(#1,#2)}
\newtheorem*{theorem*}{Theorem}
\begin{document}


\title{Economical Convex Coverings and Applications\thanks{An earlier version of this paper appeared in the \textit{Proceedings of the 2023 Annual ACM-SIAM Symposium on Discrete Algorithms} (SODA), pp. 1834--1861, 2023.}}

 \author{%
 	Sunil Arya\thanks{Research supported by the Research Grants Council of Hong Kong, China under project numbers 16213219 and 16214721. The work of David Mount was supported by NSF grant CCF--1618866. The work of Guilherme da Fonseca was supported by the French ANR PRC grant ADDS (ANR-19-CE48-0005).}\\
 		Department of Computer Science and Engineering \\
 		The Hong Kong University of Science and Technology, Hong Kong\\
 		arya@cse.ust.hk 
 		\and
 	Guilherme D. da Fonseca\footnotemark[1]\\
 	    Aix-Marseille Universit\'{e} and LIS Lab, France\\
 		guilherme.fonseca@lis-lab.fr
 		\and
 	David M. Mount\footnotemark[1]\\
 		Department of Computer Science and 
 		Institute for Advanced Computer Studies \\
 		University of Maryland, College Park, Maryland \\
 		mount@umd.edu 
 }

\date{}

\maketitle

\vspace{-5pt} 

\begin{abstract}
Coverings of convex bodies have emerged as a central component in the design of efficient solutions to approximation problems involving convex bodies. Intuitively, given a convex body $K$ and $\eps > 0$, a \emph{covering} is a collection of convex bodies whose union covers $K$ such that a constant factor expansion of each body lies within an $\eps$ expansion of $K$. Coverings have been employed in many applications, such as approximations for diameter, width, and $\eps$-kernels of point sets, approximate nearest neighbor searching, polytope approximations with low combinatorial complexity, and approximations to the Closest Vector Problem (CVP).

It is known how to construct coverings of size $n^{O(n)} / \eps^{(n-1)/2}$ for general convex bodies in $\RE^n$. In special cases, such as when the convex body is the $\ell_p$ unit ball, this bound has been improved to $2^{O(n)} / \eps^{(n-1)/2}$. This raises the question of whether such a bound generally holds. In this paper we answer the question in the affirmative. 

We demonstrate the power and versatility of our coverings by applying them to the problem of approximating a convex body by a polytope, where the error is measured through the Banach-Mazur metric. Given a well-centered convex body $K$ and an approximation parameter $\eps > 0$, we show that there exists a polytope $P$ consisting of $2^{O(n)} / \eps^{(n-1)/2}$ vertices (facets) such that $K \subset P \subset K(1+\eps)$. This bound is optimal in the worst case up to factors of $2^{O(n)}$. (This bound has been established recently using different techniques, but our approach is arguably simpler and more elegant.) As an additional consequence, we obtain the fastest $(1+\eps)$-approximate CVP algorithm that works in any norm, with a running time of $2^{O(n)} / \eps^{(n-1)/2}$ up to polynomial factors in the input size, and we obtain the fastest $(1+\eps)$-approximation algorithm for integer programming. We also present a framework for constructing coverings of optimal size for any convex body (up to factors of $2^{O(n)}$). 
\end{abstract}

\noindent\textbf{Keywords:} Approximation algorithms, high dimensional geometry, convex coverings, Banach-Mazur metric, lattice algorithms, closest vector problem, Macbeath regions

\section{Introduction} \label{s:intro}

Convex bodies are of fundamental importance in mathematics and computer science, and given the high complexity of exact representations, concise approximate representations are essential to many applications. There are a number of ways to define the distance between two convex bodies (see, e.g., \cite{Bor00}), and each gives rise to a different notion of approximation. While Hausdorff distance is commonly studied, it is not sensitive to the shape of the convex body. In this paper we will consider a common linear-invariant distance, called the Banach-Mazur distance. 

Given two convex bodies $X$ and $Y$ in real $n$-dimensional space, $\RE^n$, both of which contain the origin in their interiors, their \emph{Banach-Mazur distance}, denoted $\dist_{\text{BM}}(X,Y)$, is defined to be the minimum value of $\ln \lambda$ such that there exists a linear transformation $T$ such that $T X \subseteq Y \subseteq \lambda \cdot T X$. Given $\delta > 0$, we say that $Y$ is an \emph{Banach-Mazur $\delta$-approximation} of $X$ if $\dist_{\text{BM}}(X,Y) \leq \delta$. $T$ will be the identity transformation in our constructions, and thus, given a convex body $K$ in $\RE^n$ and $\eps > 0$, we seek a convex polytope $P$ such that $K \subseteq P \subseteq (1+\eps) K$. This implies that $\dist_{\text BM}(K,P) \leq \ln (1+\eps)$, which is approximately $\eps$ for small $\eps$. The scaling is taking place about the origin, and it is standard practice to assume that $K$ is well-centered in the sense that the origin lies within $K$ and is not too close to $K$'s boundary. (See Section~\ref{s:centrality} for the formal definition.) Unlike Hausdorff, the Banach-Mazur measure has the desirable property of being sensitive to $K$'s shape, being more accurate where $K$ is narrower and less accurate where $K$ is wider.

The principal question is, given $n$ and $\eps > 0$, what is the minimum number of vertices (or facets) needed to $\eps$-approximate any convex body $K$ in $\RE^n$ by a polytope in the above sense. This problem has been well studied. Existing bounds hold under the assumption that $K$ is well-centered. We say that a bound is \emph{nonuniform} if it holds for all $\eps \leq \eps_0$, where $\eps_0$ depends on $K$. Typical nonuniform bounds assume that $K$ is smooth, and the value of $\eps_0$ depends on $K$'s smoothness. Our focus will be on uniform bounds, where $\eps_0$ does not depend on $K$.

Dudley~\cite{Dud74} and Bronshtein and Ivanov~\cite{BrI76} provided uniform bounds in the Hausdorff context, but their results can be recast under Banach-Mazur, where they imply the existence of an approximating polytope with $n^{O(n)} / \eps^{(n-1)/2}$ vertices (facets). For smooth convex bodies, B{\"{o}}r{\"{o}}czky \cite{Bor00,Gru93b} established a nonuniform bound of $2^{O(n)} / \eps^{(n-1)/2}$. Barvinok~\cite{Bar14} improved the bound in the uniform setting for symmetric convex bodies. Ignoring a factor that is polylogarithmic in $1/\eps$, his bound is $2^{O(n)} / \eps^{n/2}$. Finally, Nasz{\' o}di, Nazarov, and Ryabogin obtained a worst-case optimal approximation of size $2^{O(n)} / \eps^{(n-1)/2}$~\cite{NNR20}. Their bound is uniform and holds for general convex bodies.

The main result of this paper is an alternative asymptotically optimal construction of an $\eps$-approximation of a convex body $K$ in $\RE^n$ in the Banach-Mazur setting. Our construction is superior to that of \cite{NNR20} in two ways. First, while the construction presented in \cite{NNR20} is very clever, it involves the combination of a number of technical elements (transforming the body to standard position, rounding it, computing a Bronshteın-Ivanov net, and filtering to reduce the sample size). In contrast, ours is quite simple. We employ a greedy process that samples points from $K$'s interior, and the final approximation is just the convex hull of these points. Second, our construction is more powerful in that it provides an additional covering structure for $K$. Each sample point is associated with a centrally symmetric convex body, and together these bodies form a cover of $K$ such that their union lies within the expansion $(1+\eps) K$. As a direct consequence of this additional structure, we obtain the fastest approximation algorithm to date for the closest vector problem (CVP) that operates in any norm.

\subsection{Our Results} \label{s:results}

Throughout, we assume that $K$ is a full-dimensional convex body in $\RE^n$, which is well-centered about the origin. There are a number of notions of centrality that suffice for our purposes (see Section~\ref{s:centrality} for formal definitions). Our first result involves the existence of concise coverings. Given a convex body $K$ that contains the origin in its interior and reals $c \geq 1$ and $\eps > 0$, a \emph{$(c,\eps)$-covering} of $K$ is a collection $\QQ$ of bodies whose union covers $K$ such that a factor-$c$ expansion of each $Q \in \QQ$ about its centroid lies within $(1+\eps) K$ (see Figure~\ref{f:cover-basic}). Coverings have emerged as an important tool in convex approximation. They have been applied to several problems in the field of computational geometry, including combinatorial complexity~\cite{AAFM22, AFM17c, AFM12b}, approximate nearest neighbor searching~\cite{AFM17a}, and computing the diameter and $\eps$-kernels~\cite{AFM17b}. 

\begin{figure}[htbp]
\centering
\includegraphics[scale=.8,page=2]{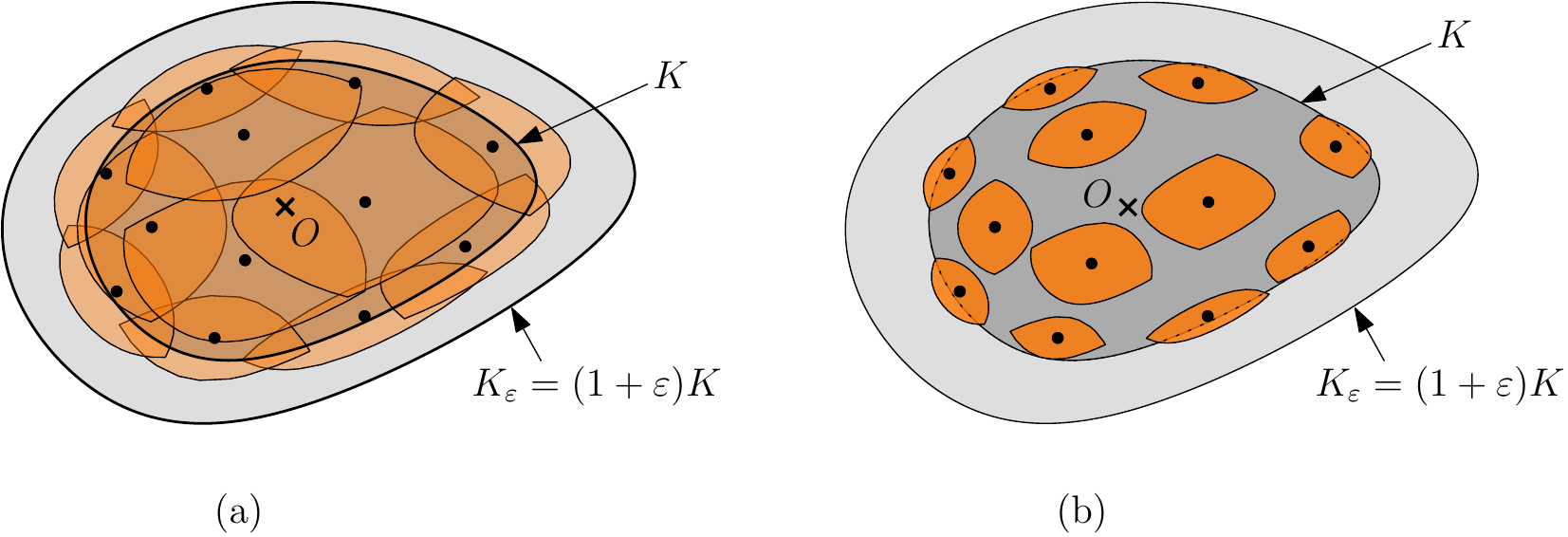}
\caption{\label{f:cover-basic} A $(2,\eps)$-covering.}
\end{figure}

Given a convex body in $\RE^n$, constant $c \geq 1$ and parameter $\eps > 0$, what is the minimum size of a $(c,\eps)$-covering as a function of $n$ and $\eps$? Abdelkader and Mount considered the problem in spaces of constant dimension~\cite{AbM18}. They did not analyze their bounds for the high-dimensional case, but based on results from \cite{AFM17a}, it can be shown that their results yield an upper bound of $n^{O(n)} / \eps^{(n-1)/2}$ in $\RE^n$. A number of special cases have been explored in the high dimensional case. Nasz{\' o}di and Venzin demonstrated the existence of $(2,\eps)$-coverings of size $2^{O(n)} / \eps^{n/2}$ when $K$ is an $\ell_p$ ball for any fixed $p \geq 2$~\cite{NaV22}. For the $\ell_{\infty}$ ball, Eisenbrand, H{\"a}hnle, and Niemeier showed the existence of $(2,\eps)$-coverings of size $2^{O(n)} / \log^n (1/\eps)$, consisting of axis-parallel rectangles~\cite{EHN11}. They also presented a nearly matching lower bound of $2^{-O(n)} / \log^n(1/\eps)$, even when the covering consisted of parallelepipeds. 

In this paper we establish the following bound on the size of $(c,\eps)$-coverings, which holds for any well-centered convex body in $\RE^n$.

\begin{theorem} \label{thm:cover-worst}
Let $0 < \eps \leq 1$ be a real parameter and $c \geq 2$ be a constant. Let $K \subseteq \RE^n$ be a well-centered convex body. Then there is a $(c,\eps)$-covering for $K$ consisting of at most $2^{O(n)} / \eps^{(n-1)/2}$ centrally symmetric convex bodies.
\end{theorem}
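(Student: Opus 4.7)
My plan is to prove the theorem by a greedy construction based on scaled Macbeath regions. After an affine normalization, I may assume $B \subseteq K \subseteq \rho B$ for some $\rho = \rho(n)$ depending only on $K$'s centrality; since both the $(c,\eps)$-covering property and central symmetry are affine-invariant, this step does not affect the count. For each $x \in K$ define
\[
  D_x \;=\; (K - x) \cap (x - K),
\]
which is a convex body symmetric about $0$, and let $M^\lambda(x) = x + \lambda D_x$ denote the $\lambda$-\emph{Macbeath region} at $x$, which is centrally symmetric about $x$.

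The construction itself is a simple greedy loop. Initialize $\QQ = \emptyset$; while there is an uncovered point $x \in K \setminus \bigcup \QQ$, let $\lambda(x)$ be the largest $\lambda \geq 1/c$ satisfying $c \cdot M^{\lambda}(x) \subseteq (1+\eps)K$, and add $Q_x := M^{\lambda(x)}(x)$ to $\QQ$. Scaling by $c$ about the center $x$ gives $c \cdot M^\lambda(x) = M^{c\lambda}(x)$, so $\lambda = 1/c$ is always admissible: $M^1(x) = x + D_x \subseteq K \subseteq (1+\eps)K$. Hence the loop is well-defined and produces a $(c,\eps)$-covering of $K$ by centrally symmetric bodies.

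To bound $|\QQ|$, I would observe that each newly selected seed $x_j$ lies outside every earlier $Q_{x_i}$, so by the standard Macbeath half-scaling property the half-regions $M^{\lambda(x_i)/2}(x_i)$ are pairwise disjoint. Stratify the seeds by boundary depth $t_i = \dist(x_i, \bd K)$. The geometry of $D_x$ at depth $\sim t$ is the key: $D_x$ contains a ball of radius $\Omega(t)$, but is tangentially elongated by $\Theta(\sqrt{t})$ along the nearest boundary face, which is the source of the characteristic square-root scaling in convex approximation. A volume-ratio argument against the classical cap-volume estimate shows that at most $2^{O(n)} / t^{(n-1)/2}$ pairwise disjoint Macbeath regions can live at depth $\sim t$. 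Summing over dyadic scales $t = 2^{-k}$ for $0 \leq k \leq \lceil\log_2(1/\eps)\rceil$ yields a geometric series dominated by its largest term, $2^{O(n)} / \eps^{(n-1)/2}$. For seeds of depth $t < \eps$, the $(1+\eps)$-slack permits $\lambda(x)$ to exceed $1/c$ substantially, enlarging $Q_x$ to the effective size attained at depth $\sim \eps$ and absorbing these boundary-layer seeds into the same bound.

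The main technical obstacle is establishing the uniform packing estimate $2^{O(n)} / t^{(n-1)/2}$ at each depth scale. For smooth or symmetric $K$ this is a well-known consequence of Bronshtein--Ivanov-type cap-covering results, but here $K$ is only assumed to be well-centered, so I would derive the estimate via Macbeath's lemma applied to the rounded body, using a volume comparison $\vol(M^{1/2}(x)) \geq 2^{-O(n)} \vol(K \cap H)$ for an appropriate cap $H$ of height $\Theta(t)$. The most delicate part is the boundary layer $t < \eps$, where $Q_x$ may protrude into $(1+\eps)K \setminus K$; the argument must show that the $(1+\eps)$-slack exactly compensates for the shrinking of $D_x$ near $\bd K$, keeping the total count at $2^{O(n)}/\eps^{(n-1)/2}$.
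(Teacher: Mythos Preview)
Your greedy Macbeath construction and dyadic stratification by boundary depth match the paper's overall architecture, but the core packing step has a genuine gap. The assertion that ``$D_x$ \ldots\ is tangentially elongated by $\Theta(\sqrt{t})$ along the nearest boundary face'' and hence that a volume-ratio argument yields at most $2^{O(n)}/t^{(n-1)/2}$ disjoint Macbeath regions at depth $\sim t$ is exactly what fails for general convex bodies. The $\sqrt{t}$ elongation is a smooth-body phenomenon; near a polytope vertex or a highly curved region, a Macbeath region at depth $t$ can have relative volume far below $t^{(n+1)/2}$ (think of a sharp simplicial cone, where the Macbeath region scales like $t^n$). So the proposed comparison $\vol(M^{1/2}(x)) \geq 2^{-O(n)}\vol(K\cap H)$ does not by itself give the per-layer bound you need: caps of width $t$ themselves have wildly varying volumes, and nothing prevents exponentially-in-$n$ many \emph{small} caps from packing into the depth-$t$ shell. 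The paper explicitly identifies this obstacle (Section~1.2) and notes that a direct volume/packing argument cannot close it. Your appeal to Bronshtein--Ivanov does not help either, since that route yields $n^{O(n)}$ rather than $2^{O(n)}$.

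The paper's missing ingredient is a duality step: it proves a Mahler-type inequality relating the volume of a cap (or Macbeath region) in $K$ to that of a corresponding cap in the polar $K^*$, namely $\vol_K(C)\cdot\vol_{K^*}(D)\geq 2^{-O(n)}\eps^{n+1}$ for suitably paired $\eps$-width caps (Lemmas~3.2--3.3). This lets one split each layer into ``large'' Macbeath regions (handled by your volume packing) and ``small'' ones, the latter being charged to large Macbeath regions in $K^*$, which are again countable by volume. Getting the constant down from $n^{O(n)}$ to $2^{O(n)}$ in this reciprocal bound requires a further idea involving the difference body (Lemma~3.1). Without this primal--polar mechanism, your argument will not produce the stated bound for arbitrary well-centered $K$.
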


It is not difficult to prove a lower bound of $2^{-O(n)} / \eps^{(n-1)/2}$ on the size of any $(2,\eps)$-covering for Euclidean balls (see, e.g., Nasz{\'o}di and Venzin \cite{NaV22}). Therefore, the above bound is optimal with respect to $\eps$-dependencies. In Section~\ref{s:instance-opt} (Theorem~\ref{thm:cover-inst}), we prove that for any constant $c \geq 2$, our construction is in fact instance optimal to within a factor of $2^{O(n)}$. This means that for any well-centered convex body $K$, our covering exceeds the size of any $(c,\eps)$-covering for $K$ by such a factor. In Section~\ref{s:apps-cvp}, we present a randomized algorithm that constructs a slightly larger covering (by a factor of $\log(1/\eps)$). Following standard convention, our constructions assume that access to $K$ is provided by a weak membership oracle (defined in Section~\ref{s:apps}). 

We present a number of applications of this result. First, in Section~\ref{s:approx-BM} we show that the convex hull of the center points of the covering elements yields an approximation in the Banach-Mazur metric.

\begin{theorem} \label{thm:approx-BM}
Given a well-centered convex body $K$ and an approximation parameter $\eps > 0$, there exists a polytope $P$ consisting of $2^{O(n)} / \eps^{(n-1)/2}$ vertices (facets) such that $K \subset P \subset K(1+\eps)$.
\end{theorem}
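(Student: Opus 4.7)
My plan is to deduce Theorem~\ref{thm:approx-BM} directly from the covering guarantee of Theorem~\ref{thm:cover-worst}. Apply Theorem~\ref{thm:cover-worst} with expansion constant $c = 2$ and parameter $\eps' = \Theta(\eps)$ to obtain a covering $\QQ = \{Q_1, \ldots, Q_m\}$ of $K$ with $m \leq 2^{O(n)}/\eps^{(n-1)/2}$, where each $Q_i$ is centrally symmetric about a centroid $p_i$ satisfying $p_i + 2(Q_i - p_i) \subseteq (1+\eps')K$. Form the candidate polytope $P_0 := \conv(p_1,\ldots,p_m)$. Since $p_i \in (1+\eps')K$, we automatically have $P_0 \subseteq (1+\eps')K$. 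The heart of the proof is the complementary inclusion $(1 - c_1\eps)K \subseteq P_0$ for a suitable constant $c_1$; granted this, the rescaled polytope $P := (1-c_1\eps)^{-1}P_0$ has the same $m$ vertices and satisfies $K \subseteq P \subseteq (1-c_1\eps)^{-1}(1+\eps')K \subseteq (1+\eps)K$ after calibrating $\eps'$ and $c_1$ (valid for $\eps \leq 1$).

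I would prove the inclusion $(1 - c_1\eps)K \subseteq P_0$ by a support-function contradiction. Suppose $y = (1 - c_1\eps)x \notin P_0$ for some $x \in \partial K$. Separation yields a unit direction $w$ with $\max_i \ang{w, p_i} < \ang{w, y} \leq (1-c_1\eps)M$, where $M := \max_{z \in K}\ang{w, z}$. Let $x^\star \in K$ realize $M$; by the covering property, $x^\star \in Q_j$ for some $j$, and the expansion property $p_j + 2(x^\star - p_j) \in (1+\eps')K$ yields $\ang{w, -p_j + 2x^\star} \leq (1+\eps')M$, which rearranges to $\ang{w, p_j} \geq (1-\eps')M$. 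Choosing $c_1$ marginally larger than the implied constant in $\eps' = \Theta(\eps)$ contradicts $\max_i \ang{w, p_i} < (1-c_1\eps)M$, establishing the claim.

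For the facet version I invoke polar duality. The polar $K^\circ$ is well-centered whenever $K$ is (cf.\ Section~\ref{s:centrality}), so applying the vertex version to $K^\circ$ yields a polytope $P^\star$ with $m$ vertices satisfying $K^\circ \subseteq P^\star \subseteq (1+\eps)K^\circ$. Polarizing reverses inclusions and exchanges vertices with facets: $(1+\eps)^{-1}K \subseteq (P^\star)^\circ \subseteq K$ with $(P^\star)^\circ$ a polytope of $m$ facets; scaling by $1+\eps$ (with a final recalibration of $\eps$) yields the required outer polytope with $m = 2^{O(n)}/\eps^{(n-1)/2}$ facets sandwiching $K$ inside $(1+\eps)K$.

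The only real obstacle is the support-function estimate above. Its crux is that the expansion factor $c = 2$ in Theorem~\ref{thm:cover-worst} is precisely what lets us read off $\ang{w, p_j} \geq (1-\eps')M$ from the expansion inclusion by inspecting the $w$-coordinate of $-p_j + 2x^\star$; a weaker constant $c$ would inflate the loss to something larger than linear in $\eps$. The remaining items (constant bookkeeping to meet $(1+\eps)$ exactly, and the preservation of well-centeredness under polarity) are routine and follow from the framework already developed earlier in the paper.
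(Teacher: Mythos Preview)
Your argument is correct and follows essentially the same route as the paper. The paper (Lemma~\ref{lem:approx-BM}) covers the slightly expanded body $K(1+\eps/c)$ so that the convex hull of centers directly sandwiches $K$ inside $K(1+\eps)$, whereas you cover $K$ itself, obtain $(1-c_1\eps)K \subseteq P_0 \subseteq (1+\eps')K$, and rescale; the paper phrases the key step in terms of caps and supporting hyperplanes while you phrase it via support functions and an extreme point, but the underlying mechanism---reading off $\ang{w,p_j} \geq (1-\eps')M$ from the inclusion $2x^\star - p_j \in (1+\eps')K$---is identical. The facet version via polarity is handled the same way in both.
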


There are also applications to lattice problems. In the \emph{Closest Vector Problem} (CVP), an $n$-dimensional lattice $L$ in $\RE^n$ is given (that is, the set of integer linear combinations of $n$ basis vectors) together with a target vector $t \in \RE^n$. The problem is to return a vector in $L$ closest to $t$ under some given norm. This problem has applications to cryptography~\cite{Odl90, JoS98, NgS01}, integer programming~\cite{Len83, DPV11, DaK16}, and factoring polynomials over the rationals~\cite{LLL82}, among several other problems. The problem is NP-hard for any $\ell_p$ norm~\cite{vEB81} and cannot be solved exactly in $2^{(1-\gamma)n}$ time for constant $\gamma > 0$, under certain conditional hardness assumptions~\cite{BGS17}. 

This problem has a considerable history. The first solution proposed to the CVP under the $\ell_\infty$ norm takes $2^{O(n^3)}$ time through integer linear programming~\cite{Len83}, which was later improved to $n^{O(n)}$~\cite{Kan87}. For the $\ell_2$ norm, Micciancio and Voulgaris presented an algorithm that runs in single exponential $2^{O(n)}$ time~\cite{MiV13}, and currently the fastest algorithm for exact Euclidean CVP is by Aggarwal, Dadush, and Stephens-Davidowitz~\cite{ADS15} and runs in $2^{n + o(n)}$ time. However, solving the CVP problem exactly in single exponential time for norms other than Euclidean remains an open problem. (For additional information, see~\cite{HPS11}.) Dadush, Peikert, and Vempala~\cite{DPV11} considered CVP and the related Shortest Vector Problem (SVP) in the context of (possibly asymmetric) norms defined by convex bodies. Their work demonstrated a rich connection between lattice algorithms and convex geometry.

In the approximate version of the CVP problem, denoted \emph{$(1+\eps)$-CVP}, we are also given a parameter $\eps > 0$, and the goal is to find a lattice vector whose distance to $t$ is at most $1+\eps$ times the optimum. CVP is NP-hard to approximate~\cite{Aro94, DKRS03} and conditional hardness results show that for $p \geq 1$ CVP in $\ell_p$ is hard to approximate in $2^{(1-\gamma)n}$ time for constant $\gamma > 0$, except when $p$ is even~\cite{ABGS21}. 

The randomized sieving approach of Ajtai, Kumar, and Sivakumar~\cite{AKS01} was extended to approximate CVP for $\ell_p$ norms by Bl{\"o}mer and Naewe~\cite{BN09} and to the general case of well-centered norms by Dadush~\cite{Dad14}. These algorithms run in time and space $2^{O(n)} / \eps^{2n}$. Building on the Voronoi cell approach~\cite{MiV13, DPV11}, Dadush and Kun~\cite{DaK16} presented deterministic algorithms that improved the running time to $2^{O(n)} / \eps^{n}$ and space to $\widetilde{O}(2^n)$. 

Eisenbrand, H{\"a}hnle, and Niemeier~\cite{EHN11} and Nasz{\'o}di and Venzin~\cite{NaV22} have explored the use of $(c,\eps)$-coverings of the unit ball in the norm to obtain efficient algorithms for approximate CVP by ``boosting'' a weak constant-factor approximation to a strong $(1+\eps)$-approximation. By exploiting the unique properties of hypercubes, Eisenbrand \etal~\cite{EHN11} improved the running time for the $\ell_\infty$ norm to $2^{O(n)} \log^n(1/\eps)$ time. Nasz{\'o}di and Venzin~\cite{NaV22} extended this approach to $\ell_p$ norms. The running time of their algorithm is $2^{O(n)} / \eps^{n/2}$ for $p \ge 2$ and $2^{O(n)} / \eps^{n/p}$ for $1 \le p \le 2$. The constants in the $2^{O(n)}$ term in the running time depend on $p$.

By applying our covering within existing algorithms, we obtain the fastest algorithm to date for $(1+\eps)$-approximate CVP that operates in any norm. The algorithm is randomized and runs in single exponential time, $2^{O(n)} / \eps^{(n-1)/2}$. (Following standard practice, we ignore factors that are polynomial in the input size.) The result is stated formally below.

\begin{theorem} \label{thm:cvp}
There is a randomized algorithm that, given any well-centered convex body $K$ and lattice $L$, solves the $(1+\eps)$-CVP problem in the norm defined by $K$, in $2^{O(n)} / \eps^{(n-1)/2}$-time and $O(2^{n})$-space, with probability at least $1 - 2^{-n}$.
\end{theorem}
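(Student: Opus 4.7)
The plan is to plug Theorem~\ref{thm:cover-worst}, in the randomized variant promised in Section~\ref{s:apps-cvp}, into the existing two-stage boosting framework for approximate CVP developed by Eisenbrand, H{\"a}hnle, and Niemeier~\cite{EHN11} and generalized to arbitrary norms by Nasz{\'o}di and Venzin~\cite{NaV22}. In the first stage, I would invoke Dadush's constant-factor approximate CVP algorithm~\cite{Dad14}, which runs in $2^{O(n)}$ time and space for any well-centered norm, to obtain a distance $d$ with $d^* \leq d \leq \alpha d^*$, where $d^*$ is the true closest-vector distance and $\alpha = O(1)$. This localizes the translate $v^* - t$ of the optimal lattice vector $v^*$ to the scaled ball $d K$.

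In the second stage, I would apply the randomized construction to obtain a $(c,\eps)$-covering of $K$, where $c$ is a suitably large constant absorbing the factor $\alpha$, producing $N = 2^{O(n)}/\eps^{(n-1)/2}$ centrally symmetric pieces $Q_1,\dots,Q_N$ (ignoring a $\log(1/\eps)$ factor, which is absorbed into the polynomial input-size slack). Scaling by $d$, the collection $\{dQ_i\}$ covers $dK$, and around each symmetry center $d z_i$ we have $c \cdot d(Q_i - z_i) + d z_i \subseteq (1+\eps) d K$. For each piece, I would invoke Dadush's constant-factor CVP algorithm a second time, now using the centered symmetric body $d(Q_i - z_i)$ as the unit ball and the shifted point $t + d z_i$ as the target; this returns a lattice vector $v$ with $v - t \in c \cdot d(Q_i - z_i) + d z_i$. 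For the piece containing $v^* - t$, which is guaranteed to exist by the covering property, this yields $\|v - t\|_K \le (1+\eps) d^*$ after a standard geometric sweep over the scale $d$ to close the remaining gap from $\alpha d^*$ down to $d^*$. Returning the best candidate over all $N$ invocations gives the solution.

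The total running time is $N \cdot 2^{O(n)} = 2^{O(n)}/\eps^{(n-1)/2}$ (ignoring polynomial factors in the input size), and the space is dominated by a single invocation of the inner algorithm and hence is $O(2^n)$. The failure probability inherits from the randomized covering construction of Section~\ref{s:apps-cvp} and is bounded by $2^{-n}$, possibly after standard amplification. The main obstacle in the proof is the bookkeeping required to align the centering and scaling so that the inner constant-factor CVP subroutine, when run with a centrally symmetric piece as its unit ball, yields a genuine $(1+\eps)$-approximation for the original $K$-norm problem; the covering inclusion $c(Q_i - z_i) + z_i \subseteq (1+\eps) K$ is precisely what makes the accounting work, so the technical content beyond~\cite{EHN11, NaV22} is confined to verifying this plug-in compatibility for a fully general well-centered norm.
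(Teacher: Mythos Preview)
Your proposal is correct and follows essentially the same route as the paper: plug the randomized $(c,\eps)$-covering enumerator into the Eisenbrand--H{\"a}hnle--Niemeier / Nasz{\'o}di--Venzin boosting framework (the paper's Lemma~\ref{lem:boost}), reducing $(1+\eps)$-CVP to one constant-factor CVP call per covering element, with the sweep over scales handled by a binary search to gap-CVP. One small discrepancy worth fixing: for the inner solver the paper invokes the \emph{deterministic} Dadush--Kun algorithm~\cite{DaK16}, both to keep the only randomness in the covering construction (so the failure analysis is clean) and because it attains the stated $O(2^n)$ space bound; your choice of Dadush~\cite{Dad14} is randomized and only gives $2^{O(n)}$ space, which does not match the theorem's $O(2^n)$ claim without the substitution.
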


Finally, through a reduction from approximate CVP to approximate integer programming (IP) due to Dadush~\cite{Dad14}, we present a randomized algorithm for approximate IP (see Theorem~\ref{thm:approx-ip} in Section~\ref{s:apps-ip}).

\subsection{Techniques} \label{s:techniques}

As mentioned above, coverings are a powerful tool in obtaining efficient solutions to approximation problems involving convex bodies. The fundamental problem tackled here involves the sizes of $(c,\eps)$-coverings for general convex bodies in $\RE^n$ and especially the dependencies on $\eps$. Our approach employs a classical concept from convex geometry, called a \emph{Macbeath region}~\cite{Mac52}. Given a convex body $K$ and a point $x \in K$, the Macbeath region $M_K(x)$ is the largest centrally symmetric body centered at $x$ and contained in $K$ (see Figure~\ref{f:macbeath}(a)). Macbeath regions have found numerous uses in the theory of convex sets and the geometry of numbers (see B\'{a}r\'{a}ny~\cite{Bar00} for an excellent survey). They have also been applied to several problems in the field of computational geometry, including lower bounds~\cite{BCP93, AMM09b, AMX12}, combinatorial complexity~\cite{AFM12b, MuR14, AFM17c, DGJ19, AAFM22}, approximate nearest neighbor searching~\cite{AFM17a}, and computing the diameter and $\eps$-kernels~\cite{AFM17b}.

\begin{figure}[htbp]
\centering
\includegraphics[scale=.8]{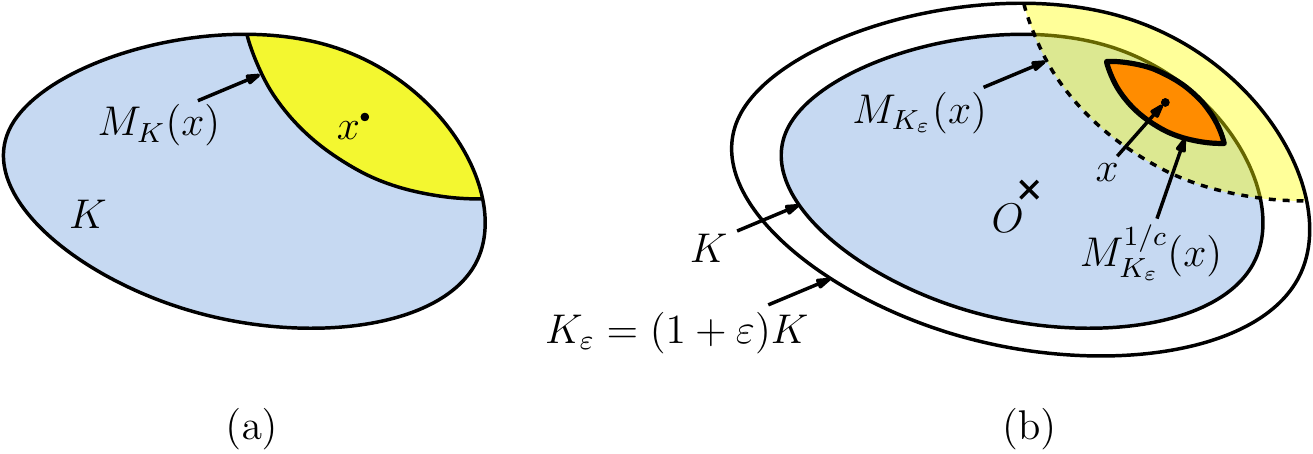}
\caption{\label{f:macbeath} (a) A Macbeath region and (b) a covering element derived from a shrunken Macbeath region.}
\end{figure}

In the context of $(c,\eps)$-coverings, the obvious (and indeed maximal) choice for a covering element centered at any point $x$ is to take the Macbeath region centered at $x$ with respect to the expanded body $K_{\eps} = (1+\eps) K$, and then scale it by a factor of $\frac{1}{c}$ about $x$ (see Figure~\ref{f:macbeath}(b)). The construction and analysis of such Macbeath-based coverings is among the principal contributions of this paper. In their work on the economical cap cover, B\'{a}r\'{a}ny and Larman observed how Macbeath regions serve as an efficient agent for covering the region near the boundary of a convex body~\cite{BaL88}. While Macbeath regions can be quite elongated, especially near the body's boundary, they behave in many respects like fixed-radius balls in a metric space. (Vernicos and Walsh proved that shrunken Macbeath regions are similar in shape to fixed-radius balls in the Hilbert geometry induced by $K$~\cite{AbM18, VeW16}.) This leads to a very simple covering construction based on computing a maximal set of points such that the suitably shrunken Macbeath regions centered at these points are pairwise disjoint. The covering is then constructed by uniformly increasing the scale factor so the resulting Macbeath regions cover $K$.

Two challenges arise in implementing and analyzing this construction. The first is that of how to compute these Macbeath regions efficiently. The second is proving that this simple construction yields the desired bound on the size of the covering. A natural approach to the latter is a packing argument based on volume considerations. Unfortunately, this fails because Macbeath regions may have very small volume. Our approach for dealing with small Macbeath regions is to exploit a Mahler-like reciprocal property in the volumes of the Macbeath regions in the original body $K$ and its polar, $K^*$ (see Section~\ref{s:centrality} for definitions). In the low-dimensional setting, the analysis exploits a correspondence between caps in $K$ and $K^*$, such that the volumes of these caps have a reciprocal relationship (see, e.g., \cite{AAFM22}). As a consequence, for each Macbeath region in $K$ of small volume, there is a Macbeath region in $K^*$ of large volume. Thus, by randomly sampling in both $K$ and $K^*$, it is possible to hit all the Macbeath regions. 

Generalizing this to the high-dimensional setting involves overcoming a number of technical difficulties. A straightforward generalization of the methods of \cite{AAFM22} yields a covering of size $n^{O(n)} / \eps^{(n-1)/2}$. A critical step in the analysis involves relating the volumes of two $(n-1)$-dimensional convex bodies that arise by projecting caps and dual caps. In earlier works, where the dimension was assumed to be a constant, a crude bound sufficed. But in the high-dimensional setting, it is essential to avoid factors that depend on the dimension. A key insight of this paper is that it is possible to avoid these factors through the use of the difference body. (See Lemma~\ref{lem:sandwich-dualcaps} in Section~\ref{s:diff-body}.) Through the use of this more refined geometric analysis, we establish this Mahler-like relationship in Sections~\ref{s:mahler} (particularly Lemmas~\ref{lem:vol-product} and~\ref{lem:mahler-mac}). We apply this in Section~\ref{s:worst-opt} to obtain our bounds on the size of the covering. In Section~\ref{s:approx-BM} we show how this leads to an $\eps$-approximation in the Banach-Mazur measure. The sampling process is described in Section~\ref{s:apps} along with applications.

\section{Preliminaries} \label{s:prelim}

In this section, we introduce terminology and notation, which will be used throughout the paper. This section can be skipped on first reading (moving directly to Section~\ref{s:mahler}).

\subsection{Lengths and Measures} \label{s:length}

Given vectors $u, v \in \RE^n$, let $\ang{u,v}$ denote their dot product, and let $\|v\| = \sqrt{\ang{v,v}}$ denote $v$'s Euclidean length. Throughout, we will use the terms \emph{point} and \emph{vector} interchangeably. Given points $p,q \in \RE^n$, let $\|p q\| = \|p - q\|$ denote the Euclidean distance between them. Let $\vol(\cdot)$ and $\area(\cdot)$ denote the $n$-dimensional and $(n-1)$-dimensional Lebesgue measures, respectively.

Throughout, $K \subseteq \RE^n$ will denote a full-dimensional compact convex body with the origin $O$ in its interior. Let $\|x\|_K = \inf \{s \ge 0: x \in s K\}$ denote $K$'s associated Minkowski functional, or \emph{gauge function}. If $K$ is centrally symmetric, its gauge function defines a norm, but we will abuse notation and use the term ``norm'' even when $K$ is not centrally symmetric. Given $\eps > 0$, define $K_{\eps} = (1+\eps)K$ to be a uniform scaling of $K$ by $1+\eps$.

Given a convex body $K \subseteq \RE^n$, its \emph{difference body}, denoted  $\Delta(K)$, is defined to be the Minkowski sum $K \oplus -K$. The difference body is convex and centrally symmetric and satisfies the following property.

\begin{lemma}[Rogers and Shephard~\cite{RoS57}] \label{lem:vol-diffbody}
Given a convex body $K \subseteq \RE^n$, $\vol(\Delta(K)) \le 4^n \vol(K)$.
\end{lemma}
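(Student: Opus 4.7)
The plan is to prove the classical Rogers--Shephard inequality by the standard convolution-and-concavity argument, which naturally produces the sharper constant $\binom{2n}{n}$ and then bounds that by $4^n$. First I would introduce the auxiliary function $\phi : \RE^n \to [0,\infty)$ defined by $\phi(x) = \vol(K \cap (K + x))$, i.e., the self-convolution of the indicator of $K$. Two immediate properties are worth recording: (i) $\phi$ is supported exactly on $\Delta(K) = K \oplus -K$, since $K$ and $K+x$ meet iff $x \in K - K$; and (ii) by Fubini, $\int_{\RE^n} \phi(x)\, dx = \vol(K)^2$. Also $\phi(0) = \vol(K)$.

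Next I would establish that $\phi^{1/n}$ is concave on $\Delta(K)$. For $x,y \in \Delta(K)$ and $t \in [0,1]$, the Brunn--Minkowski inequality applied to the convex bodies $K \cap (K+x)$ and $K \cap (K+y)$ yields $\phi^{1/n}(tx + (1-t)y) \ge t\, \phi^{1/n}(x) + (1-t)\, \phi^{1/n}(y)$, after observing that $t(K \cap (K+x)) + (1-t)(K \cap (K+y))$ sits inside $K \cap (K + tx + (1-t)y)$. On the boundary $\partial \Delta(K)$ the intersection $K \cap (K+x)$ collapses to at most a lower-dimensional set, so $\phi$ vanishes there. Applying concavity along the segment from $0$ to a boundary point, every $x \in \Delta(K)$ with gauge $\alpha = \|x\|_{\Delta(K)} \in [0,1]$ satisfies
\[
\phi(x) \;\ge\; (1-\alpha)^n\, \vol(K).
\]

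Finally I would integrate this pointwise bound over $\Delta(K)$ using polar coordinates centered at the origin. Since $\Delta(K)$ is centrally symmetric, the standard radial identity $\int_{\Delta(K)} f(\|x\|_{\Delta(K)})\, dx = n\, \vol(\Delta(K)) \int_0^1 f(\alpha) \alpha^{n-1}\, d\alpha$ gives
\[
\int_{\Delta(K)} (1-\|x\|_{\Delta(K)})^n\, dx \;=\; n\, \vol(\Delta(K)) \int_0^1 (1-\alpha)^n \alpha^{n-1}\, d\alpha \;=\; \frac{\vol(\Delta(K))}{\binom{2n}{n}},
\]
the last equality using the Beta integral $B(n+1,n) = (n!(n-1)!)/(2n)!$. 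Combining with $\int \phi = \vol(K)^2$ yields $\vol(K)^2 \ge \vol(K)\, \vol(\Delta(K)) / \binom{2n}{n}$, so $\vol(\Delta(K)) \le \binom{2n}{n}\, \vol(K) \le 4^n\, \vol(K)$, which is the claim.

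The only real obstacle is verifying the Brunn--Minkowski step cleanly, i.e., the inclusion $t(K \cap (K+x)) + (1-t)(K \cap (K+y)) \subseteq K \cap (K + tx + (1-t)y)$; everything else is bookkeeping. The bound $\binom{2n}{n} \le 4^n$ is a direct consequence of the binomial theorem applied at $1$.
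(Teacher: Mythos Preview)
Your argument is correct and is precisely the classical Rogers--Shephard proof; the inclusion $t(K\cap(K+x)) + (1-t)(K\cap(K+y)) \subseteq K \cap (K+tx+(1-t)y)$ follows immediately from convexity of $K$, and the Beta-integral bookkeeping is right. The paper itself does not prove this lemma at all---it simply cites the original Rogers--Shephard paper---so there is no in-paper proof to compare against.
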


\subsection{Polarity and Centrality Properties} \label{s:centrality}

Given a bounded convex body $K \subseteq \RE^n$ that contains the origin $O$ in its interior, define its \emph{polar}, denoted $K^*$, to be the convex set
\[
	K^*
		~ = ~ \{ u \ST \ang{u,v} \le 1, \hbox{~for all $v \in K$} \}.
\]
The polar enjoys many useful properties (see, e.g., Eggleston~\cite{Egg58}). For example, it is well known that $K^*$ is bounded and $(K^*)^* = K$. Further, if $K_1$ and $K_2$ are two convex bodies both containing the origin such that $K_1 \subseteq K_2$, then $K_2^* \subseteq K_1^*$. 

Given a nonzero vector $v \in \RE^n$, we define its ``polar'' $v^*$ to be the hyperplane that is orthogonal to $v$ and at distance $1/\|v\|$ from the origin, on the same side of the origin as $v$. The polar of a hyperplane is defined as the inverse of this mapping. We may equivalently define $K^*$ as the intersection of the closed halfspaces that contain the origin, bounded by the hyperplanes $v^*$, for all $v \in K$. 

Given a convex body $K \subseteq \RE^n$, there are many ways to characterize the property that $K$ is centered about the origin~\cite{Gru63, Tot15}. In this section we explore a few relevant measures of centrality.

First, define $K$'s \emph{Mahler volume} to be the product $\vol(K) \cdot \vol(K^*)$. The Mahler volume is well studied (see, e.g.~\cite{San49,MeP90,Sch93}). It is invariant under linear transformations, and it depends on the location of the origin within $K$. In the following definitions, any fixed constant may be used in the $O(n)$ term.

\begin{description}
    \item[Santal{\'o} property:] The Mahler volume of $K$ is at most $2^{O(n)} \cdot \omega_n^2$, where $\omega_n$ denotes the volume of the $n$-dimensional unit Euclidean ball ($\omega_n = \pi^{n/2} / \Gamma\big(\half{n}+1\big)$).
    
    \item[Winternitz property:] For any hyperplane passing through the origin, the ratio of the volume of the portion of $K$ on each side of the hyperplane to the volume of $K$ is at least $2^{-O(n)}$. 
    
    \item[Kovner-Besicovitch property:] The ratio of the volume of $K \cap -K$ to the volume of $K$ is at least $2^{-O(n)}$. 
\end{description}

Following Dadush, Peikert, and Vempala~\cite{DPV11}, we say that $K$ is \emph{well-centered} if it satisfies the Kovner-Besicovitch property. Generally, $K$ is \emph{well-centered} about a point $x$ if $K-x$ is well-centered. For our purposes, however, any of the above can be used, as shown in the following lemma. 

\begin{lemma}
\label{lem:centroid}
The three centrality properties (Santal{\'o}, Winternitz, and Kovner-Besicovitch) are equivalent in the sense that a convex body $K \subseteq \RE^n$ that satisfies any one of them satisfies the other two subject to a change in the $2^{O(n)}$ factor. Further, if the origin coincides with $K$'s centroid, these properties are all satisfied.
\end{lemma}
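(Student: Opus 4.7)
The plan is to split the proof into two parts: first proving the pairwise equivalence of the three conditions, and then deriving the centroid assertion as an easy corollary. I would establish the equivalence through a short cycle: the clean polar-duality equivalence Kovner-Besicovitch $\Leftrightarrow$ Santal\'o, the trivial direction Kovner-Besicovitch $\Rightarrow$ Winternitz, and the sole technical direction Winternitz $\Rightarrow$ Kovner-Besicovitch via a centroid-proximity argument.

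For Santal\'o $\Leftrightarrow$ Kovner-Besicovitch, the key observation is the polar identity $(K \cap -K)^* = \conv(K^* \cup -K^*)$ together with the sandwich $\conv(K^* \cup -K^*) \subseteq \Delta(K^*) \subseteq 2\,\conv(K^* \cup -K^*)$, which holds because $0 \in K^*$. Combined with Rogers--Shephard (Lemma~\ref{lem:vol-diffbody}), these give $\vol((K \cap -K)^*) \leq 4^n \vol(K^*)$. Under the Santal\'o hypothesis, the Bourgain--Milman reverse Santal\'o inequality applied to the centrally symmetric body $K \cap -K$ then yields
\[
\vol(K \cap -K)
	~\geq~ \frac{2^{-O(n)} \omega_n^2}{\vol((K \cap -K)^*)}
	~\geq~ \frac{2^{-O(n)} \omega_n^2}{4^n \vol(K^*)}
	~\geq~ 2^{-O(n)} \vol(K),
\]
establishing Kovner-Besicovitch. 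For the reverse direction, the standard Blaschke--Santal\'o inequality applied to the centrally symmetric $K \cap -K$ combined with the inclusion $K^* \subseteq (K \cap -K)^*$ gives $\vol(K) \vol(K^*) \leq \vol(K)\, \omega_n^2 / \vol(K \cap -K) \leq 2^{O(n)} \omega_n^2$ under Kovner-Besicovitch. The implication Kovner-Besicovitch $\Rightarrow$ Winternitz is immediate: $K \cap -K$ is centrally symmetric, so any hyperplane through the origin bisects it, and each half is contained in the corresponding half of $K$, giving $\vol(K \cap H^{\pm}) \geq \half{1} \vol(K \cap -K) \geq 2^{-O(n)} \vol(K)$.

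The main obstacle is the remaining direction Winternitz $\Rightarrow$ Kovner-Besicovitch, since Winternitz controls only halfspace depth at the origin, while Kovner-Besicovitch is a volumetric statement about the symmetric part of $K$. The plan is to relate the ``symmetric part about the origin'' $K \cap -K$ to the ``symmetric part about the centroid'' $L_g = (K-g) \cap (g-K)$. Applying Milman--Pajor to the shifted body $K - g$ (whose centroid lies at the shifted origin) gives $\vol(L_g) \geq 2^{-n} \vol(K)$. The delicate step is to show that Winternitz forces the centroid $g$ to be close enough to the origin that $L_g$ and $K \cap -K$ have comparable volumes up to a $2^{O(n)}$ factor. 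For this, I would first derive $g \in (1 - 2^{-O(n)}) K$ from the centroid identity $\vol(K) \ang{g,v} = \int_K \ang{x,v}\, dx$, splitting the integral across $\ang{x,v} = 0$ and using the Winternitz lower bound on the opposite half; the symmetric argument applied to $-K$ (which inherits Winternitz) places $-g$ analogously deep inside $-K$. The centroid estimate then feeds into a Rogers--Shephard-style deformation bound comparing the symmetric parts $L_g$ and $K \cap -K$, yielding $\vol(K \cap -K) \geq 2^{-O(n)} \vol(K)$.

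Finally, the centroid assertion follows as a corollary: when the origin is the centroid of $K$, the Milman--Pajor inequality $\vol(K \cap -K) \geq 2^{-n} \vol(K)$ applies directly, giving Kovner-Besicovitch with an absolute constant; the remaining two properties then follow from the equivalences established above.
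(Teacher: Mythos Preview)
Your Santal\'o $\Leftrightarrow$ Kovner--Besicovitch argument is correct and in fact more explicit than the paper's: you prove both directions via $(K\cap -K)^* = \conv(K^*\cup -K^*)$, Rogers--Shephard, Bourgain--Milman, and Blaschke--Santal\'o, whereas the paper simply invokes the Milman--Pajor result (Lemma~\ref{lem:santalo-kb}) as a black box for Santal\'o $\Rightarrow$ KB. The step KB $\Rightarrow$ Winternitz is identical in both.

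The gap is in your Winternitz $\Rightarrow$ KB step. The centroid estimate $\ang{g,v} \le h_K(v)\,\vol(K^+)/\vol(K) \le (1-\delta)h_K(v)$ correctly gives $g \in (1-\delta)K$ with $\delta = 2^{-O(n)}$, but from this you only obtain $\delta K \subseteq K-g$ and hence $\delta(K\cap -K) \subseteq L_g$, which bounds $\vol(K\cap -K)$ from \emph{above} by $\delta^{-n}\vol(L_g)$---the wrong direction. For the needed inclusion $\beta L_g \subseteq K\cap -K$ you would require $K-g \subseteq CK$, equivalently $\ang{-g,v} \le (C-1)h_K(v)$; neither $g\in(1-\delta)K$ nor your ``symmetric argument applied to $-K$'' (which merely re-derives $g\in(1-\delta)K$, since the centroid of $-K$ is $-g$) supplies this. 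The cylinder $K=[-b,a]\times B_0$ with $a/(a+b)=\delta$ shows that $h_K(-v)/h_K(v)$ can be of order $1/\delta = 2^{O(n)}$, and the concavity of $\vol(M_K(x))^{1/n}$ along the segment $[0,g]$ loses a factor $(2\delta)^n = 2^{-O(n^2)}$, so no routine deformation bound closes the gap.

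The paper circumvents this by proving Winternitz $\Rightarrow$ Santal\'o instead: Winternitz with constant $\delta$ places the origin in the floating body $K_\delta$, and Meyer--Werner's containment $K_\delta \subseteq S(K,1/(4\delta(1-\delta)))$ (Lemma~\ref{lem:float-San49}) puts the origin in the Santal\'o region with parameter $2^{O(n)}$. Your own Santal\'o $\Rightarrow$ KB then closes the cycle. The centroid assertion is handled the same way in both proofs, via Milman--Pajor at the centroid.
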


Let us first introduce some notation. Given a hyperplane $h$, let $h^+$ and $h^-$ denote its two halfspaces. Given $0 < \delta < \frac{1}{2}$, let $h$ be a hyperplane that intersects $K$ such that $\vol(K \cap h^+) = \delta \cdot \vol(K)$. Define the \emph{$\delta$-floating body}, denoted $K_\delta$, to be the intersection of halfspaces $h^-$ for all such hyperplanes $h$. For $t > 0$, define the \emph{$t$-Santal{\'o} region} $S(K,t) \subseteq K$ to be the set of points $x \in K$ such that the Mahler volume of $K$ with respect to $x$ is at most $t \, \omega_n^2$, where $\omega_{n}$ denotes the volume of the $n$-dimensional unit Euclidean ball. Both the floating body and the Santal{\'o} region (when nonempty) are convex subsets of $K$, and Meyer and Werner showed that they satisfy the following property.

\begin{lemma}[Meyer and Werner~\cite{MeW98}]
\label{lem:float-San49}
For all $0 < \delta < \frac{1}{2}$, $K_{\delta} \subseteq S(K,t)$, where $t = 1/(4\delta(1-\delta))$.
\end{lemma}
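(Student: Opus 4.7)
The plan is to prove the Meyer--Werner containment by reducing the $n$-dimensional Mahler product through $x$ to a sharp one-dimensional inequality along each direction, with the floating-body hypothesis entering as a bound on cap fractions in that direction.

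First, I would translate $K$ so that $x = 0$; the hypothesis then says that for every unit vector $u$, each of the two half-volumes $V^{\pm}(u) := \vol(K \cap \{\,y : \pm\langle u, y\rangle \ge 0\,\})$ is at least $\delta \vol(K)$, hence if we write $V^+(u) = \alpha(u)\vol(K)$ then $\alpha(u)\bigl(1-\alpha(u)\bigr) \ge \delta(1-\delta)$. The goal becomes $\vol(K)\vol(K^*) \le \omega_n^2/(4\delta(1-\delta))$. Next, I would use the standard polar-coordinate formula and pair antipodal directions:
\[
  \vol(K^*) \;=\; \frac{1}{n}\int_{S^{n-1}} h_K(u)^{-n}\,du
  \;=\; \frac{1}{2n}\int_{S^{n-1}}\bigl(h_K(u)^{-n} + h_K(-u)^{-n}\bigr)\,du .
\]

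The heart of the argument is a directional inequality: for each $u$, relate $h_K(\pm u)$ to the cap volumes $V^{\pm}(u)$ and the central section $A(u) := \vol_{n-1}(K \cap u^{\perp})$. By Brunn's theorem, the cross-section area along $u$ has concave $(n-1)$-th root on $[-h_K(-u), h_K(u)]$. Two consequences are the inscribed-cone bound $V^{\pm}(u) \ge A(u)\, h_K(\pm u)/n$, giving $h_K(u)h_K(-u) \le n^2 V^+(u) V^-(u)/A(u)^2$, and the double-cone inclusion $\vol(K) \ge A(u)(h_K(u)+h_K(-u))/n$. Combined, these yield the sharp one-dimensional form of the Mahler--Santaló inequality along the chord through the origin in direction $u$, namely
\[
  \bigl(h_K(u)+h_K(-u)\bigr)^{2} \;\le\; \frac{h_K(u)\,h_K(-u)}{\alpha(u)(1-\alpha(u))} .
\]
This is tight in $n = 1$ and degenerates to equality when the chord's midpoint coincides with $0$.

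From the pointwise inequality above, I would derive $h_K(u)^{-n}+h_K(-u)^{-n}$ as an upper bound weighted by $A(u)^{n-1}$, and then integrate against the Blaschke slicing measure; one uses the identity $\vol(K) = \frac{1}{n}\int_{S^{n-1}} h_K(u)\,A(u)^{n-1}\,(\text{angular factor})\,du$ arising from foliating $K$ by hyperplanes through $0$, together with Hölder's inequality so the constant $|S^{n-1}|^2/n^2 = \omega_n^2$ emerges from the ratio. The floating-body hypothesis enters exclusively through the factor $1/(\alpha(u)(1-\alpha(u))) \le 1/(\delta(1-\delta))$, and the $\tfrac{1}{4}$ comes from the AM--GM equality case of the one-dimensional inequality, giving the target bound $\vol(K)\vol(K^*) \le \omega_n^2/(4\delta(1-\delta))$.

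The main obstacle is the integration step: carefully choosing the weighting so that the pointwise $(a+b)^2/(ab) \le 1/(\alpha(1-\alpha))$ aggregates without losing a dimension-dependent factor. The naive attempts (Cauchy--Schwarz on $\int h_K^{-n}$ paired with $\int h_K^{n}$) lose the sharp constant, so the correct approach goes through Blaschke-type slicing with the cross-sectional density $A(u)^{n-1}$, which is precisely where Meyer and Werner's argument does its most delicate work. Once this is in place, the conclusion $x \in S(K, 1/(4\delta(1-\delta)))$ is immediate from the definition of the Santaló region.
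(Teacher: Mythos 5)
The paper states this lemma by citation to Meyer and Werner~\cite{MeW98} and gives no proof, so the only question is whether your reconstruction stands on its own. It does not. The pivotal pointwise inequality
\[
  \bigl(h_K(u)+h_K(-u)\bigr)^{2} \;\le\; \frac{h_K(u)\,h_K(-u)}{\alpha(u)\bigl(1-\alpha(u)\bigr)}
\]
does not follow from the Brunn-type bounds you derive, and is in fact false. Writing $a = h_K(u)$, $b = h_K(-u)$, $A = A(u)$, $V = \vol(K)$, and $\alpha = \alpha(u)$, the inscribed-cone bound gives $ab \le n^2 \alpha(1-\alpha) V^2/A^2$ and the double-cone bound gives $(a+b)^2 \le n^2 V^2 / A^2$; both are \emph{upper} bounds against the same quantity $n^2V^2/A^2$, and neither lower-bounds $ab$ nor upper-bounds $(a+b)^2$ relative to the other, so ``combining'' them cannot yield the displayed inequality. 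A concrete counterexample: take $n=2$, $K$ the triangle with vertices $(-10,0)$, $(1,1)$, $(1,-1)$, the point $x=0$, and $u=e_1$. Then $a=1$, $b=10$, $\vol(K)=11$, the slice $K\cap u^{\perp}$ has length $20/11$, and $V^+(u)=21/11$, so $\alpha=21/121$. The left-hand side is $(a+b)^2 = 121$, while the right-hand side is $ab/(\alpha(1-\alpha)) = 10\cdot 121^2/(21\cdot 100) = 14641/210 \approx 69.7$, which is strictly smaller.

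Even setting this aside, the aggregation step---integrating a hypothetical pointwise bound over $S^{n-1}$ without incurring a dimension-dependent loss---is the crux of the matter, and your sketch explicitly leaves it open. The weighting by $A(u)^{n-1}$ that you invoke is not a standard volume identity, and the polar-coordinate formula $\vol(K^*)=\frac{1}{n}\int_{S^{n-1}} h_K(u)^{-n}\,du$ does not couple naturally to slice areas in the way you suggest. Meyer and Werner's actual argument proceeds by a different, more global route rather than a direction-by-direction reduction of the Mahler product, and that mechanism is what you would need to reconstruct.
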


We also need the following result by Milman and Pajor~\cite{MiP00} (Remark~4 following Corollary~3), which implies that if $K$ satisfies Santal{\'o}, then it satisfies Kovner-Besicovitch. 

\begin{lemma}[Milman and Pajor~\cite{MiP00}] \label{lem:santalo-kb}
Let $K$ be a convex body with the origin $O$ in its interior such that $\vol(K) \cdot \vol(K^*) \leq s \, w_n^2$, where $s$ is a parameter. Then $\vol(K \cap -K) / \vol(K) \geq 2^{-O(n)} / s$.
\end{lemma}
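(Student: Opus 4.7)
The plan is to establish the cycle of implications Kovner--Besicovitch $\Rightarrow$ Winternitz $\Rightarrow$ Santal{\'o} $\Rightarrow$ Kovner--Besicovitch, each preserving the $2^{O(n)}$ form up to a change of absolute constants, which yields mutual equivalence; the centroid case is then handled by invoking a single classical inequality and feeding it into the cycle. The first step, KB $\Rightarrow$ W, is immediate from symmetry: if $\vol(K \cap -K) \ge 2^{-O(n)} \vol(K)$, then for any hyperplane $h$ through the origin the body $K \cap -K$ is bisected by $h$, so $\vol((K \cap -K) \cap h^{\pm}) = \half{\vol(K \cap -K)}$. Since $K \cap -K \subseteq K$, this quantity is a lower bound on $\vol(K \cap h^{\pm})$, giving each side of $K$ volume at least $2^{-O(n)} \vol(K)$.

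The central step is W $\Rightarrow$ S. Suppose every hyperplane through the origin cuts $K$ so that each side has volume at least $\delta \vol(K)$ with $\delta = 2^{-O(n)}$. I claim the origin lies in the floating body $K_{\delta/2}$. Otherwise, there would exist a hyperplane $h$ with $\vol(K \cap h^+) = (\delta/2)\vol(K)$ and the origin strictly inside $h^+$; translating $h$ parallel to itself onto the origin yields a hyperplane $g$ through the origin with $g^+ \subseteq h^+$, so $\vol(K \cap g^+) \le (\delta/2) \vol(K) < \delta \vol(K)$, contradicting Winternitz. Applying Lemma~\ref{lem:float-San49} then places the origin in $S(K,t)$ for $t = 1/(4(\delta/2)(1-\delta/2)) = 2^{O(n)}$, which is precisely the Santal{\'o} property with the correct scaling.

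Finally, S $\Rightarrow$ KB is a direct application of Lemma~\ref{lem:santalo-kb} of Milman and Pajor, which completes the cycle. For the centroid case, Gr{\"u}nbaum's inequality asserts that when the origin is the centroid of $K$, every hyperplane through the origin cuts off at least $(n/(n+1))^n \ge 1/e$ fraction of the volume on each side; this establishes Winternitz with a dimension-independent constant, and the cycle delivers Santal{\'o} and Kovner--Besicovitch automatically. The only step requiring any real care is the floating-body containment in W $\Rightarrow$ S (since the definition of $K_\delta$ is in terms of \emph{all} hyperplanes cutting off volume $\delta \vol(K)$, not just those through the origin), but the parallel translation argument above bridges the gap; the remaining implications are essentially citations of the results already collected in the excerpt.
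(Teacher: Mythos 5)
You have proven the wrong statement. What was asked for is Lemma~\ref{lem:santalo-kb}, an unconditional quantitative bound: if $\vol(K)\cdot\vol(K^*) \le s\,\omega_n^2$, then $\vol(K\cap -K)/\vol(K) \ge 2^{-O(n)}/s$. What you actually wrote out is a proof of Lemma~\ref{lem:centroid} (the equivalence of the three centrality notions). Worse, your argument is circular with respect to the target: your step Santal\'o $\Rightarrow$ Kovner--Besicovitch says verbatim that it ``is a direct application of Lemma~\ref{lem:santalo-kb} of Milman and Pajor,'' so you are invoking the very lemma you were asked to prove. The one-directional cycle KB $\Rightarrow$ W $\Rightarrow$ S cannot be run in reverse to yield S $\Rightarrow$ KB without a genuinely new ingredient; that new ingredient \emph{is} Lemma~\ref{lem:santalo-kb}. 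Note also that the paper itself supplies no proof of this lemma; it is a citation to Milman and Pajor (Remark~4 following Corollary~3 of~\cite{MiP00}).

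If you want a self-contained argument, here is one route using only facts already present in the paper. The body $K\cap -K$ is symmetric with polar $(K\cap -K)^* = \conv(K^*\cup -K^*)$, so the Bourgain--Milman bound (Lemma~\ref{lem:mahler-bounds}) applied to $K\cap -K$ gives $\vol(K\cap -K)\cdot\vol(\conv(K^*\cup -K^*)) \ge 2^{-O(n)}\omega_n^2$. Since $O\in K^*$, the inclusion $\conv(K^*\cup -K^*)\subseteq K^*\oplus -K^* = \Delta(K^*)$ holds, and Rogers--Shephard (Lemma~\ref{lem:vol-diffbody}) gives $\vol(\Delta(K^*)) \le 4^n\vol(K^*)$. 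Chaining these with the hypothesis $\vol(K^*)\le s\,\omega_n^2/\vol(K)$ yields $\vol(K\cap -K) \ge 2^{-O(n)}\omega_n^2/\vol(K^*) \ge 2^{-O(n)}\vol(K)/s$, which is the stated conclusion. Your write-up of the equivalence cycle is fine as a proof of Lemma~\ref{lem:centroid}, but it does not touch the lemma at hand.
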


We are now ready to prove Lemma~\ref{lem:centroid}.

\begin{proof} (of Lemma~\ref{lem:centroid})
First, suppose that $K$ satisfies Kovner-Besicovitch, that is, $\vol(K \cap -K) \ge 2^{-O(n)} \cdot \vol(K)$. Consider any hyperplane $h$ passing through the origin. As $K \cap -K$ is centrally symmetric, half of this body lies on each side of $h$. Thus, the volume of the portion of $K$ on either side of $h$ is at least $2^{-O(n)} \cdot \vol(K)$, and so $K$ satisfies the Winternitz property.

Next, suppose that $K$ satisfies Winternitz. Observe that any point outside the floating body $K_{\delta}$ is contained in a halfspace $h^+$ such that $\vol(K \cap h^+) \leq \delta \cdot \vol(K)$. By Winternitz, all halfspaces containing the origin have volume at least $2^{-O(n)} \cdot \vol(K)$, and so the origin is contained within the floating body $K_{\delta}$ for $\delta = 2^{-O(n)}$. It follows from Lemma~\ref{lem:float-San49} that the origin lies within the Santal{\'o} region $S(K,t)$ for some $t = 2^{O(n)}$. Thus, $K$ satisfies the Santal{\'o} property.

Finally, if $K$ satisfies Santal{\'o}, then it follows from Lemma~\ref{lem:santalo-kb} that it satisfies the Kovner-Besicovitch property. This establishes the equivalence of the three centrality properties.

Milman and Pajor~\cite{MiP00} (Corollary 3) showed that if the origin coincides with $K$'s centroid, then $K$ satisfies Kovner-Besicovitch, implying that it satisfies the other properties as well. 
\end{proof}

Lower bounds on the Mahler volume have also been extensively studied~\cite{BoM87,Kup08,Naz12}. Recalling the value of $\omega_n$ from the Santal{\'o} property, the following lower bound holds irrespective of the location of the origin within a convex body~\cite{BoM87}.

\begin{lemma}
\label{lem:mahler-bounds}
Given a convex body $K \subseteq \RE^n$ whose interior contains the origin, $\vol(K) \cdot \vol(K^*) \geq 2^{-O(n)} \cdot \omega_n^2$.
\end{lemma}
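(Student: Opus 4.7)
The statement is the classical Bourgain--Milman lower bound on the Mahler volume in its general (not necessarily centrally symmetric) form, with the origin allowed to sit anywhere in the interior of $K$. My plan is to reduce to the centrally symmetric case, where the Bourgain--Milman inequality is usually stated, and then invoke that result as a black box. The reduction will go through the difference body $\Delta(K) = K \oplus -K$ already introduced in Section~\ref{s:length}, so both the symmetrization and its volume control come for free from Lemma~\ref{lem:vol-diffbody}.

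First, I would set $L = \tfrac{1}{2}\Delta(K)$, which is a centrally symmetric convex body containing the origin in its interior. To compare $\vol(K)$ with $\vol(L)$, I apply Rogers--Shephard: $\vol(L) = 2^{-n}\vol(\Delta(K)) \le 2^{-n}\cdot 4^n\,\vol(K) = 2^n \vol(K)$, i.e.\ $\vol(K) \ge 2^{-n}\vol(L)$. To compare $\vol(K^*)$ with $\vol(L^*)$, note that because the origin lies in $K$, we have $K \subseteq K \oplus -K = \Delta(K)$, so $\Delta(K)^* \subseteq K^*$. Since $L^* = (\tfrac{1}{2}\Delta(K))^* = 2\,\Delta(K)^* \subseteq 2K^*$, this gives $\vol(L^*) \le 2^n\vol(K^*)$, i.e.\ $\vol(K^*) \ge 2^{-n}\vol(L^*)$. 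Multiplying the two inequalities,
\[
    \vol(K)\cdot\vol(K^*) ~\ge~ 2^{-2n}\,\vol(L)\cdot\vol(L^*).
\]

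Second, I invoke the centrally symmetric Bourgain--Milman inequality applied to $L$: there is an absolute constant $c > 0$ such that $\vol(L)\cdot\vol(L^*) \ge c^n\,\omega_n^2$. Substituting gives $\vol(K)\cdot\vol(K^*) \ge (c/4)^n\,\omega_n^2 = 2^{-O(n)}\,\omega_n^2$, as claimed.

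The main obstacle is the symmetric Bourgain--Milman inequality itself, which is a deep result of convex geometry and is precisely what the cited paper \cite{BoM87} proves (with subsequent simpler proofs by Kuperberg~\cite{Kup08} and Nazarov~\cite{Naz12}); I am treating it as external input. The only work specific to this lemma is the symmetrization, where the two non-trivial ingredients are (i) the volumetric bound on $\Delta(K)$ from Rogers--Shephard, used to avoid losses worse than $2^{O(n)}$, and (ii) the inclusion $K \subseteq \Delta(K)$, which requires only that the origin lies in $K$ and is therefore the reason the lemma holds uniformly in the position of the origin within $K$.
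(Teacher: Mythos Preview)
Your argument is correct. The reduction via the difference body is sound: Rogers--Shephard controls the volume loss on the primal side, the inclusion $K \subseteq \Delta(K)$ (which needs only $0 \in K$) controls the polar side, and the symmetric Bourgain--Milman inequality finishes it.

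However, there is nothing to compare against: the paper does not give a proof of this lemma. It is stated as a known fact with a citation to \cite{BoM87} (and the later proofs \cite{Kup08,Naz12}), and is used as a black box. So you have actually done more than the paper does. If anything, one could shorten your argument slightly by observing that the Mahler volume $\vol(K)\cdot\vol((K-x)^*)$ is minimized over interior points $x$ at the Santal\'o point, so a lower bound at the Santal\'o point (which is how the non-symmetric Bourgain--Milman bound is usually stated) automatically transfers to every interior point; but your difference-body route is equally valid and has the virtue of reducing directly to the symmetric case using only tools already present in Section~\ref{s:length}.
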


\subsection{Caps, Rays, and Relative Measures}

Consider a compact convex body $K$ in $n$-dimensional space $\RE^n$ with the origin $O$ in its interior. A \emph{cap} $C$ of $K$ is defined to be the nonempty intersection of $K$ with a halfspace. Letting $h_1$ denote a hyperplane that does not pass through the origin, let $\pcap{K}{h_1}$ denote the cap resulting by intersecting $K$ with the halfspace bounded by $h_1$ that does not contain the origin (see Figure~\ref{f:widray}(a)). Define the \emph{base} of $C$, denoted $\base(C)$, to be $h_1 \cap K$. Letting $h_0$ denote a supporting hyperplane for $K$ and $C$ parallel to $h_1$, define an \emph{apex} of $C$ to be any point of $h_0 \cap K$.

\begin{figure}[htbp]
\centering
\includegraphics[scale=.8]{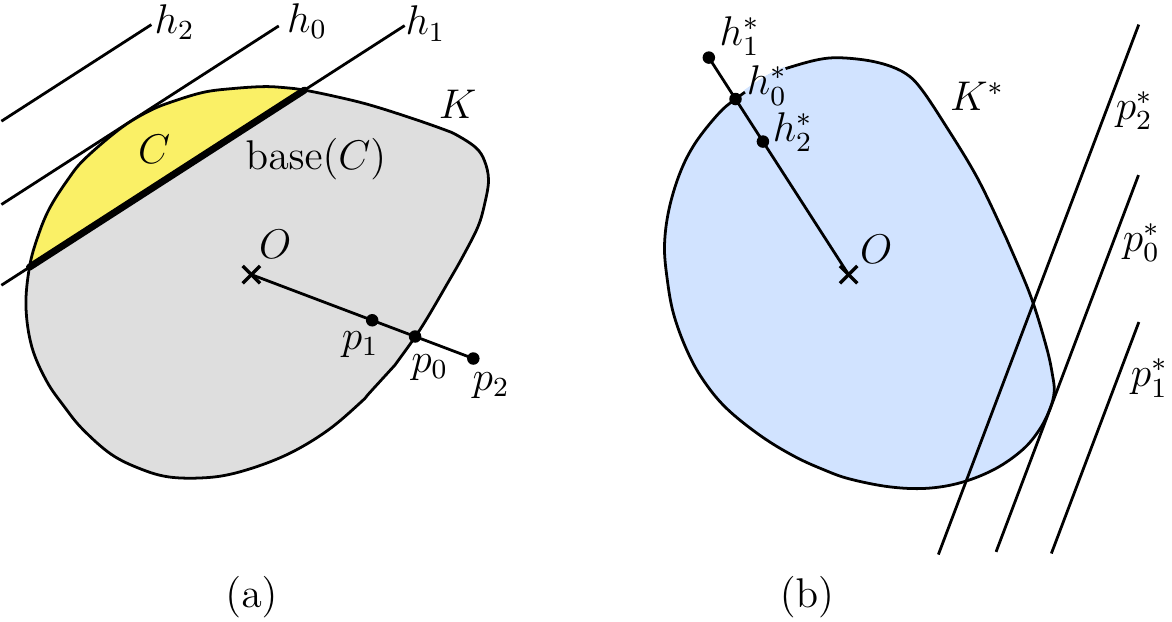}
\caption{\label{f:widray} Convex body $K$ and polar $K^*$ with definitions used for width and ray.}
\end{figure}

We define the \emph{absolute width} of cap $C$ to be $\dist(h_1,h_0)$. When a cap does not contain the origin, it will be convenient to define the \emph{relative width} of $C$, denoted $\width_K(C)$, to be the ratio $\dist(h_1,h_0) / \dist(O,h_0)$. We extend the notion of width to hyperplanes by defining $\width_K(h_1) = \width_K(\pcap{K}{h_1})$. Observe that as a hyperplane is translated from a supporting hyperplane to the origin, the relative width of its cap ranges from 0 to a limiting value of 1.

We also characterize the closeness of a point to the boundary in both absolute and relative terms. Given a point $p_1 \in K$, let $p_0$ denote the point of intersection of the ray $O p_1$ with the boundary of $K$. Define the \emph{absolute ray distance} of $p_1$ to be $\|p_1 p_0\|$, and define the \emph{relative ray distance} of $p_1$, denoted $\ray_K(p_1)$, to be the ratio $\|p_1 p_0\| / \|O p_0\|$. Relative widths and relative ray distances are both affine invariants, and unless otherwise specified, references to widths and ray distances will be understood to be in the relative sense.

We can also define volumes in a manner that is affine invariant. Recall that $\vol(\cdot)$ denotes the standard Lebesgue volume measure. For any region $\Lambda \subseteq K$, define the \emph{relative volume} of $\Lambda$ with respect to $K$, denoted $\vol_K(\Lambda)$, to be $\vol(\Lambda)/\vol(K)$.

With the aid of the polar transformation we can extend the concepts of width and ray distance to objects lying outside of $K$. Consider a hyperplane $h_2$ parallel to $h_1$ that lies beyond the supporting hyperplane $h_0$ (see Figure~\ref{f:widray}(a)). It follows that $h_2^* \in K^*$, and we define $\width_K(h_2) = \ray_{K^*}(h_2^*)$ (see Figure~\ref{f:widray}(b)). Similarly, for a point $p_2 \notin K$ that lies along the ray $O p_1$, it follows that the hyperplane $p_2^*$ intersects $K^*$, and we define $\ray_K(p_2) = \width_{K^*}(p_2^*)$. By properties of the polar transformation, it is easy to see that $\width_K(h_2) = \dist(h_0,h_2) / \dist(O,h_2)$. Similarly, $\ray_K(p_2) = \|p_0 p_2\| / \|O p_2\|$. Henceforth, we will omit references to $K$ when it is clear from context.

Some of our results apply only when we are sufficiently close to the boundary of $K$. Given $0 \leq \alpha \leq 1$, we say that a cap $C$ is \emph{$\alpha$-shallow} if $\width(C) \le \alpha$, and we say that a point $p$ is \emph{$\alpha$-shallow} if $\ray(p) \le \alpha$. We will simply say \emph{shallow} to mean $\alpha$-shallow, where $\alpha$ is a sufficiently small constant.

Given any cap $C$ and a real $\lambda > 0$, we define its $\lambda$-expansion, denoted $C^{\lambda}$, to be the cap of $K$ cut by a hyperplane parallel to the base of $C$ such that the absolute width of $C^{\lambda}$ is $\lambda$ times the absolute width of $C$. (Note that if the expansion of a cap is large enough it may be the same as $K$.) 

We now present a number of useful technical results on ray distances and cap widths in both their absolute and relative forms.

\begin{lemma} \label{lem:raydist-width}
Let $C$ be a cap of $K$ that does not contain the origin and let $p$ be a point in $C$. Then $\ray(p) \leq \width(C)$.
\end{lemma}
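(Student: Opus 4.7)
The plan is to reduce the statement to a one-dimensional computation by projecting along the common normal direction of $h_0$ and $h_1$. Let $u$ be the unit normal of $h_0$ pointing away from the origin, and set $d_0 = \dist(O, h_0)$ and $d_1 = \dist(O, h_1)$, so that $h_j = \{x : \ang{u,x} = d_j\}$ for $j = 0, 1$, and $\width(C) = (d_0 - d_1)/d_0 = 1 - d_1/d_0$. Because $p \in C$ lies in the halfspace bounded by $h_1$ that does not contain the origin, this unrolls to the simple inequality $\ang{u, p} \ge d_1$.

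Next, I would parameterize the ray from $O$ through $p$. Since $O$ is interior to $K$, the ray meets $\partial K$ at the unique point $p_0 = t\, p$ for some $t \ge 1$, which gives $\ray(p) = \|p_0 - p\|/\|p_0\| = (t - 1)/t = 1 - 1/t$. The target inequality $\ray(p) \le \width(C)$ is therefore equivalent to $1/t \ge d_1/d_0$, i.e., $t \le d_0/d_1$, and this is what the remainder of the argument must establish.

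The heart of the proof is the chain $t\, d_1 \le t \ang{u, p} = \ang{u, p_0} \le d_0$. The first step uses $p \in C$ (so $\ang{u,p} \ge d_1$), the middle equality is linearity along the ray $p_0 = t p$, and the last step uses that $h_0$ supports $K$ on the $u$-positive side and $p_0 \in K$. Dividing through by $d_1 > 0$ yields $t \le d_0/d_1$, which is exactly the inequality needed. There is no real obstacle here: after the initial choice of the normal direction $u$, everything collapses to a one-dimensional comparison of signed distances coming directly from the definitions of a supporting hyperplane and of the cap $C$.
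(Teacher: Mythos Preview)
Your proof is correct and follows essentially the same one-dimensional reduction as the paper: both arguments project onto the normal direction of the parallel hyperplanes $h_0,h_1$ and compare signed distances. The paper phrases this by tracking the points $q,p_0,q_0$ where the ray $Op$ meets $h_1,\partial K,h_0$ and using the mediant-type inequality $\frac{\|qp_0\|}{\|Op_0\|}\le\frac{\|qp_0\|+\|p_0q_0\|}{\|Op_0\|+\|p_0q_0\|}$, whereas you parametrize $p_0=tp$ and bound $t\le d_0/d_1$ directly via $t\,d_1\le \ang{u,p_0}\le d_0$; these are two presentations of the same computation.
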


\begin{proof}
Let $h$ be the hyperplane passing through the base of $C$, and let $h_0$ be the supporting hyperplane of $K$ parallel to $h$ at $C$'s apex. Let $q$, $p_0$, and $q_0$ denote the points of intersection of the ray $O p$ with $h$, $\bd K$, and $h_0$, respectively. Since $p \in C$, the order of these points along the ray is $\ang{O, q, p, p_0, q_0}$. By considering the hyperplanes parallel to $h$ passing through these points, we have
\[
    \ray(p) 
        ~ =    ~ \frac{\|p p_0\|}{\|O p_0\|}
        ~ \leq ~ \frac{\|q p_0\|}{\|O p_0\|} 
        ~ \leq ~ \frac{\|q p_0\| + \|p_0 q_0\|}{\|O p_0\| + \|p_0 q_0\|} 
        ~ =    ~ \frac{\|q q_0\|}{\|O q_0\|} 
        ~ =    ~ \frac{\dist(h,h_0)}{\dist(O,h_0)}
        ~ =    ~ \width(C). \qedhere
\]
\end{proof}

There are two natural ways to associate a cap with any point $p \in K$. The first is the \emph{minimum volume cap}, which is any cap whose base passes through $p$ of minimum volume among all such caps. For the second, assume that $p \neq O$, and let $p_0$ denote the point of intersection of the ray $O p$ with the boundary of $K$. Let $h_0$ be any supporting hyperplane of $K$ at $p_0$. Take the cap $C$ induced by a hyperplane parallel to $h_0$ passing through $p$. As shown in the following lemma this is the cap of minimum width containing $p$.


\begin{lemma}
\label{lem:min-width-cap}
For any $p \in K \setminus \{O\}$, consider the cap $C$ defined above. Then $\width(C) = \ray(p)$ and further, $C$ has the minimum width over all caps that contain $p$.
 \end{lemma}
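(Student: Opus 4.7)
The claim has two parts, and both follow from the definitions plus Lemma~\ref{lem:raydist-width}. First I would verify the identity $\width(C) = \ray(p)$ by a direct computation that exploits the fact that $p_0 \in h_0$, $p \in h_1$, and $O$, $p$, $p_0$ are collinear with $h_0 \parallel h_1$. Second, the minimality assertion will be a one-line consequence of Lemma~\ref{lem:raydist-width}.

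For the first part, let $\hat n$ be the unit outward normal of $h_0$ (pointing away from the origin), so that $h_0 = \{x : \ang{\hat n,x} = d_0\}$ with $d_0 = \dist(O,h_0) > 0$, and let $\hat u = p_0/\|Op_0\|$ be the unit vector along the ray $Op$. Writing $p = \|Op\|\hat u$ and using $p_0 \in h_0$, we obtain $\|Op_0\|\ang{\hat u,\hat n} = d_0$. Since $h_1$ is the hyperplane through $p$ parallel to $h_0$, it has equation $\ang{\hat n,x} = d_1$ with $d_1 = d_0 - \dist(h_1,h_0)$; plugging in $p$ gives $\|Op\|\ang{\hat u,\hat n} = d_1$. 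Taking ratios eliminates $\ang{\hat u,\hat n}$ and yields
\[
    \frac{\|Op\|}{\|Op_0\|} ~=~ \frac{d_1}{d_0}, \qquad \text{hence} \qquad \frac{\|pp_0\|}{\|Op_0\|} ~=~ 1 - \frac{d_1}{d_0} ~=~ \frac{\dist(h_1,h_0)}{\dist(O,h_0)}.
\]
The left-hand side is $\ray(p)$ and the right-hand side is $\width(C)$, which proves the identity.

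For the second part, let $C'$ be any cap of $K$ with $p \in C'$. If $C'$ contains the origin the statement is vacuous (the relative width is at least $1$), so assume otherwise. Then Lemma~\ref{lem:raydist-width} applied to $p \in C'$ gives $\ray(p) \le \width(C')$, and combining with the identity from the first part yields $\width(C) = \ray(p) \le \width(C')$, as required.

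The argument is essentially bookkeeping; the only subtlety is keeping the definitions of absolute versus relative width straight. I expect the sole minor obstacle to be ensuring that the configuration $\ang{O,p,p_0}$ sits as assumed (in particular that $h_1$ lies between $O$ and $h_0$ so the subtraction $d_0 - \dist(h_1,h_0)$ is legitimate), which is automatic from $p \in K$ and $h_0$ being a supporting hyperplane at $p_0 \in \bd K$ on the ray $Op$.
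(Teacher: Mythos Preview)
Your proposal is correct and follows essentially the same approach as the paper: the paper dispatches the identity $\width(C)=\ray(p)$ with a one-line ``similar triangles'' argument (which your coordinate computation with $\hat n$ and $\hat u$ spells out explicitly), and then invokes Lemma~\ref{lem:raydist-width} for minimality exactly as you do. Your explicit handling of the case where $C'$ contains the origin is a minor clarification the paper leaves implicit.
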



\begin{proof}
Let $h$ denote the hyperplane passing through $p$ parallel to $h_0$ (defined above). By similar triangles, we have 
\[
    \width(C) 
        ~ = ~ \frac{\dist(h,h_0)}{\dist(O,h_0)} 
        ~ = ~ \frac{\|p p_0\|}{\|O p_0\|} 
        ~ = ~ \ray(p).
\] 
By Lemma~\ref{lem:raydist-width}, for any cap $C'$ that contains $p$, $\ray(p) \leq \width(C')$, and hence $\width(C) \leq \width(C')$.
\end{proof}

The following lemma gives a simple lower and upper bound on the absolute volume of a cap.

\begin{lemma}
\label{lem:vol-cap}
Let $C$ be a $\frac{1}{2}$-shallow cap, let $a = \area(\base(C))$, and let $w$ denote $C$'s absolute width. Then $a w/n \leq \vol(C) \leq 2^{n-1} a w$.
\end{lemma}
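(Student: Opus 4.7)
The plan is to set up coordinates in which the base $B$ of the cap lies in the hyperplane $h_1 = \{x_n = 0\}$ and the parallel supporting hyperplane $h_0$ is $\{x_n = w\}$, so that $C$ is sandwiched in the slab $\{0 \le x_n \le w\}$ and the origin $O$ sits in the opposite halfspace $\{x_n < 0\}$ at perpendicular distance $d = \dist(O, h_1) > 0$. Since $C$ is $\frac{1}{2}$-shallow, $w/(d+w) = \width(C) \le \frac{1}{2}$, which rearranges to $d \ge w$; this will be the only place the shallowness hypothesis is used.

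For the lower bound, I would pick any apex $q \in h_0 \cap K$. Then $\conv(B \cup \{q\})$ is a cone of base area $a$ and perpendicular height $w$, so it has volume $aw/n$. Since $B \subseteq C$, $q \in C$, and $C$ is convex, this cone is contained in $C$, so $\vol(C) \ge aw/n$.

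For the upper bound, I would write $\vol(C) = \int_0^w |C_t|\, dt$ with $C_t = C \cap \{x_n = t\}$, and control each cross-section through a central projection $\pi$ from $O$ onto $h_1$. Any $x \in C$ lies in $K$, so the segment $\overline{Ox}$ lies in $K$ by convexity, and its intersection with $h_1$ lies in $K \cap h_1 = B$; hence $\pi(C) \subseteq B$. A direct computation shows that, on the slice $\{x_n = t\}$, the map $\pi$ is the affine composition of perpendicular projection to $h_1$ followed by dilation about the foot of $O$ by the factor $d/(d+t)$, so
\[
    \left(\frac{d}{d+t}\right)^{n-1} |C_t|
        ~ = ~ |\pi(C_t)|
        ~ \le ~ |B|
        ~ = ~ a.
\]
Rearranging gives $|C_t| \le (1 + t/d)^{n-1} a$, and for $t \in [0,w]$ the shallowness bound $d \ge w$ yields $(1 + t/d)^{n-1} \le 2^{n-1}$. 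Integrating over $t \in [0,w]$ gives $\vol(C) \le 2^{n-1} a w$.

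The main obstacle is the upper bound: a naive attempt via Brunn--Minkowski would control $|C_t|^{1/(n-1)}$ only through its endpoint values, but $|C_t|$ need not be monotone in $t$ (consider a cap of a cube whose base has the same area as the apex face), so one must find an external witness for why the cross-sections cannot bulge far beyond the base. The central projection through $O$ supplies exactly such a witness, and the shallowness hypothesis enters only to keep the scale factor $(d+t)/d$ bounded by $2$ throughout the cap.
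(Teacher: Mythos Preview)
Your proof is correct and follows essentially the same approach as the paper. The paper's upper bound argument is stated more tersely---it observes that $C$ lies in the infinite cone with apex $O$ through $\base(C)$, and that shallowness bounds the cross-sectional area ratio by $2^{n-1}$---but this is exactly your central-projection argument, just with the computation of the scaling factor left implicit.
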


\begin{proof}
Let $p$ be the apex of $C$ and $\base(C)$ denote its base. Let $P = \conv(\base(C) \cup \{p\})$. Clearly, $P \subseteq C$ and $\vol(P) = a w/n$, which yields the lower bound. To see the upper bound, observe that $C$ lies within the generalized infinite cone whose apex is $O$ and base is $\base(C)$. Because $\width(C) \leq \frac{1}{2}$, it follows that the area of any slice of $C$ cut by a hyperplane parallel to $\base(C)$ exceeds the area of $\base(C)$ by a factor of at most $2^{n-1}$. The upper bound follows from elementary geometry.
\end{proof}

An easy consequence of convexity is that, for $\lambda \ge 1$, $C^{\lambda}$ is a subset of the region obtained by scaling $C$ by a factor of $\lambda$ about its apex. This implies the following lemma.

\begin{lemma} \label{lem:cap-exp}
Given any cap $C$ and a real $\lambda \geq 1$, $\vol(C^{\lambda}) \leq \lambda^n \vol(C)$.
\end{lemma}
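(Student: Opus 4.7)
The plan is to formalize the remark preceding the lemma: the expanded cap $C^{\lambda}$ is contained in the image of $C$ under the dilation by factor $\lambda$ about an apex of $C$. Since a dilation by $\lambda$ multiplies $n$-dimensional volume by $\lambda^n$, the volume bound follows immediately from monotonicity of Lebesgue measure.

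Concretely, fix an apex $p$ of $C$, and let $h_0$ be the supporting hyperplane of $K$ at $p$ parallel to the base of $C$. Choose an inward unit normal $n_0$ to $h_0$ so that $\ang{n_0, z - p} \geq 0$ for every $z \in K$. Letting $w$ denote the absolute width of $C$, the cap and its expansion can be written as
\[
    C \ = \ \{z \in K \ST \ang{n_0, z - p} \leq w\},
    \qquad
    C^{\lambda} \ = \ \{z \in K \ST \ang{n_0, z - p} \leq \lambda w\}.
\]
Let $\sigma(z) = p + \lambda(z - p)$ denote the dilation by $\lambda$ about $p$. I will show $C^{\lambda} \subseteq \sigma(C)$.

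Given any $x \in C^{\lambda}$, set $y = \sigma^{-1}(x) = (1 - 1/\lambda)\,p + (1/\lambda)\,x$. Since $p \in K$, $x \in K$, and $K$ is convex, $y \in K$. Moreover,
\[
    \ang{n_0, y - p}
        \ = \ \frac{1}{\lambda} \ang{n_0, x - p}
        \ \leq \ \frac{1}{\lambda}\,(\lambda w)
        \ = \ w,
\]
so $y \in C$ and hence $x = \sigma(y) \in \sigma(C)$. This establishes the inclusion $C^{\lambda} \subseteq \sigma(C)$, and since $\sigma$ is an affine map with determinant $\lambda^n$, we conclude $\vol(C^{\lambda}) \leq \vol(\sigma(C)) = \lambda^n \vol(C)$.

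There is no real obstacle here; the only subtlety is choosing the correct center for the dilation. Dilating about the origin would overshoot, and dilating about the centroid of $\base(C)$ would miss parts of $C^{\lambda}$ near the apex. Taking the dilation about the apex $p$ is what makes both the inward-normal inequality and the membership-in-$K$ argument work simultaneously via a single convex combination with $p$.
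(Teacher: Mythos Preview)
Your proof is correct and is precisely a formalization of the paper's own argument: the paper does not give a separate proof but simply remarks, immediately before the lemma, that ``for $\lambda \ge 1$, $C^{\lambda}$ is a subset of the region obtained by scaling $C$ by a factor of $\lambda$ about its apex,'' and declares the lemma an immediate consequence. You have written out exactly this inclusion with the convex-combination argument $y = (1-1/\lambda)p + (1/\lambda)x$, so the approaches coincide.
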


Another consequence of convexity is that containment of caps is preserved under expansion. This is a straightforward adaptation of Lemma~4.4 in~\cite{AFM17c}.

\begin{lemma} \label{lem:cap-containment-exp}
Given two caps $C_1 \subseteq C_2$ and a real $\lambda \geq 1$, $C_1^{\lambda} \subseteq C_2^{\lambda}$.
\end{lemma}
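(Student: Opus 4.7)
The plan is a short scaling argument about an apex of $C_1$. First I would coordinatize each cap: write $C_i = K \cap \{x \ST \langle x, u_i \rangle \geq c_i\}$, where $u_i$ is the unit outward normal to the base of $C_i$, set $c_i^{\max} = \max_{z \in K} \langle z, u_i \rangle$, and let $w_i = c_i^{\max} - c_i$ be the absolute width of $C_i$. With this notation, the $\lambda$-expansion is exactly $C_i^\lambda = K \cap \{x \ST \langle x, u_i \rangle \geq c_i^{\max} - \lambda w_i\}$.

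Next, I would fix an apex $p_1$ of $C_1$, so $p_1 \in K$ and $\langle p_1, u_1 \rangle = c_1^{\max}$. Given an arbitrary $y \in C_1^\lambda$, consider the auxiliary point $x = (1 - 1/\lambda)\, p_1 + (1/\lambda)\, y$. Since $\lambda \geq 1$ and both $p_1$ and $y$ lie in $K$, $x$ is a convex combination in $K$, hence $x \in K$; a one-line computation using $\langle y, u_1 \rangle \geq c_1^{\max} - \lambda w_1$ shows $\langle x, u_1 \rangle \geq c_1$, so $x \in C_1$. Informally, this step is just the observation that $C_1^\lambda$ sits inside the homothet of $C_1$ obtained by scaling about $p_1$ by a factor of $\lambda$.

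Now I would invoke the hypothesis $C_1 \subseteq C_2$: the point $x \in C_1$ lies in $C_2$, so $\langle x, u_2 \rangle \geq c_2$, while $p_1 \in K$ gives $\langle p_1, u_2 \rangle \leq c_2^{\max}$. Rewriting $y = (1 - \lambda)\, p_1 + \lambda\, x$ and using $1 - \lambda \leq 0$, the desired inequality $\langle y, u_2 \rangle \geq c_2^{\max} - \lambda w_2$ follows by combining these two bounds. Together with $y \in K$, this places $y$ in $C_2^\lambda$ and completes the proof.

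The only real subtlety is that the base hyperplanes of $C_1$ and $C_2$ need not be parallel, and $C_1$ may have several apices; the scaling-about-$p_1$ perspective finesses both issues, since it uses nothing about $p_1$ beyond $p_1 \in K$ (to bound $\langle p_1, u_2 \rangle$ from above) and $p_1 \in C_1 \subseteq C_2$ (which is routed through the intermediate point $x$, not through $p_1$ directly).
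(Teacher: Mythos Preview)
Your argument is correct. The paper does not give a self-contained proof of this lemma; it simply remarks that it is ``a straightforward adaptation of Lemma~4.4 in~\cite{AFM17c}.'' The idea you use---that for $\lambda \ge 1$ the expanded cap $C_1^\lambda$ is contained in the homothet of $C_1$ obtained by scaling about an apex $p_1$ by factor $\lambda$---is exactly the observation the paper records just before Lemma~\ref{lem:cap-exp}, so your route is the intended one. Your treatment of the non-parallel bases via the intermediate point $x$ and the sign of $1-\lambda$ is the clean way to finish, and the edge case where $C_i^\lambda = K$ is harmless since your halfspace description $C_i^\lambda = K \cap \{x : \langle x, u_i\rangle \ge c_i^{\max} - \lambda w_i\}$ remains valid there.
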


The following lemma is a technical result, which shows that if a ray hits the interior of the base of a cap of width at least $\eps$, then it hits the interior of the base of a cap of width exactly $\eps$ that is contained in the original. 

\begin{lemma} \label{lem:cap-tech}
Let $0 < \eps < 1$, and let $K \subseteq \RE^n$ be a convex body containing the origin in its interior. Let $r$ be a ray shot from the origin, and let $D$ be a cap of $K$ of width at least $\eps$ such that ray $r$ intersects the interior of its base. Then there exists a cap $E \subseteq D$ of width $\eps$ such that ray $r$ intersects the interior of its base.
\end{lemma}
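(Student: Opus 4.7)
The plan is to construct $E$ via a pencil of hyperplanes rotating from $h_1$ to a supporting hyperplane at the point where $r$ exits $K$. Let $p_0 \in \partial K$ denote this exit point, let $h^*$ be any supporting hyperplane of $K$ at $p_0$ with outward unit normal $u^*$, and let $u$ be the outward unit normal of $h_1 = \{\ang{u, x} = 1\}$. First I would record that $\ang{u, p_0} > 1$: since $q_0 := r \cap h_1$ lies in $\interior_{h_1} \base(D)$, it is in $\interior K$, which forces $q_0 \neq p_0$ and hence $p_0 \notin h_1$.

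If $h^* \parallel h_1$ (equivalently $u^* = u$, meaning $p_0$ sits on $D$'s apex hyperplane), I would simply take $E$ to be the parallel translate with base $\{\ang{u, x} = (1-\eps)\ang{u, p_0}\}$; its width is $\eps$, it is contained in $D$, and $r$ meets its base at $(1-\eps) p_0 \in \interior K$. Otherwise, I would set $A = h_1 \cap h^*$ (an $(n-2)$-dimensional affine flat) and parameterize the pencil through $A$ by $\phi \in [0,1]$ via the normal $u(\phi) = (1-\phi) u + \phi u^*$, with $h(\phi) = \{\ang{u(\phi), x} = \ang{u(\phi), A}\}$ and $E(\phi) = \pcap{K}{h(\phi)}$. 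I would observe that $\width(E(\phi))$ is continuous in $\phi$, equals $\width(D) \geq \eps$ at $\phi = 0$, and vanishes at $\phi = 1$ (since $h(1) = h^*$ supports $K$ at $p_0$), and then invoke the intermediate value theorem to pick $\phi_\ast \in [0,1)$ with $\width(E(\phi_\ast)) = \eps$; I set $E = E(\phi_\ast)$.

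It then remains to verify (i) $E \subseteq D$ and (ii) $r$ meets the interior of $E$'s base; both should reduce to nonnegatively weighted combinations of two linear inequalities. From $A \in h_1 \cap h^*$ one gets $\ang{u, A} = 1$ and $\ang{u^*, A} = \ang{u^*, p_0}$, and from $h^*$ supporting $K$ at $p_0$ one gets $\ang{u^*, x} \leq \ang{u^*, p_0}$ for every $x \in K$. Expanding $u(\phi_\ast)$ linearly gives
\[
\ang{u(\phi_\ast), x} - \ang{u(\phi_\ast), A} = (1 - \phi_\ast)(\ang{u, x} - 1) + \phi_\ast (\ang{u^*, x} - \ang{u^*, p_0}) \leq 0,
\]
with strict inequality whenever $\ang{u, x} < 1$, so every $x \in K \setminus D$ lies outside $E$, yielding (i); an analogous expansion gives
\[
\ang{u(\phi_\ast), A} - \ang{u(\phi_\ast), p_0} = (1 - \phi_\ast)(1 - \ang{u, p_0}) < 0,
\]
so $r$ meets $h(\phi_\ast)$ strictly before reaching $p_0$, placing the intersection in $\interior K$ and hence in the interior of $E$'s base, which yields (ii). The main obstacle I anticipate is finding the right family of hyperplanes: naive attempts such as parallel translation alone, or the minimum-width cap at the point of $r$ with ray-distance $\eps$, fail to guarantee $E \subseteq D$ in general, but the pencil through $A = h_1 \cap h^*$ makes both properties follow automatically from convexity.
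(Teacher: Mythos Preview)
Your argument is correct, and it differs from the paper's in a clean way. The paper first forms the cap $F\subseteq D$ with base through $p_0=r\cap\partial K$ parallel to $h_1$, and then branches on $\width(F)$: if $\width(F)<\eps$ it translates $F$'s base toward $h_1$; if $\width(F)\ge\eps$ it rotates the base \emph{about $p_0$} in the plane spanned by $u$ and the surface normal at $p_0$ until the width drops just below $\eps$, and then translates infinitesimally so that $r$ meets the relative interior of the base. You instead rotate through the pencil \emph{about $A=h_1\cap h^*$}, running directly from $h_1$ to $h^*$, and pick off width exactly $\eps$ by the intermediate value theorem. The payoff of your choice of pivot is that both required properties, $E\subseteq D$ and ``$r$ meets $\interior(\base(E))$'', fall out of the single identity $\ang{u(\phi),x}-\ang{u(\phi),A}=(1-\phi)(\ang{u,x}-1)+\phi(\ang{u^*,x}-\ang{u^*,p_0})$, with no limiting ``infinitesimally smaller then expand'' step; the paper's pivot at $p_0$ forces that extra translation precisely because $r$ hits the rotated base only at the boundary point $p_0$. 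Your degenerate case $u^*=u$ is narrower than the paper's Case~1 (it corresponds to $p_0$ lying on the apex hyperplane of $D$), but your Case~2 absorbs everything else, so the coverage is the same. One small point worth stating explicitly in a write-up: $u(\phi)\neq 0$ on $[0,1]$ because $u^*=-u$ would give $\ang{u^*,p_0}<0$, contradicting $O\in\interior(K)$; this ensures $h(\phi)$ is a genuine hyperplane throughout and the width function is well defined and continuous.
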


\begin{proof}
Let $p$ be the point of intersection of ray $r$ with the boundary of $K$. Let $F \subseteq D$ be the cap whose base passes through $p$ and is parallel to the base of $D$. We now consider two cases.

If the width of cap $F$ is less than $\eps$, then we let $E$ be the cap of width $\eps$ obtained by translating the base of $F$ parallel to itself (towards the base of $D$, as shown in Figure~\ref{f:subcap}(a)). Clearly $E \subseteq D$ and satisfies the conditions specified in the lemma.

\begin{figure}[htbp]
  \centerline{\includegraphics[scale=.8,page=1]{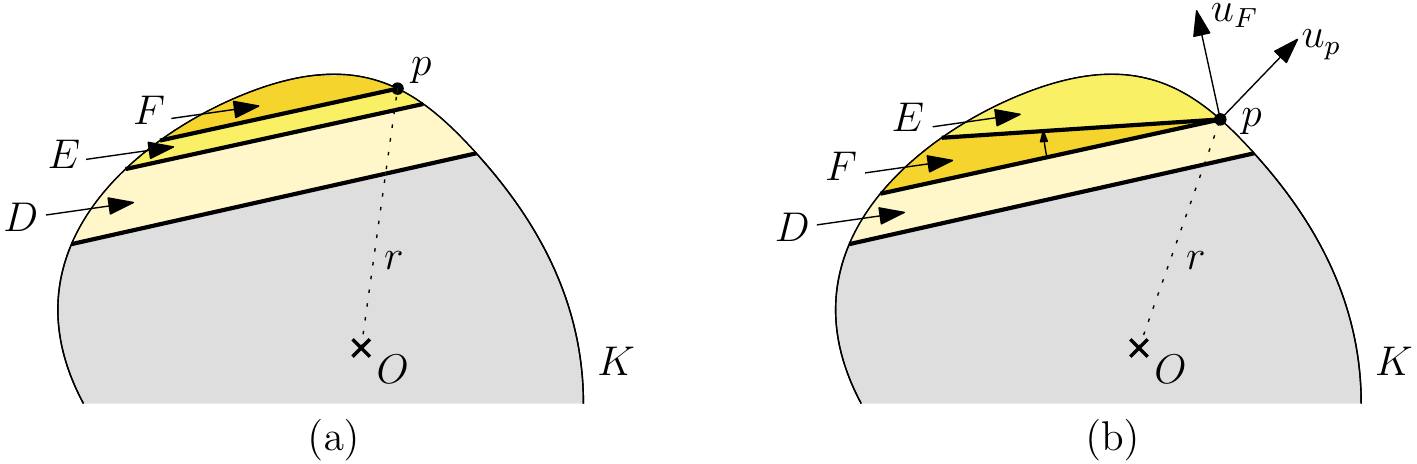}}
  \caption{\label{f:subcap}Proof of Lemma~\ref{lem:cap-tech}.}
\end{figure}

Otherwise, if the width of cap $F$ is at least $\eps$, then intuitively, we can rotate its base about $p$ (shrinking cap $F$ in the process), until its width is infinitesimally smaller than $\eps$ (Figure~\ref{f:subcap}(b)). More formally, let $u_F$ denote the normal vector for $F$'s base and let $u_p$ denote the  (any) surface normal vector to $K$ at $p$ (both unit length). Since $p$ is on the boundary, the cap orthogonal to $u_p$ and passing through $p$ has width zero. Since $F$ has width at least $\eps$, $u_F \neq u_p$. 

Considering the 2-dimensional linear subspace spanned by $u_F$ and $u_p$, we rotate continuously from $u_F$ to $u_p$, and consider the hyperplane passing through $p$ orthogonal to this vector. Clearly, the width of the associated cap varies continuously from $\width(F)$ to zero. Thus, there must be an angle where the cap width is infinitesimally smaller than $\eps$. We can expand this cap by translating its base parallel to itself to obtain a cap $E$ of width $\eps$, which satisfies all the conditions specified in the lemma.
\end{proof}

\subsection{Dual Caps and Cones} \label{s:dcaps}

It will be useful to consider the notion of a cap in a dual setting (see, e.g., \cite{AFM12b, AFM12a}). Given a convex body $K \subseteq \RE^n$ and a point $z$ that is exterior to $K$, we define the \emph{dual cap} of $K$ with respect to $z$, denoted $\dcap{K}{z}$, to be the set of $(n-1)$-dimensional hyperplanes that pass through $z$ and do not intersect $K$'s interior (see Figure~\ref{f:dual-cap-def}). In this paper, $K$ will be either full dimensional or one dimension less. We define the polar of a dual cap to be the set of points that results by taking the polar of each hyperplane of the dual cap. 

\begin{figure}[htbp]
  \centerline{\includegraphics[scale=.8]{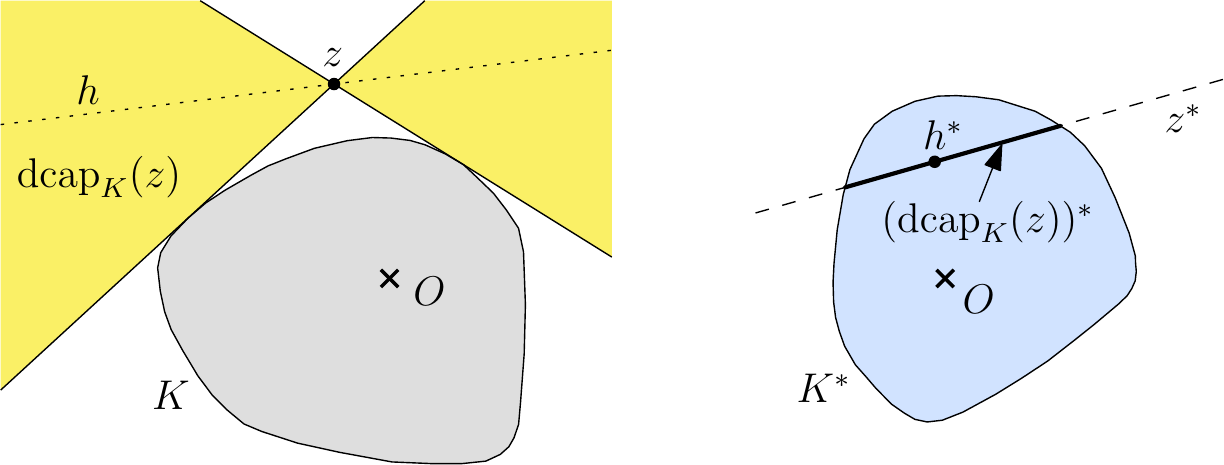}}
  \caption{\label{f:dual-cap-def}A dual cap and its polar.}
\end{figure}

Given $z$ exterior to $K$, and consider the cap of $K^*$ induced by the hyperplane $z^*$. By standard properties of the polar transformation, a hyperplane $h \in \dcap{K}{z}$ if and only if the point $h^*$ lies on $K^* \cap z^*$. As an immediate consequence, we obtain the following relationship between caps and dual caps.

\begin{lemma} \label{lem:polardcap}
Let $K \subseteq \RE^n$ be a full dimensional convex body that contains the origin and let $z \not\in K$. Then $(\dcap{K}{z})^* = \base(\pcap{K^*}{z^*})$.
\end{lemma}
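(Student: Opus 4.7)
The plan is to reduce the statement to two classical incidence/containment rules for the polarity between points and hyperplanes, and then chain them together. The two rules I would invoke are: (i) a hyperplane $h$ (not through the origin) passes through a point $z$ if and only if its polar point $h^{*}$ lies on the polar hyperplane $z^{*}$, and (ii) a hyperplane $h$ bounds a closed half-space containing $K$ on the same side as the origin if and only if $h^{*} \in K^{*}$. These are standard facts about the polar map $v \leftrightarrow v^{*}$.

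First I would unwind the definition of $\dcap{K}{z}$: it is the set of hyperplanes $h$ with $z \in h$ that do not meet the interior of $K$. Since $z \notin K$ and the origin lies in the interior of $K$, any such $h$ automatically has $K$ on the origin's side (otherwise $h$ would separate $O$ from $K$'s interior, which is impossible because $O$ is interior to $K$). So ``$h$ does not meet $\interior(K)$'' is equivalent to ``$h$ bounds a closed half-space containing $K$ on the origin's side,'' which by rule (ii) is equivalent to $h^{*} \in K^{*}$.

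Next I would apply rule (i) to the condition $z \in h$, obtaining $h^{*} \in z^{*}$. Combining both equivalences gives
\[
    h \in \dcap{K}{z}
    \;\;\Longleftrightarrow\;\;
    h^{*} \in K^{*} \cap z^{*}.
\]
Taking polars of both sides of the definition, $(\dcap{K}{z})^{*} = \{h^{*} : h \in \dcap{K}{z}\} = K^{*} \cap z^{*}$. Finally, by the definition given just before the lemma, the cap $\pcap{K^{*}}{z^{*}}$ is the intersection of $K^{*}$ with the half-space bounded by $z^{*}$ not containing the origin, and its base is $K^{*} \cap z^{*}$. Hence $(\dcap{K}{z})^{*} = \base(\pcap{K^{*}}{z^{*}})$, as claimed.

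The only step that requires any care is checking that the ``origin side'' in rule (ii) is the correct side for every $h$ in the dual cap; this is the place where the hypothesis that $O$ lies in the interior of $K$ (so that $O$ and $K$ can never be separated by a non-crossing hyperplane) is essential. Everything else is bookkeeping with the polar correspondence.
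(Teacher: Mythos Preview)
Your proof is correct and follows essentially the same approach as the paper. The paper's own argument is the single sentence preceding the lemma (``By standard properties of the polar transformation, a hyperplane $h \in \dcap{K}{z}$ if and only if the point $h^*$ lies on $K^* \cap z^*$''), and you have simply unpacked those ``standard properties'' into the two incidence rules (i) and (ii) and verified the side condition about the origin; there is no substantive difference.
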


Another useful concept involves cones induced by external points. A convex body $K$ and a point $z \not\in K$ naturally define two infinite convex cones. The \emph{inner cone}, denoted $\icone{K}{z}$, is the intersection of all the halfspaces that contain $K$ whose bounding hyperplanes pass through $z$ (see Figure~\ref{f:dualcaps2}). Equivalently, $\icone{K}{z}$ is the set of points $p$ such that the ray $z p$ intersects $K$. The \emph{outer cone}, denoted $\ocone{K}{z}$, is defined analogously as the intersection of halfspaces passing through $z$ that do not contain any point of $K$ (see Figure \ref{f:cone}). It is easy to see that $\ocone{K}{z}$ is the reflection of $\icone{K}{z}$ about $z$. The following lemma shows that membership in the outer cone and containment of caps are related through duality.

\begin{figure}[htbp]
\centering
\includegraphics[scale=.8]{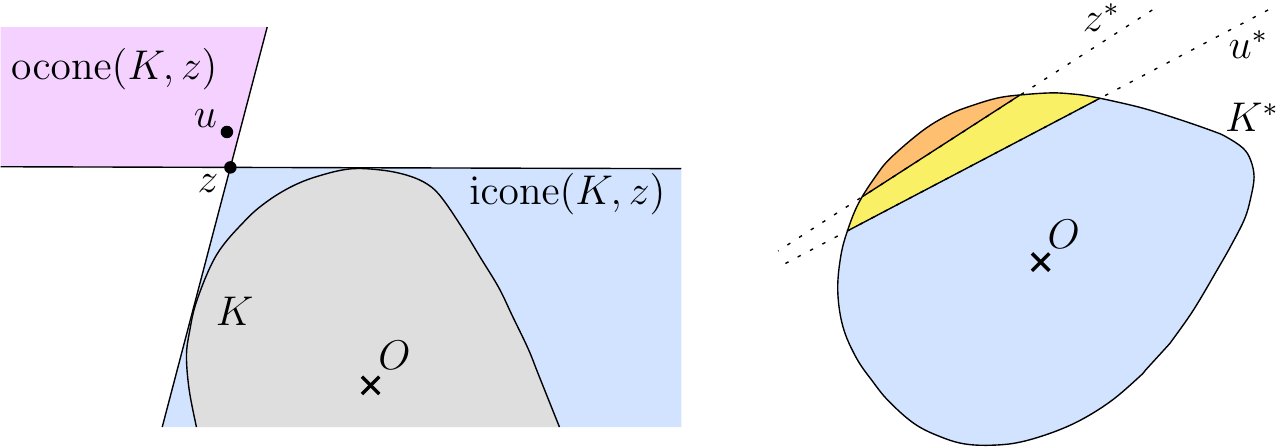}
\caption{\label{f:cone} Inner and outer cones.}
\end{figure}

\begin{lemma} \label{lem:ocone}
Let $K$ be a convex body with the origin $O$ in its interior. Then $u \in \ocone{K}{z}$ if and only if $\pcap{K^*}{z^*} \subseteq \pcap{K^*}{u^*}$.
\end{lemma}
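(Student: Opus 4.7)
The plan is to translate both sides of the equivalence into concrete inner-product inequalities via the polar transformation, and then verify each implication essentially by unpacking definitions (plus a scaling trick in one direction).

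First, I would rewrite the cap $\pcap{K^*}{z^*}$ explicitly. Since $z^*$ is the hyperplane $\{v : \langle v, z\rangle = 1\}$ and the cap is on the side of $z^*$ opposite the origin, we have $\pcap{K^*}{z^*} = \{v \in K^* : \langle v, z\rangle \geq 1\}$, and similarly $\pcap{K^*}{u^*} = \{v \in K^* : \langle v, u\rangle \geq 1\}$. Hence the right-hand side of the equivalence reads: for every $v \in K^*$ with $\langle v, z\rangle \geq 1$, we have $\langle v, u\rangle \geq 1$. Next, I would recast $u \in \ocone{K}{z}$. A hyperplane $h \in \dcap{K}{z}$ can be written as $\{x : \langle w, x\rangle = 1\}$ with $w = h^*$, and by Lemma~\ref{lem:polardcap} the admissible $w$ are exactly the points of $\base(\pcap{K^*}{z^*}) = K^* \cap z^*$, i.e.\ those $w \in K^*$ satisfying $\langle w, z\rangle = 1$. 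For such $w$ the closed halfspace through $z$ disjoint from the interior of $K$ is $\{x : \langle w, x\rangle \geq 1\}$, so $u \in \ocone{K}{z}$ iff $\langle w, u\rangle \geq 1$ for every $w \in K^* \cap z^*$.

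The backward direction is then immediate: if $\pcap{K^*}{z^*} \subseteq \pcap{K^*}{u^*}$, then any $w \in K^* \cap z^*$ lies in $\pcap{K^*}{z^*}$ (because $\langle w,z\rangle = 1$) and hence in $\pcap{K^*}{u^*}$, giving $\langle w,u\rangle \geq 1$; by the reformulation above this is exactly $u \in \ocone{K}{z}$.

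For the forward direction, the slightly subtle point is that an arbitrary $v \in \pcap{K^*}{z^*}$ need not satisfy $\langle v,z\rangle = 1$, so the dual hyperplane $v^*$ is not an element of $\dcap{K}{z}$. The remedy is to scale: set $\lambda = 1/\langle v,z\rangle \in (0,1]$ and $w = \lambda v$. Since $K^*$ is convex and contains the origin, $w \in K^*$; and $\langle w, z\rangle = 1$, so $w \in K^* \cap z^*$. Applying the hypothesis $u \in \ocone{K}{z}$ to $w$ gives $\langle w,u\rangle \geq 1$, and multiplying through by $1/\lambda = \langle v,z\rangle$ yields $\langle v,u\rangle \geq \langle v,z\rangle \geq 1$, so $v \in \pcap{K^*}{u^*}$. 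The main obstacle, such as it is, is precisely this scaling step: one must notice that containment of the full cap reduces, via the star-shapedness of $K^*$ about the origin, to containment of its base, which is what the outer cone directly controls.
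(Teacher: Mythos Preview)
Your proof is correct and follows essentially the same polar-duality idea as the paper. The one difference worth noting is that the paper characterizes $u \in \ocone{K}{z}$ as ``every hyperplane separating $z$ from $K$ also separates $u$ from $K$,'' which already ranges over the \emph{full} cap $\pcap{K^*}{z^*}$ and makes the equivalence immediate; you instead work from the literal definition via hyperplanes \emph{through} $z$ (i.e., the base $K^* \cap z^*$) and then use the scaling step $w = v/\langle v,z\rangle$ to pass from the base to the full cap---this is a slightly more explicit version of the same argument.
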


\begin{proof}
By definition, $u \in \ocone{K}{z}$ if and only if any hyperplane $h$ that separates $z$ from $K$ also separates $u$ from $K$. Also, by standard properties of the polar transformation, a hyperplane $h$ separates $z$ from $K$ if and only if the point $h^* \in \pcap{K^*}{z^*}$. Similarly, hyperplane $h$ separates $u$ from $K$ if and only if the point $h^* \in \pcap{K^*}{u^*}$. Thus, the condition $u \in \ocone{K}{z}$ is equivalent to the condition $\pcap{K^*}{z^*} \subseteq \pcap{K^*}{u^*}$.
\end{proof}

\subsection{Macbeath Regions}

Given a convex body $K$ and a point $x \in K$, and a scaling factor $\lambda > 0$, the \emph{Macbeath region} $M_K^\lambda(x)$ is defined as
\[
    M_K^\lambda(x) ~= ~ x + \lambda ((K - x) \cap (x - K)).
\]
It is easy to see that $M_K^1(x)$ is the intersection of $K$ with the reflection of $K$ around $x$, and so $M_K^1(x)$ is centrally symmetric about $x$. Indeed, it is the largest centrally symmetric body centered at $x$ and contained in $K$. Furthermore, $M_K^\lambda(x)$ is a copy of $M_K^1(x)$ scaled by the factor $\lambda$ about the center $x$ (see the right side of Figure~\ref{f:mahlermac}). We will omit the subscript $K$ when the convex body is clear from the context. As a convenience, we define $M(x) = M^1(x)$.

We now present lemmas that encapsulate standard properties of Macbeath regions. The first lemma implies that a (shrunken) Macbeath region can act as a proxy for any other (shrunken) Macbeath region overlapping it~\cite{ELR70,BCP93}. Our version uses different parameters and is proved in~\cite{AFM17a} (Lemma~2.4).

\begin{lemma} \label{lem:mac-mac}
Let $K$ be a convex body and let $\lambda \le \frac{1}{5}$ be any real. If $x, y \in K$ such that $M^{\lambda}(x) \cap M^{\lambda}(y) \neq \emptyset$, then $M^{\lambda}(y) \subseteq M^{4\lambda}(x)$.
\end{lemma}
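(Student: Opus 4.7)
The plan is to fix an arbitrary $w \in M^\lambda(y)$ and show that the displacement from $x$ to $w$, divided by $4\lambda$, lies in the centrally symmetric set $(K-x)\cap(x-K)$; this is exactly the condition $w \in M^{4\lambda}(x)$.

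First I would encode the hypothesis $z \in M^\lambda(x) \cap M^\lambda(y)$ via the two ``symmetry vectors'' $v_x = (z-x)/\lambda$ and $v_y = (z-y)/\lambda$, so that $x \pm v_x$ and $y \pm v_y$ all lie in $K$. Equating the two expressions $z = x + \lambda v_x = y + \lambda v_y$ yields the crucial identity
\[
    y - x ~=~ \lambda(v_x - v_y),
\]
which is what links the two Macbeath regions together. Next, writing $w = y + \lambda u$ with $y \pm u \in K$, a one-line calculation using this identity gives
\[
    \frac{w-x}{4\lambda} ~=~ \frac{u + v_x - v_y}{4},
\]
so the goal reduces to showing that the two points $x \pm \tfrac{1}{4}(u + v_x - v_y)$ both lie in $K$.

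The heart of the argument, and the step I expect to be the main obstacle, is exhibiting an explicit convex combination of \emph{known} $K$-points that equals $x + \tfrac{1}{4}(u + v_x - v_y)$. The only points of $K$ at my disposal are $x$, $x \pm v_x$, $y \pm v_y$, and $y \pm u$. A naive attempt to combine these is forced to assign a negative coefficient to $y$; the trick is to absorb the appearances of $y$ into $x$ using $y = x + \lambda(v_x - v_y)$. After doing so, I would verify the identity
\[
    x + \tfrac{1}{4}(u + v_x - v_y)
    ~=~ \tfrac{1+2\lambda}{4}\,x + \tfrac{1-2\lambda}{4}(x+v_x) + \tfrac{\lambda}{4}(y+v_y) + \tfrac{1}{4}(y+u) + \tfrac{1-\lambda}{4}(y-v_y),
\]
which follows by reading off the coefficients of $x$, $v_x$, $y$, $v_y$, $u$ on both sides and invoking the identity $y - x = \lambda(v_x - v_y)$. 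All five coefficients are nonnegative for $\lambda \le \tfrac{1}{2}$ (hence certainly for $\lambda \le \tfrac{1}{5}$) and sum to $1$, so the right-hand side is a convex combination of points of $K$ and therefore lies in $K$.

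The companion containment $x - \tfrac{1}{4}(u + v_x - v_y) \in K$ follows at once from the symmetry $u \mapsto -u$, $v_x \mapsto -v_x$, $v_y \mapsto -v_y$, which swaps each ``plus'' $K$-point with its ``minus'' counterpart while preserving the identity $y - x = \lambda(v_x - v_y)$. The two containments together give $(w-x)/(4\lambda) \in (K-x)\cap(x-K)$, so $w \in M^{4\lambda}(x)$; since $w \in M^\lambda(y)$ was arbitrary, the lemma follows. The hypothesis $\lambda \le \tfrac{1}{5}$ is comfortably sufficient and is likely imposed for consistency with neighboring results in the paper; this particular inclusion only requires $\lambda \le \tfrac{1}{2}$.
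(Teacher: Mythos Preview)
Your convex combination for $x + \tfrac{1}{4}(u+v_x-v_y) \in K$ is correct and indeed works for any $\lambda \le \tfrac{1}{2}$. The gap is in the companion containment. The substitution $u \mapsto -u$, $v_x \mapsto -v_x$, $v_y \mapsto -v_y$ does \emph{not} preserve the identity $y-x = \lambda(v_x - v_y)$; it flips the sign of the right-hand side. If you apply your five-term formula with the sign-flipped vectors and then simplify using the \emph{actual} identity $y - x = \lambda(v_x-v_y)$, you obtain $x + \tfrac{4\lambda-1}{4}v_x + \tfrac{1-4\lambda}{4}v_y - \tfrac{1}{4}u$, which is not the desired point $x - \tfrac{1}{4}(u+v_x-v_y)$.

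This is not a cosmetic slip: the hypothesis $\lambda \le \tfrac{1}{5}$ is genuinely required, so your closing remark is wrong. For $K=[0,N]$ with $N$ large and $0<x<y\ll N$ one has $M^\lambda(t) = [(1-\lambda)t,(1+\lambda)t]$; taking $y/x = (1+\lambda)/(1-\lambda)$ makes the two $\lambda$-regions touch, and then the right endpoint $(1+\lambda)y$ of $M^\lambda(y)$ lies in $M^{4\lambda}(x)=[(1-4\lambda)x,(1+4\lambda)x]$ iff $(1+\lambda)^2 \le (1+4\lambda)(1-\lambda)$, which simplifies to $5\lambda^2 \le \lambda$, i.e.\ $\lambda \le \tfrac{1}{5}$. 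A combination that does establish the second containment is
\[
x - \tfrac{1}{4}(u+v_x-v_y) ~=~ \tfrac{1-5\lambda}{4(1-\lambda)}\,x \;+\; \tfrac{1+\lambda}{4(1-\lambda)}\,(x-v_x) \;+\; \tfrac{1+\lambda}{4(1-\lambda)}\,(y+v_y) \;+\; \tfrac{1}{4}\,(y-u),
\]
and its first coefficient is nonnegative precisely when $\lambda \le \tfrac{1}{5}$. (The paper itself does not supply a proof here; it cites Lemma~2.4 of~\cite{AFM17a}.)
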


The following lemmas are useful in situations when we know that a Macbeath region overlaps a cap of $K$, and allow us to conclude that a constant factor expansion of the cap will fully contain the Macbeath region. The first applies to shrunken Macbeath regions and the second to Macbeath regions with any scaling factor. The proof of the first appears in~\cite{AFM17c} (Lemma~2.5), and the second is an immediate consequence of the definition of Macbeath regions.

\begin{lemma} \label{lem:mac-cap}
Let $K$ be a convex body. Let $C$ be a cap of $K$ and $x$ be a point in $K$ such that $C \cap M^{1/5}(x) \neq \emptyset$. Then $M^{1/5}(x) \subseteq C^2$.
\end{lemma}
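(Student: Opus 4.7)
The plan is to project everything onto the direction normal to the base of $C$ and reduce the claim to a one-dimensional inequality. Let $h$ be the hyperplane bounding the base of $C$, and let $h_0$ be the parallel supporting hyperplane at $C$'s apex. Let $n$ be the unit outward normal to $h$ pointing from base to apex, and for any $p\in\RE^n$ write $\pi(p) = \ang{n,p}$. If $a = \pi(h)$ and $w$ is the absolute width of $C$, then $K$ lies in the slab $\pi \leq a+w$, the cap $C$ is the region $\{p \in K \ST \pi(p) \geq a\}$, and the cap $C^2$ (if not already equal to $K$) is the region $\{p\in K \ST \pi(p) \geq a - w\}$. Since $M^{1/5}(x) \subseteq M^1(x) \subseteq K$, it suffices to prove that every $y \in M^{1/5}(x)$ satisfies $\pi(y) \geq a-w$; the case where $C^2 = K$ is then automatic.

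Next I would extract a lower bound on $\pi(x)$ from the hypothesis that $M^{1/5}(x)$ meets $C$. Fix a witness $z \in M^{1/5}(x) \cap C$, and write $z = x + \tfrac{1}{5}v_0$ with $v_0 \in (K-x)\cap(x-K)$. On one hand $\pi(z) \geq a$, so $\pi(x) + \tfrac{1}{5}\pi(v_0) \geq a$. On the other hand $x+v_0 \in K$ gives $\pi(v_0) \leq a+w-\pi(x)$. Substituting the second bound into the first and simplifying yields a clean lower bound of the form $\pi(x) \geq a - w/4$.

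Now I would handle an arbitrary $y \in M^{1/5}(x)$. Write $y = x + \tfrac{1}{5}v$ with $v\in(K-x)\cap(x-K)$; in particular $x - v \in K$, so $\pi(x - v)\leq a + w$, equivalently $\pi(v) \geq \pi(x) - a - w$. Combining with the bound on $\pi(x)$ from the previous step,
\[
\pi(y) \;=\; \pi(x) + \tfrac{1}{5}\pi(v) \;\geq\; \tfrac{6}{5}\pi(x) - \tfrac{1}{5}(a+w) \;\geq\; \tfrac{6}{5}\bigl(a - \tfrac{w}{4}\bigr) - \tfrac{a+w}{5} \;=\; a - \tfrac{w}{2},
\]
which is comfortably at least $a-w$. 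Hence $y \in C^2$, completing the argument.

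The routine substitutions are the only real work; the one small subtlety to state carefully is the separate treatment of the case in which the expansion $C^2$ extends beyond $K$, where the conclusion reduces to the trivial inclusion $M^{1/5}(x) \subseteq K$. The constant $1/5$ in the hypothesis plays no special role beyond ensuring a nontrivial lower bound on $\pi(x)$; in fact the computation shows the stronger inclusion $M^{1/5}(x) \subseteq C^{3/2}$, but the weaker form $C^2$ is what is needed downstream.
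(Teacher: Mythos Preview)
Your argument is correct. The paper itself does not prove this lemma; it merely cites it as Lemma~2.5 of \cite{AFM17c}. Your one-dimensional projection onto the normal direction of the cap's base, followed by the two elementary inequalities bounding $\pi(x)$ from below and then $\pi(y)$ from below, is exactly the standard way to establish this fact, and your computation is clean. The sharper inclusion $M^{1/5}(x)\subseteq C^{3/2}$ you note at the end is also correct (more generally, your argument gives $M^{\lambda}(x)\subseteq C^{1+2\lambda/(1-\lambda)}$ for any $\lambda<1/3$).

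Two minor cosmetic points: you reuse the symbol $n$ for the unit normal, which clashes with the paper's use of $n$ for the ambient dimension, so pick a different letter; and in the sentence ``Substituting the second bound into the first,'' it is worth making explicit that the substitution is valid because the coefficient $\tfrac{1}{5}$ of $\pi(v_0)$ is positive, so replacing $\pi(v_0)$ by its upper bound only weakens the inequality $\pi(z)\ge a$.
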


\begin{lemma} \label{lem:mac-cap-var}
Let $K$ be a convex body and $\lambda > 0$. If $x$ is a point in a cap $C$ of $K$, then $M^\lambda(x) \cap K \subseteq C^{1+\lambda}$.
\end{lemma}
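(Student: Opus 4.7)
The plan is to unpack the definition of the Macbeath region, rewrite a point $y \in M^\lambda(x) \cap K$ as an affine combination of $x$ and a point of $K$, and then test the defining inequality of $C^{1+\lambda}$ by projecting along the normal direction to $C$'s base.

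Concretely, let $h_1$ be the hyperplane cutting off $C$ and let $h_0$ be the supporting hyperplane of $K$ parallel to $h_1$ at the apex. Choose an outward unit normal $u$ so that $K \subseteq \{p : \ang{p,u} \leq a_0\}$ and $C = \{p \in K : \ang{p,u} \geq a_1\}$, with absolute width $w = a_0 - a_1 \geq 0$. Then by definition $C^{1+\lambda} = \{p \in K : \ang{p,u} \geq a_0 - (1+\lambda)w\} = \{p \in K : \ang{p,u} \geq (1+\lambda)a_1 - \lambda a_0\}$.

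Now take any $y \in M^\lambda(x) \cap K$. By the definition $M^\lambda(x) = x + \lambda((K-x) \cap (x-K))$, the vector $(y-x)/\lambda$ lies in $x - K$, so we can write $y = (1+\lambda)x - \lambda k$ for some $k \in K$. Since $x \in C$ we have $\ang{x,u} \geq a_1$, and since $k \in K$ we have $\ang{k,u} \leq a_0$. Combining,
\[
    \ang{y,u} ~=~ (1+\lambda)\ang{x,u} - \lambda \ang{k,u} ~\geq~ (1+\lambda)a_1 - \lambda a_0.
\]
Together with $y \in K$, this places $y$ in $C^{1+\lambda}$, completing the proof.

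There is essentially no obstacle here: the only subtlety is picking the right one of the two ways to represent a Macbeath-region point (namely $y = (1+\lambda)x - \lambda k$ with $k \in K$, rather than $y = (1-\lambda)x + \lambda k'$), so that the inequality pushes $y$ in the direction of the apex as required by the expanded cap $C^{1+\lambda}$.
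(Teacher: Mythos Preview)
Your proof is correct and is precisely the argument the paper has in mind: the paper does not spell out a proof but simply notes that the lemma ``is an immediate consequence of the definition of Macbeath regions,'' and your computation---writing $y=(1+\lambda)x-\lambda k$ from the $x-K$ half of the Macbeath intersection and comparing $\ang{y,u}$ to the threshold $(1+\lambda)a_1-\lambda a_0$---is exactly that immediate consequence made explicit.
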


Points in a shrunken Macbeath region are similar in many respects. For example, they have similar ray distances.

\begin{lemma}
\label{lem:core-ray}
Let $K$ be a convex body. If $x$ is a $\frac{1}{2}$-shallow point in $K$ and $y \in M^{1/5}(x)$, then $\ray(x)/2 \leq \ray(y) \leq 2 \ray(x)$.
\end{lemma}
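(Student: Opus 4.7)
The plan is to reduce ray distances to cap widths via Lemma~\ref{lem:min-width-cap}, then confine $x$ and $y$ inside each other's expanded minimum-width cap using Lemma~\ref{lem:mac-cap-var}, and read off the ray-distance inequalities from Lemma~\ref{lem:raydist-width}. The $\frac{1}{2}$-shallow hypothesis is used only to keep the expanded caps strictly away from the origin, so that Lemma~\ref{lem:raydist-width} applies.

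For the upper bound, Lemma~\ref{lem:min-width-cap} provides a cap $C_x$ of $K$ containing $x$ whose width is exactly $\ray(x)$. Lemma~\ref{lem:mac-cap-var} with $\lambda = 1/5$ shows $M^{1/5}(x) \cap K \subseteq C_x^{6/5}$, so $y \in C_x^{6/5}$. Since $\width(C_x^{6/5}) = (6/5)\,\ray(x) \leq 3/5 < 1$, this cap does not contain the origin, so Lemma~\ref{lem:raydist-width} gives $\ray(y) \leq (6/5)\,\ray(x) \leq 2\,\ray(x)$.

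The lower bound requires reversing the roles of $x$ and $y$, but the relation ``lies in the $1/5$-shrunken Macbeath region of'' is not symmetric. The fix is to use Lemma~\ref{lem:mac-mac}: since $y$ belongs to both $M^{1/5}(x)$ and $M^{1/5}(y)$, one obtains $M^{1/5}(x) \subseteq M^{4/5}(y)$, so in particular $x \in M^{4/5}(y)$. If $\ray(y) \geq 1/2$ there is nothing to prove, since then $\ray(y) \geq \ray(x) \geq \ray(x)/2$; so assume $\ray(y) < 1/2$. Then Lemma~\ref{lem:min-width-cap} supplies a cap $C_y \ni y$ of width $\ray(y)$, and Lemma~\ref{lem:mac-cap-var} with $\lambda = 4/5$ places $x$ inside $C_y^{9/5}$. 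Since $(9/5)\,\ray(y) < 9/10 < 1$, this expanded cap avoids the origin, Lemma~\ref{lem:raydist-width} gives $\ray(x) \leq (9/5)\,\ray(y)$, and hence $\ray(y) \geq (5/9)\,\ray(x) \geq \ray(x)/2$.

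The main obstacle is the factor-$4$ loss in passing from $M^{1/5}(x)$ to $M^{4/5}(y)$ via Lemma~\ref{lem:mac-mac}: the corresponding expansion ratio of $C_y$ jumps from $6/5$ to $9/5$, leaving little slack before the expanded cap would swallow the origin. Tracking this carefully is exactly what motivates the case split on $\ray(y)$ in the lower-bound argument, and the tight constants explain why the $\frac{1}{2}$-shallow hypothesis on $x$ cannot be dispensed with by this method.
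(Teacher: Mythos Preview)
Your proof is correct, but the route differs from the paper's in one notable way. For the upper bound both arguments are essentially the same (you use Lemma~\ref{lem:mac-cap-var} with $\lambda=1/5$ to get the sharper expansion $C_x^{6/5}$, whereas the paper uses Lemma~\ref{lem:mac-cap} to land in $C_x^{2}$). For the lower bound, however, the paper avoids Lemma~\ref{lem:mac-mac} entirely: since $y \in C_y$ and $y \in M^{1/5}(x)$, the cap $C_y$ already meets $M^{1/5}(x)$, so Lemma~\ref{lem:mac-cap} applied with $C=C_y$ and center $x$ immediately yields $M^{1/5}(x)\subseteq C_y^{2}$ and hence $x\in C_y^{2}$. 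Your detour through Lemma~\ref{lem:mac-mac} to obtain $x\in M^{4/5}(y)$ and then Lemma~\ref{lem:mac-cap-var} is valid, but it is the step that forces the $9/5$ expansion and the delicate slack you remark on; the paper's direct use of Lemma~\ref{lem:mac-cap} sidesteps this. On the other hand, your approach yields the sharper inequalities $\ray(y)\le (6/5)\ray(x)$ and $\ray(y)\ge (5/9)\ray(x)$, which the paper's argument does not.
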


\begin{proof}
Let $C_x$ denote the minimum width cap for $x$. By Lemma~\ref{lem:min-width-cap}, $\width(C_x) = \ray(x)$. Also, by Lemma~\ref{lem:mac-cap}, we have $M^{1/5}(x) \subseteq C_x^2$ and so $y \in C_x^2$. It follows from Lemma~\ref{lem:raydist-width} that $\ray(y) \leq \width(C_x^2) = 2 \width(C_x)$. Thus $\ray(y) \leq 2 \ray(x)$, which proves the second inequality. To prove the first inequality, note that this follows trivially unless $\ray(y) \leq \frac{1}{4}$ (since $\ray(x) \leq \frac{1}{2}$). If $\ray(y) \leq \frac{1}{4}$, consider the minimum width cap $C_y$ for $y$. By Lemma~\ref{lem:min-width-cap}, $\width(C_y) = \ray(y)$. Also, by Lemma~\ref{lem:mac-cap}, we have $M^{1/5}(x) \subseteq C_y^2$ and so $x \in C_y^2$. It follows from Lemma~\ref{lem:raydist-width} that $\ray(x) \leq \width(C_y^2) = 2 \width(C_y)$. Thus $\ray(x) \leq 2 \ray(y)$, which completes the proof.
\end{proof}

The remaining lemmas in this section relate caps with the associated Macbeath regions.

\begin{lemma}[B{\'a}r{\'a}ny~\cite{Bar07}] \label{lem:min-vol-cap1}
Given a convex body $K \subseteq \RE^n$, let $C$ be a $\frac{1}{3}$-shallow cap of $K$, and let $p$ be the centroid of $\base(C)$. Then $C \subseteq M^{2n}(p)$.
\end{lemma}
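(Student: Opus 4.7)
The goal is to show that each $q \in C$ lies in $M^{2n}(p)$, i.e., that both $p + (q-p)/(2n)$ and $p - (q-p)/(2n)$ lie in $K$. After translating coordinates so that $p = 0$ and the hyperplane $H$ containing $\base(C)$ is $\{x_n = 0\}$, with $C \subseteq \{x_n \ge 0\}$, this reduces to showing $\pm q/(2n) \in K$ for every $q = (u, \tau) \in \RE^{n-1} \times [0, w]$, where $w$ is the absolute width of $C$. The ``$+$'' direction is immediate by convexity: $q/(2n)$ is a convex combination of $p \in K$ and $q \in K$. The substantive direction is ``$-$'', which will require both the shallowness hypothesis and the centroid property of $p$.

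For the ``$-$'' direction, I would first use $\frac{1}{3}$-shallowness to locate a point of $K$ deep below $H$: the body's origin $O$ satisfies $\dist(O,H) \ge 2w$, since $\width(C) = w/W \le 1/3$ forces $W - w \ge 2w$. Writing $O = (z', z_n)$ in our coordinates, this gives $|z_n| \ge 2w$. The segment from $q$ to $O$ lies in $K$ by convexity and pierces $H$ at a point $(u_H, 0) \in K \cap H = \base(C)$, where a direct calculation shows $u = u_H + (\tau/|z_n|)(u_H - z')$.

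I would then exploit the centroid property of $p$. Since $p = 0$ is the centroid of the $(n-1)$-dimensional convex body $\base(C)$, Minkowski's classical chord-ratio inequality gives $-\base(C)/(n-1) \subseteq \base(C)$; combined with $0 \in \base(C)$ and convexity, this yields $-\lambda u_H \in \base(C)$ for every $\lambda \in [0, 1/(n-1)]$. To combine the two ingredients, seek $s \in [0,1]$ and $b \in \base(C)$ with $-q/(2n) = (1-s)(b,0) + s\,O$. Matching the $x_n$-coordinate forces $s = \tau/(2n|z_n|) \le 1/(4n)$; substituting the formula for $u$ from the previous step into the horizontal equation cancels the $z'$-terms and yields $b = -\lambda u_H$ with $\lambda = (1/(2n) + s)/(1 - s) \le 3/(4n-1) \le 1/(n-1)$. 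Therefore $b \in \base(C) \subseteq K$, and $-q/(2n) \in K$ follows by convexity of $K$.

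The main obstacle is the final algebraic check that $\lambda \le 1/(n-1)$. This is exactly where $\frac{1}{3}$-shallowness earns its keep: the bound $|z_n| \ge 2w$ is precisely what makes $s \le 1/(4n)$ and thus $\lambda \le 3/(4n-1) \le 1/(n-1)$, bringing us just under the threshold where the centroid inequality applies. Weakening the shallowness constant, or attempting a scaling sharper than $2n$ in the Macbeath region, would break this step.
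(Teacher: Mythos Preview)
Your argument is correct. The paper does not prove this lemma; it states it with a citation to B\'ar\'any~\cite{Bar07}. You have supplied a clean self-contained proof: the ``$+$'' direction is trivial convexity, and for the ``$-$'' direction you combine Minkowski's centroid inequality on $\base(C)$ (in dimension $n-1$, giving the factor $1/(n-1)$) with the existence of the origin $O$ at vertical depth at least $2w$ below the base hyperplane, then write $-q/(2n)$ as a convex combination of $O$ and a point $-\lambda u_H$ in the base. The algebra in the final step is right: with $s=\tau/(2n|z_n|)\le 1/(4n)$ one gets $\lambda=(1/(2n)+s)/(1-s)\le 3/(4n-1)\le 1/(n-1)$ for all $n\ge 2$, which is exactly the threshold the centroid inequality allows. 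The one-dimensional case $n=1$ is degenerate (the base is a point) and is easily handled separately.

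Two minor remarks. First, you are tacitly using that the cap does not contain the origin; this is fine, since the paper's notion of relative width (and hence of $\alpha$-shallowness) is only defined for such caps. Second, in writing ``$-\lambda u_H\in\base(C)$'' you are identifying $\base(C)$ with its first $n-1$ coordinates inside $H=\{x_n=0\}$; this is clear from context but worth saying explicitly in a final write-up.
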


\begin{lemma}
\label{lem:wide-cap}
Let $0 < \beta < 1$ be any constant. Let $K \subseteq \RE^n$ be a well-centered convex body, $p \in K$, and $C$ be the minimum volume cap associated with $p$. If $C$ contains the origin or $\width(C) \geq \beta$, then $\vol_K(M(p)) \geq 2^{-O(n)}$.
\end{lemma}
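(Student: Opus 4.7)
The plan is to reduce the claim to a lower bound on $\vol_K(C)$ and then invoke the equivalence of centrality properties in Lemma~\ref{lem:centroid}. The key observation is that the minimum volume cap through $p$ equals $\min_{h \ni p} \min_{\pm} \vol(K \cap h^{\pm})$, which is precisely the quantity controlling the Winternitz property at $p$ (in the role of the origin). Hence, if we can prove $\vol_K(C) \geq 2^{-O(n)}$, then the translated body $K - p$ satisfies Winternitz about $O$, and Lemma~\ref{lem:centroid} applied to $K - p$ yields Kovner-Besicovitch: $\vol((K - p) \cap (p - K)) \geq 2^{-O(n)} \vol(K)$. Since $M(p) = p + ((K - p) \cap (p - K))$ by definition, this is exactly $\vol_K(M(p)) \geq 2^{-O(n)}$.

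For Case 1, where $O \in C$, write $C = K \cap h^+$ with $O \in h^+$, and let $h'$ be the hyperplane parallel to $h$ passing through $O$, with $(h')^+$ the halfspace on the same side as $h^+$. Since $h$ is parallel to $h'$ and $O \in h^+$, we have $(h')^+ \subseteq h^+$, so $C \supseteq K \cap (h')^+$. The latter is a halfspace cap whose bounding hyperplane passes through $O$, so Winternitz applied to $K$ at $O$ (which holds by hypothesis via Lemma~\ref{lem:centroid}) gives $\vol(K \cap (h')^+) \geq 2^{-O(n)} \vol(K)$, as required.

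For Case 2, where $O \notin C$ and $\width(C) \geq \beta$, write $C = K \cap \{u \cdot x \geq c\}$ with $u$ the unit outward normal to the base and $d = \max_{x \in K} u \cdot x$. The width condition combined with $O \notin C$ yields $0 < c \leq (1 - \beta) d$, so $C$ contains the cap $C' = K \cap \{u \cdot x \geq (1 - \beta) d\}$. Pick any apex $p^{*} \in K$ with $u \cdot p^{*} = d$ and consider the affine contraction $T_{\beta}(x) = (1 - \beta) p^{*} + \beta x$. By convexity $T_{\beta}(K) \subseteq K$, and a direct computation $u \cdot T_{\beta}(x) = (1 - \beta) d + \beta (u \cdot x)$ shows that $T_{\beta}$ sends the halfspace cap $H = K \cap \{u \cdot x \geq 0\}$ into $C'$. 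Hence $\vol(C') \geq \beta^{n} \vol(H) \geq 2^{-O(n)} \vol(K)$, using Winternitz at $O$ for $H$ and absorbing the constant factor $\beta^{n}$ into $2^{-O(n)}$.

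The main conceptual step is the reduction to Winternitz at $p$ via Lemma~\ref{lem:centroid}; once that is in place, both cases reduce to lower-bounding a cap that contains a halfspace cap through $O$, whose volume is controlled by Winternitz at $O$. The only geometric subtlety is the affine contraction argument in Case 2, which exhibits a $\beta$-scaled copy of a halfspace cap inside any $\beta$-wide cap at the acceptable multiplicative cost of $\beta^{n} = 2^{-O(n)}$.
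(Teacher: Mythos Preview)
Your proof is correct and follows essentially the same architecture as the paper's: both first establish $\vol_K(C) \geq 2^{-O(n)}$ (i.e., Winternitz at $p$) and then invoke Lemma~\ref{lem:centroid} to pass to Kovner--Besicovitch at $p$, which is exactly the Macbeath volume bound. The only difference is in Case~2: the paper expands $C$ by factor $1/\beta$ so that $C^{1/\beta}$ contains the origin and then applies Lemma~\ref{lem:cap-exp} to get $\vol(C) \geq \beta^n \vol(C^{1/\beta})$, whereas you run the same scaling in the opposite direction, contracting the origin-halfspace cap $H$ toward an apex via $T_\beta$ to exhibit a $\beta^n$-fraction of $H$ inside $C$; these are two formulations of the same inequality and yield the same constant.
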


\begin{proof}
We claim that $K$ satisfies the Winternitz property with respect to $p$. Note this is equivalent to the claim that $\vol_K(C) \geq 2^{-O(n)}$. 

We consider two cases. First, suppose that $C$ contains the origin. Since $K$ is well-centered, by Lemma~\ref{lem:centroid}, $K$ satisfies the Winternitz property with respect to the origin. It follows that $\vol_K(C) \geq 2^{-O(n)}$. Otherwise, if $C$ does not contain the origin, then since the width of $C$ is at least $\beta$, the expanded cap $C^{1/\beta}$ contains the origin. By Lemma~\ref{lem:cap-exp}, $\vol(C^{1/\beta}) \leq 2^{O(n)} \vol(C)$. Again, using the fact that $K$ satisfies the Winternitz property with respect to the origin, we have $\vol_K(C^{1/\beta}) \geq 2^{-O(n)}$. Thus, in both cases, $\vol_K(C) \geq 2^{-O(n)}$, which proves the claim. 

Since $K$ satisfies the Winternitz property with respect to $p$, by Lemma~\ref{lem:centroid}, it must satisfy the Kovner-Besicovitch property with respect to $p$. Thus $\vol_K(M(p)) = \vol_K((K-p) \cap (p-K)) \geq 2^{-O(n)}$, as desired.
\end{proof}

\begin{lemma}
\label{lem:min-vol-cap2}
Given a convex body $K \subseteq \RE^n$, let $C$ be a $\frac{1}{3}$-shallow cap of $K$, and let $p$ be the centroid of $\base(C)$. We have
\[
    2^{-O(n)} \cdot \vol(C) 
        ~ \leq ~ \vol(M(p)) 
        ~ \leq ~ 2 \cdot \vol(C).
\]
\end{lemma}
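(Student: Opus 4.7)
For the upper bound, observe that $M(p)$ is centrally symmetric about $p$ and $p$ lies on the hyperplane $h_1$ bounding the base of $C$. Hence $h_1$ splits $M(p)$ into two halves of equal volume, and the half on the cap side satisfies $M(p)\cap h_1^+ \subseteq K\cap h_1^+ = C$, giving $\vol(M(p)) = 2\,\vol(M(p)\cap h_1^+) \le 2\vol(C)$.

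For the lower bound, I would exhibit a bipyramid inside $M(p)$ of volume at least $2^{-O(n)}\vol(C)$. Working in coordinates centered at $p$ with the $x_1$-axis normal to $h_1$, let $B' = \base(C)\cap(-\base(C))$ be the central symmetrization of $\base(C)$ at its centroid $p$. Since $\base(C)$ is an $(n-1)$-dimensional convex body whose centroid is $p$, applying Lemma~\ref{lem:centroid} in dimension $n-1$ (the Kovner-Besicovitch property) yields $\area(B') \ge 2^{-O(n)}\,a$, where $a = \area(\base(C))$. Because $B'$ is centrally symmetric and contained in $K$, it is also contained in $-K$, hence $B' \subseteq K\cap(-K) = M(p)$.

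Next, let $A$ be any apex of $C$, so $A\in K$ at perpendicular distance $w$ from $h_1$ where $w$ is the absolute width of $C$. Since $C$ is $\tfrac{1}{3}$-shallow, Lemma~\ref{lem:min-vol-cap1} gives $A\in C\subseteq M^{2n}(p)$; in the shifted coordinates this means $q^+ := A/(2n)\in M(p)$. By the central symmetry of $M(p)$ about $p$, also $-q^+\in M(p)$. Convexity of $M(p)$ then ensures the bipyramid $B = \conv(B'\cup\{q^+,-q^+\})\subseteq M(p)$. Its two halves are pyramids with common base $B'$ and apex at perpendicular distance $w/(2n)$ from $h_1$, so
\[
  \vol(B) \;=\; \tfrac{2}{n}\,\area(B')\cdot\tfrac{w}{2n} \;=\; \tfrac{w\,\area(B')}{n^{2}} \;\ge\; 2^{-O(n)}\,\tfrac{aw}{n^{2}}.
\]
Combining with the cap-volume upper bound $\vol(C)\le 2^{n-1}aw$ from Lemma~\ref{lem:vol-cap} and absorbing the $n^{2}$ and $2^{n-1}$ factors into a single $2^{O(n)}$ yields $\vol(M(p))\ge\vol(B)\ge 2^{-O(n)}\vol(C)$.

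The main obstacle is producing an apex point $q^+\in M(p)$ whose perpendicular distance to $h_1$ is a nontrivial fraction of $w$; here Lemma~\ref{lem:min-vol-cap1} is the key ingredient, providing exactly such a point at distance $w/(2n)$, and the resulting $n^{O(1)}$ loss is irrelevant because the Kovner-Besicovitch bound on $\area(B')$ already absorbs polynomial factors into its $2^{O(n)}$ term.
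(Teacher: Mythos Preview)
Your proof is correct and follows essentially the same approach as the paper's. Both arguments establish the upper bound by noting that half of $M(p)$ lies in $C$, and both obtain the lower bound by (i) using the Kovner--Besicovitch property (Lemma~\ref{lem:centroid}) on $\base(C)$ about its centroid $p$ to get $\area(B')\ge 2^{-O(n)}a$, (ii) invoking Lemma~\ref{lem:min-vol-cap1} to place a point of $M(p)$ at perpendicular distance $w/(2n)$ from $h_1$, and (iii) comparing the resulting cone/bipyramid volume against the upper bound $\vol(C)\le 2^{n-1}aw$ from Lemma~\ref{lem:vol-cap}. The only cosmetic difference is that you build a bipyramid with apices $\pm q^+$ while the paper uses a single cone with apex $x'$ (the farthest point of $\overline{pA}$ in $M(p)$); the resulting factor of $2$ is immaterial.
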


\begin{proof}
The second inequality holds easily because half of $M(p)$ lies inside $C$. To prove the first inequality, let $B = \base(C)$, let $a = \area(B)$ denote its $(n-1)$-dimensional volume, and let $B' = M(p) \cap B$. Treating $p$ as the origin of the coordinate system, by definition of Macbeath regions, $B' = B \cap - B$. By applying Lemma~\ref{lem:centroid} (to the hyperplane containing $B$) we have $\area(B') \geq a / 2^{O(n)}$.

Let $x$ denote the apex of $C$, and let $x'$ be the farthest point on segment $\overline{p x}$ that is contained in $M(p)$. By Lemma~\ref{lem:min-vol-cap1}, $\|p x'\| \geq \|p x\| / 2 n$. By convexity, the generalized cone $P = \conv(B' \cup \{x'\})$ is contained within $M(p)$. Letting $w$ denote the absolute width of $C$, the height of this cone is at least $w/2 n$. Thus 
\[
    \vol(M(p)) 
        ~ \geq ~ \vol(P) 
        ~ \geq ~ \frac{\area(B') \cdot w/2 n}{n} 
        ~ \geq ~ \frac{(a / 2^{O(n)}) \cdot w/2 n}{n} 
        ~ =    ~ \frac{a w}{ n^2 2^{O(n)}}.
\]
By Lemma~\ref{lem:vol-cap}, $\vol(C) \leq 2^{n-1} a w$, and thus, 
\[
    \vol(M(p)) 
        ~ \geq ~ 2^{-O(n)} \cdot \vol(C),
\]
as desired.
\end{proof}

\begin{corollary}
\label{cor:min-vol-cap2}
Let $K \subseteq \RE^n$ be a convex body, $p \in K$, and $C$ be the minimum volume cap associated with $p$. We have
\[
    2^{-O(n)} \cdot \vol(C) 
        ~ \leq ~ \vol(M(p)) 
        ~ \leq ~ 2 \cdot \vol(C).
\]
\end{corollary}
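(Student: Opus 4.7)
The plan is to separate the proof into the upper and lower bounds, and to split the lower bound into a shallow case and a wide case.

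The upper bound $\vol(M(p)) \le 2\vol(C)$ holds for any cap $C$ whose base passes through $p$, with no shallowness needed. Since $M(p)$ is centrally symmetric about $p$ and the hyperplane bounding $\base(C)$ passes through $p$, that hyperplane bisects $M(p)$ into two congruent halves by volume. The half on the apex side lies in $K$ (since $M(p) \subseteq K$) and on the apex side of the base, so it lies in $C$. Therefore $\vol(C) \ge \tfrac{1}{2}\vol(M(p))$, which is exactly the stated upper bound.

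For the lower bound I first record the classical fact that for the minimum-volume cap $C$ associated with $p$, the point $p$ is the centroid of $\base(C)$. This is a first-order optimality condition: if one rotates the base hyperplane $h$ infinitesimally about a line $\ell \subset h$ through $p$, the first-order change in $\vol(C)$ equals (up to a constant) the signed moment $\int_{\base(C)} \dist(x,\ell)\,dx$, which vanishes for every such $\ell$ precisely when $p$ is the centroid of $\base(C)$. This centroid property is the one ingredient beyond the already-proved lemmas, and I expect it to be the step requiring the most care to write out rigorously; everything else is a direct assembly of prior results.

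Equipped with the centroid property, I case split on $\width(C)$. If $\width(C) \le \tfrac{1}{3}$, then $C$ is $\tfrac{1}{3}$-shallow and Lemma~\ref{lem:min-vol-cap2} applies verbatim, giving $\vol(M(p)) \ge 2^{-O(n)} \vol(C)$. If instead $\width(C) > \tfrac{1}{3}$, I invoke Lemma~\ref{lem:wide-cap} with $\beta = \tfrac{1}{3}$, using the paper's standing assumption that $K$ is well-centered about the origin, to obtain $\vol_K(M(p)) \ge 2^{-O(n)}$; combined with the trivial inclusion $C \subseteq K$, this yields $\vol(M(p)) \ge 2^{-O(n)} \vol(K) \ge 2^{-O(n)} \vol(C)$. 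Combining the two cases completes the lower bound and hence the corollary.
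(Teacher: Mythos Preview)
Your argument is essentially the same as the paper's: both prove the upper bound by the symmetry of $M(p)$ about the base hyperplane, invoke the classical fact that $p$ is the centroid of $\base(C)$ for a minimum-volume cap, and then split into a shallow case (Lemma~\ref{lem:min-vol-cap2}) and a wide case (Lemma~\ref{lem:wide-cap}).

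One point worth noting: the paper does not rely on the standing well-centeredness assumption here. Instead it writes ``treating the centroid of $K$ as the origin'' and then performs the case split, using that $K$ is always well-centered about its own centroid (Lemma~\ref{lem:centroid}). This has two small advantages over your version. First, it makes the corollary hold for an arbitrary convex body, as its statement suggests. Second, it cleanly handles the possibility that $C$ contains the origin (in which case $\width(C)$ is not defined in the paper's convention, so your dichotomy ``$\width(C)\le 1/3$ or $\width(C)>1/3$'' is not literally exhaustive); the paper's dichotomy is ``$C$ is $(1/3)$-shallow'' versus ``$C$ contains the origin or $\width(C)\ge 1/3$,'' which feeds directly into the hypothesis of Lemma~\ref{lem:wide-cap}.
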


\begin{proof}
The second inequality holds for the same reason as in Lemma~\ref{lem:min-vol-cap2}. To prove the first inequality, recall the well-known property of minimum volume caps that $p$ is the centroid of the base of its associated minimum volume cap~\cite{ELR70}. Treating the centroid of $K$ as the origin, we consider two cases. If $C$ is $(1/3)$-shallow, then the corollary follows from Lemma~\ref{lem:min-vol-cap2}. Otherwise, $C$ contains the origin or its width is at least $1/3$. Noting that $K$ is well-centered with respect to the centroid (Lemma~\ref{lem:centroid}) and applying Lemma~\ref{lem:wide-cap}, it follows that $\vol_K(M(p)) \ge 2^{-O(n)}$. That is, $\vol(M(p)) \geq 2^{-O(n)} \vol(K) \geq 2^{-O(n)} \vol(C)$, which completes the proof.
\end{proof}

\subsection{Similar Caps} 
\label{s:similar}

The Macbeath regions of a convex body $K$, and more specifically, its shrunken Macbeath regions, provide an affine-invariant notion of the closeness between points, through the property that both points lie within the same shrunken Macbeath region. We would like to define a similar affine-invariant notion of closeness between caps. We say that two caps $C_1$ and $C_2$ are \emph{$\lambda$-similar} for $\lambda \ge 1$, if $C_1 \subseteq C_2^{\lambda}$ and $C_2 \subseteq C_1^{\lambda}$ (see Figure~\ref{f:similar}(a)). If two caps are $\lambda$-similar for a constant $\lambda$, we say that the caps are \emph{similar}. 

\begin{figure}[htbp]
\centering
\includegraphics[scale=.8]{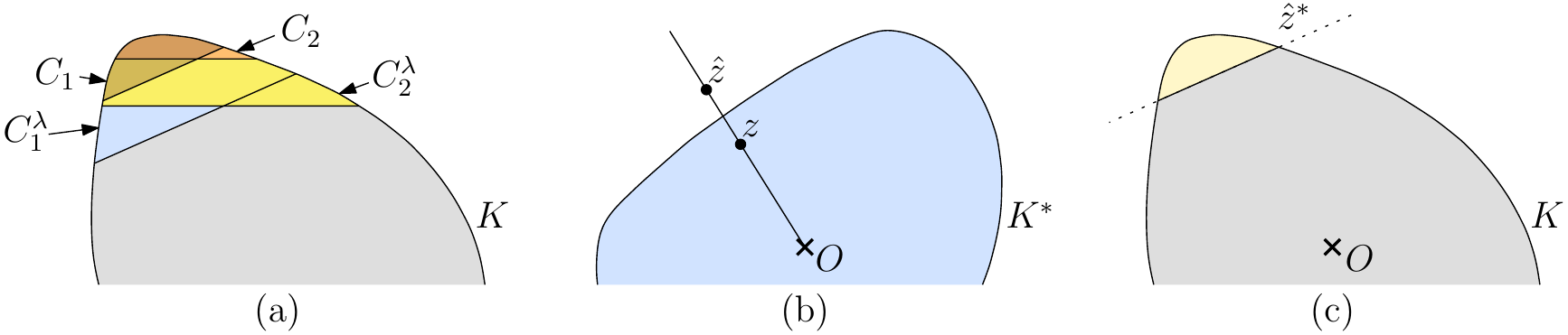}
\caption{\label{f:similar} Similar caps and $\eps$-representative caps.}
\end{figure}

It is natural to conjecture that these two notions of similarity are related through duality. In order to establish such a relationship consider the following mapping. Consider a point $z \in K^*$. Take a point $\hat{z} \not\in K^*$ on the ray $O z$ such that $\ray(\hat{z}) = \eps$ (see Figure~\ref{f:similar}(b)). The dual hyperplane $\hat{z}^*$ intersects $K$, and so induces a cap, which we call $z$'s \emph{$\eps$-representative cap} (see Figure~\ref{f:similar}(c)). The main result of this section is Lemma~\ref{lem:sandwich}, which shows that points lying within the same shrunken Macbeath region have similar representative caps. Before proving this, we begin with a technical lemma.

\begin{lemma} \label{lem:guarding}
Let $\alpha \le \frac{1}{8}$. Let $y \in K^*$ be an $\alpha$-shallow point. Consider two rays $r$ and $r'$ shot from the origin through $M^{1/5}(y)$ (see Figure \ref{f:guarding}). Let $z \not\in K^*$ be an $\alpha$-shallow point on $r$ and let $u \not\in K^*$ be a point on $r'$ such that $\ray(u) > 4 \ray(y) + 2 \ray(z)$. Then $\pcap{K}{z^*} \subseteq \pcap{K}{u^*}$.
\end{lemma}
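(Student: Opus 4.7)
The plan is to use the dual characterization of cap containment together with a key linear inequality extracted from the reflection symmetry of $M^{1/5}(y)$ about $y$. By Lemma~\ref{lem:ocone} applied with $K^*$ in place of $K$ (and using $(K^*)^* = K$), proving $\pcap{K}{z^*} \subseteq \pcap{K}{u^*}$ reduces to proving $u \in \ocone{K^*}{z}$. Unfolding the definition of the outer cone, this is equivalent to showing that for every vector $\nu$ with $\ang{\nu,z} > \sup_{x \in K^*} \ang{\nu,x}$, one also has $\ang{\nu,u} \ge \ang{\nu,z}$. I would argue by contradiction, assuming the existence of such a $\nu$ violating the conclusion. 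Since $O \in K^*$, the assumption $\ang{\nu,z} > \sup_{x \in K^*} \ang{\nu,x} \ge 0$ forces $\ang{\nu,v} > 0$, where $v$ is the unit direction of $r$ (so $z = z_0 v$ with $z_0 > 0$).

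Write also $u = u_0 v'$, and pick $\hat y_r = \rho_r v \in r \cap M^{1/5}(y)$ and $\hat y_{r'} = \rho_{r'} v' \in r' \cap M^{1/5}(y)$. By Lemma~\ref{lem:core-ray} applied in $K^*$, both $\rho_r/z_{bd}$ and $\rho_{r'}/u_{bd}$ are at least $1-2\ray(y)$, where $z_{bd} v, u_{bd} v' \in \bd K^*$; moreover $z_0 = z_{bd}/(1-\ray(z))$ and $u_0 = u_{bd}/(1-\ray(u))$. The main geometric ingredient is the central symmetry of $M^{1/5}(y)$ about $y$. Unpacking the Macbeath definition, $\hat y_r \in M^{1/5}(y)$ forces $5\hat y_r - 4y \in K^*$, and $\hat y_{r'} \in M^{1/5}(y)$ forces (via the reflection about $y$) $6y - 5\hat y_{r'} \in K^*$. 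Applying $\ang{\nu,\cdot} \le \sup_{x \in K^*}\ang{\nu,x}$ to each containment gives a lower and an upper bound on $\ang{\nu,y}$, and eliminating this term by combining with weights $1/4$ and $1/6$ yields the central inequality
\[
3\rho_r \ang{\nu,v} - 2\rho_{r'}\ang{\nu,v'} \;\le\; \sup_{x \in K^*}\ang{\nu,x}.
\]

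I then split on the sign of $\ang{\nu,v'}$. If $\ang{\nu,v'} > 0$, combine the displayed inequality with the assumed $\sup_{x \in K^*}\ang{\nu,x} < z_0\ang{\nu,v}$ and $u_0\ang{\nu,v'} < z_0\ang{\nu,v}$, together with $\rho_{r'}/u_0 \le 1-\ray(u)$. After dividing by $\ang{\nu,v} > 0$ and inserting $\rho_r \ge z_{bd}(1-2\ray(y))$ and $z_0 = z_{bd}/(1-\ray(z))$, the vector $\nu$ drops out entirely and one obtains
\[
3(1-2\ray(y))(1-\ray(z)) \;<\; 3 - 2\ray(u),
\]
which rearranges to $\ray(u) < 3\ray(y) + \tfrac{3}{2}\ray(z)$ (up to a small $\ray(y)\ray(z)$ correction), contradicting the hypothesis $\ray(u) > 4\ray(y) + 2\ray(z)$. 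If instead $\ang{\nu,v'} \le 0$, the term $-2\rho_{r'}\ang{\nu,v'}$ is nonnegative, so the central inequality collapses to $3\rho_r\ang{\nu,v} \le \sup_{x \in K^*}\ang{\nu,x} < z_0\ang{\nu,v}$, yielding $3(1-2\ray(y))(1-\ray(z)) < 1$; for $\alpha \le 1/8$ the left-hand side is at least $3 \cdot \tfrac{3}{4} \cdot \tfrac{7}{8} = \tfrac{63}{32} > 1$, again a contradiction.

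The hard step is recognizing which combination of Macbeath containments to use. Using only the elementary facts $\hat y_r, \hat y_{r'} \in K^*$ (or $y \in K^*$) gives bounds too weak to eliminate $\ang{\nu,y}$ cleanly; one must exploit the reflections $5\hat y_r - 4y$ and $6y - 5\hat y_{r'}$, which encode the central symmetry of $M^{1/5}(y)$ about $y$, and combine them with the precise weights $1/4$ and $1/6$ so that $\ang{\nu,y}$ cancels. The coefficients $3$ and $2$ in the resulting inequality are exactly what translate into the weights $4$ and $2$ in the hypothesis $\ray(u) > 4\ray(y) + 2\ray(z)$, with just enough slack remaining that the bound $\alpha \le 1/8$ closes the auxiliary second case.
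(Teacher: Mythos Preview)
Your argument is correct and constitutes a genuinely different proof from the paper's. Both reduce (via Lemma~\ref{lem:ocone}) to showing $u \in \ocone{K^*}{z}$, but from there the paths diverge. The paper fixes a separating hyperplane $h$ through $z$, translates it to a point $p \in r \cap M^{1/5}(y)$ to form a cap $C$, invokes Lemma~\ref{lem:mac-cap} to conclude $M^{1/5}(y) \subseteq C^2$, and then uses a similar-triangles computation along $r'$ to bound where $r'$ exits $h$. Your proof bypasses caps and Lemma~\ref{lem:mac-cap} entirely: you work with the linear functional $\nu$ directly, extract from the raw Macbeath definition the two containments $5\hat y_r - 4y \in K^*$ and $6y - 5\hat y_{r'} \in K^*$, and combine them with weights chosen to cancel $\ang{\nu,y}$, obtaining the scalar inequality $3\rho_r\ang{\nu,v} - 2\rho_{r'}\ang{\nu,v'} \le \sup_{x\in K^*}\ang{\nu,x}$. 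The case split on the sign of $\ang{\nu,v'}$ and the substitutions $\rho_r \ge z_{bd}(1-2\ray(y))$, $\rho_{r'} \le u_0(1-\ray(u))$ then finish cleanly; in fact your Case~1 yields the slightly sharper bound $\ray(u) < 3\ray(y) + \tfrac{3}{2}\ray(z)$, which of course still contradicts the hypothesis. The trade-off is that the paper's picture (translate $h$, expand the cap, trap $M^{1/5}(y)$) is more visual and reuses existing lemmas, whereas your argument is self-contained and more algebraic, relying only on the definition of $M^{1/5}(y)$ and Lemma~\ref{lem:core-ray}.
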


\begin{figure}[htbp]
\centering
\includegraphics[scale=.8]{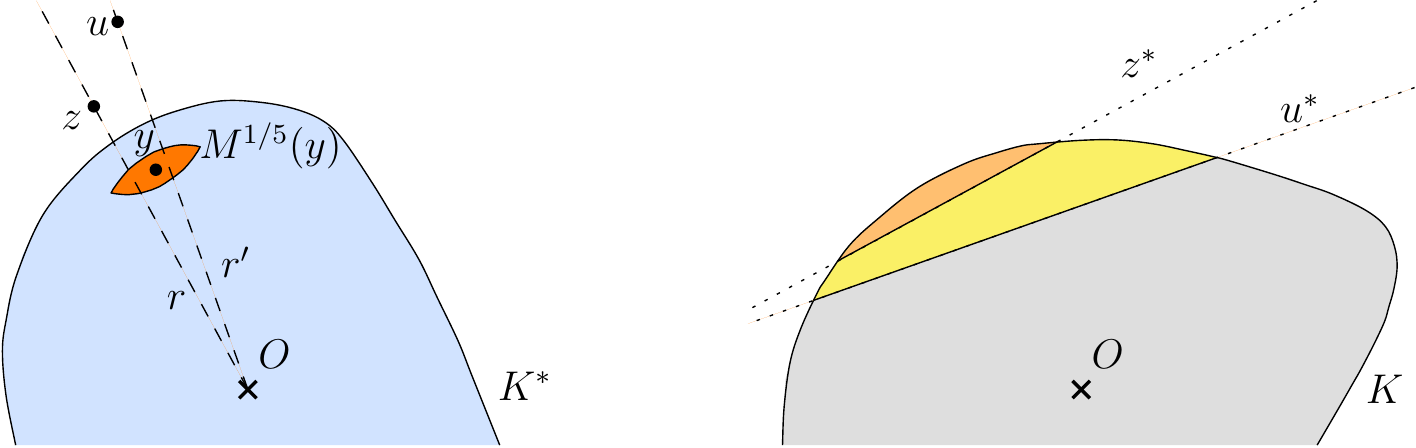}
\caption{\label{f:guarding} Statement of Lemma~\ref{lem:guarding}. 
}
\end{figure}

\begin{proof}
Let $h$ be any hyperplane passing through $z$ that does not intersect $K^*$. We will show that $h$ separates $u$ from $K^*$. This would imply that $u \in \ocone{K^*}{z}$, and the result would then follow from Lemma~\ref{lem:ocone}.

Let $p$ be any point in $r \cap M^{1/5}(y)$. By Lemma~\ref{lem:core-ray}, we have $\ray(p) \le 2 \ray(y)$. Consider a hyperplane $h'$ that is parallel to $h$ and passes through $p$ (see Figure \ref{f:guardingproof}). Let $C$ be the cap induced by $h'$. Letting $t$ denote the point of intersection of ray $r$ with $\bd K^*$, we have
\begin{equation} \label{eq:guarding}
  \width(C) 
    ~ \leq ~ \frac{\|p z\|}{\|O z\|} 
    ~ =    ~ \frac{\|p t\| + \|t z\|}{\|O z\|} 
    ~ \leq ~ \frac{\|p t\|}{\|O t\|} + \frac{\|t z\|}{\|O z\|} 
    ~ =    ~ \ray(p) + \ray(z) 
    ~ \leq ~ 2 \ray(y) + \ray(z).
\end{equation}

Since $C$ intersects $M^{1/5}(y)$, by Lemma~\ref{lem:mac-cap}, the cap $C^2$ encloses $M^{1/5}(y)$. Since $y$ and $z$ are $\alpha$-shallow for $\alpha = \frac{1}{8}$, by Eq.~\eqref{eq:guarding} we have $\width(C) \le 3/8$. It follows $\width(C^2) < 1$, and hence $O$ lies outside $C^2$. Let $h''$ denote the hyperplane passing through the base of $C^2$. Since $r'$ intersects $M^{1/5}(y)$, it follows that $r'$ must intersect $h''$ and $h$. Let $z'$ denote the point of intersection of $r'$ with $h$. We will show that $\ray(z') \le 4 \ray(y) + 2 \ray(z)$. Recalling from the statement of the lemma that $\ray(u) > 4 \ray(y) + 2 \ray(z)$, this would imply that $h$ separates $u$ from $K^*$, as desired.

\begin{figure}[htbp]
\centering
\includegraphics[scale=.8]{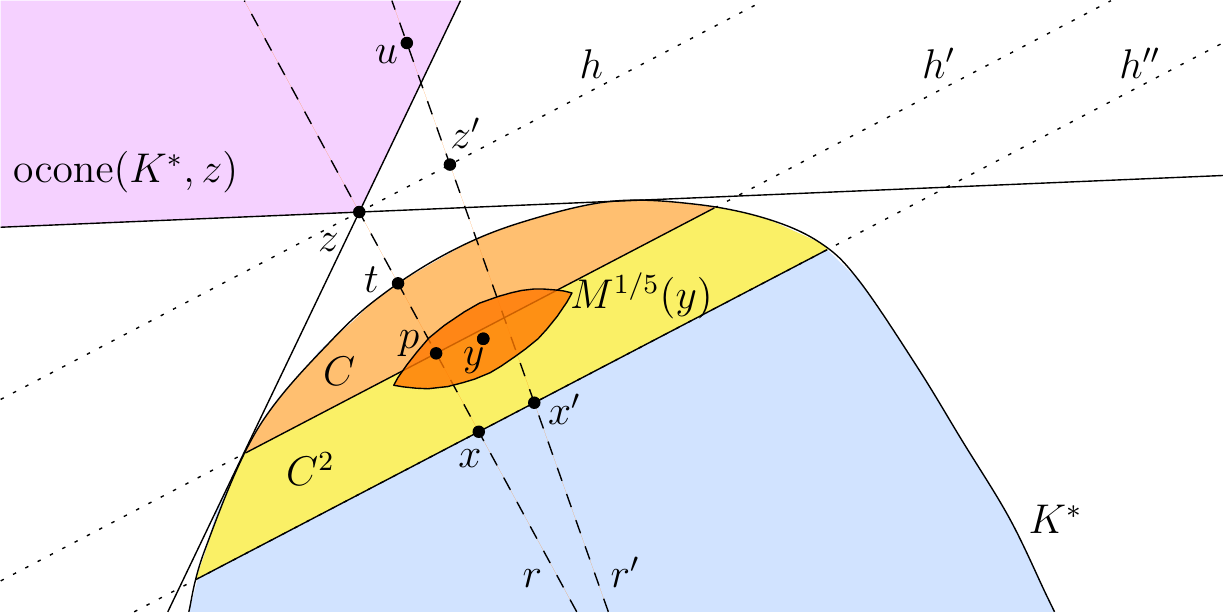}
\caption{\label{f:guardingproof} Proof of Lemma~\ref{lem:guarding}. }
\end{figure}

Let $x$ and $x'$ denote the points of intersection of the rays $r$ and $r'$, respectively, with $h''$. By similar triangles we have $\ray(z') \le \|x' z'\| / \|O z'\| = \|x z\| / \|O z\|$. Observe that the distance between $h''$ and $h'$ is no more than the distance between $h'$ and $h$, and so $\|x z\| \le 2 \|p z\|$. Combining this with Eq.~\eqref{eq:guarding}, we obtain
\[
    \ray(z') 
        ~ \leq ~ \frac{\|x z\|}{\|O z\|} 
        ~ \leq ~ \frac{2 \|p z\|}{\|O z\|} 
        ~ \leq ~ 2 (2 \ray(y) + \ray(z)) 
        ~ =    ~ 4 \ray(y) + 2 \ray(z),
\]
which completes the proof.
\end{proof}

We now establish the main result of this section.

\begin{lemma}
\label{lem:sandwich} 
Let $\eps \leq \frac{1}{16}$, and let $y \in K^*$ such that $\ray(y) \leq \eps$. For any two points $x, z \in M^{1/5}(y)$, their respective $\eps$-representative caps are 8-similar.
\end{lemma}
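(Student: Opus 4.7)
The plan is to reduce the $8$-similarity of the two $\eps$-representative caps to two applications of Lemma~\ref{lem:guarding}, one for each direction of containment.

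To set up notation, let $\hat{x}$ and $\hat{z}$ be the points outside $K^*$ on the rays $Ox$ and $Oz$ with $\ray_{K^*}(\hat{x}) = \ray_{K^*}(\hat{z}) = \eps$, so the $\eps$-representative caps are $C_x = \pcap{K}{\hat{x}^*}$ and $C_z = \pcap{K}{\hat{z}^*}$. Since $\width_K(p^*) = \ray_{K^*}(p)$, both caps have relative width exactly $\eps$. By symmetry it suffices to prove $C_x \subseteq C_z^8$. The first key step is to recognize the $8$-expansion $C_z^8$ as a cap of the form $\pcap{K}{\tilde{z}^*}$, where $\tilde{z}$ lies on the same ray $Oz$ and has $\ray_{K^*}(\tilde{z}) = 8\eps$. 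Indeed, if $h_0$ is the far supporting hyperplane of $K$ parallel to $\hat{z}^*$, then by definition the hyperplane cutting $C_z^8$ is parallel to $\hat{z}^*$ at distance $(1 - 8\eps)\dist(O, h_0)$ from the origin, which is exactly the location of $\tilde{z}^*$ once one reads the width-ray correspondence in reverse. The hypothesis $\eps \leq \tfrac{1}{16}$ guarantees $8\eps < 1$, so $C_z^8$ is a proper cap of $K$ (and $\tilde{z}$ really is outside $K^*$).

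Now I apply Lemma~\ref{lem:guarding} by taking its $r$ to be $Ox$, its $r'$ to be $Oz$, its point $z$ to be $\hat{x}$, and its point $u$ to be $\tilde{z}$. Both rays meet $M^{1/5}(y)$ because $x, z \in M^{1/5}(y)$. The shallowness hypothesis $\alpha \leq \tfrac{1}{8}$ holds since $\ray(y), \ray(\hat{x}) \leq \eps \leq \tfrac{1}{16}$, and the required inequality $\ray(\tilde{z}) > 4 \ray(y) + 2 \ray(\hat{x})$ becomes $8\eps > 4\eps + 2\eps$, which holds with slack. Lemma~\ref{lem:guarding} then yields $C_x = \pcap{K}{\hat{x}^*} \subseteq \pcap{K}{\tilde{z}^*} = C_z^8$, and swapping the roles of $x$ and $z$ gives $C_z \subseteq C_x^8$, establishing $8$-similarity.

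The only delicate point is the short computation identifying an $8$-fold cap expansion with a ray distance of $8\eps$ for the defining dual point; once this correspondence is in hand, Lemma~\ref{lem:guarding} delivers both inclusions almost mechanically. The constant $8$ in the statement is essentially the smallest integer multiple of $\eps$ that beats the threshold $4\ray(y) + 2\ray(\hat{x}) \leq 6\eps$ required by Lemma~\ref{lem:guarding}, while the bound $\eps \leq \tfrac{1}{16}$ simultaneously ensures the shallowness hypotheses and keeps the expanded caps strictly inside $K$.
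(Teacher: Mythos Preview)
Your proof is correct and follows essentially the same approach as the paper's. Both arguments place points at ray distances $\eps$ and $8\eps$ along the two rays through $M^{1/5}(y)$, identify the $8$-expanded cap with the cap cut by the polar of the $8\eps$-ray point, and then apply Lemma~\ref{lem:guarding} once in each direction using $8\eps > 6\eps \geq 4\ray(y) + 2\eps$; the only differences are notational and the order in which the two containments are proved.
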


\begin{proof} 
Let $x_1$ and $z_1$ be points external to $K^*$ both at ray distance $\eps$ on the rays $O x$ and $O z$, respectively (see Figure \ref{f:sandwichproof}(a)). Let $C_x$ and $C_z$ denote the $\eps$-representative caps of $x$ and $z$, respectively (see Figure \ref{f:sandwichproof}(b)). Recall that $C_x$ and $C_z$ are the caps in $K$ induced by $x_1^*$ and $z_1^*$, respectively. By standard properties of the polar transformation $\width(C_x) = \ray(x_1) = \eps$, and similarly, $\width(C_z) = \ray(z_1) = \eps$. Let $x_2$ and $z_2$ be points external to $K^*$ both at ray distance $8 \eps$ on the rays $O x$ and $O z$, respectively (see Figure~\ref{f:sandwichproof}). By our bound on $\eps$, these ray distances are at most $\frac{1}{2}$. Clearly, $x_2^*$ and $z_2^*$ induce the caps $C_x^8$ and $C_z^8$ in $K$, respectively.

\begin{figure}[htbp]
\centering
\includegraphics[scale=.8,page=1]{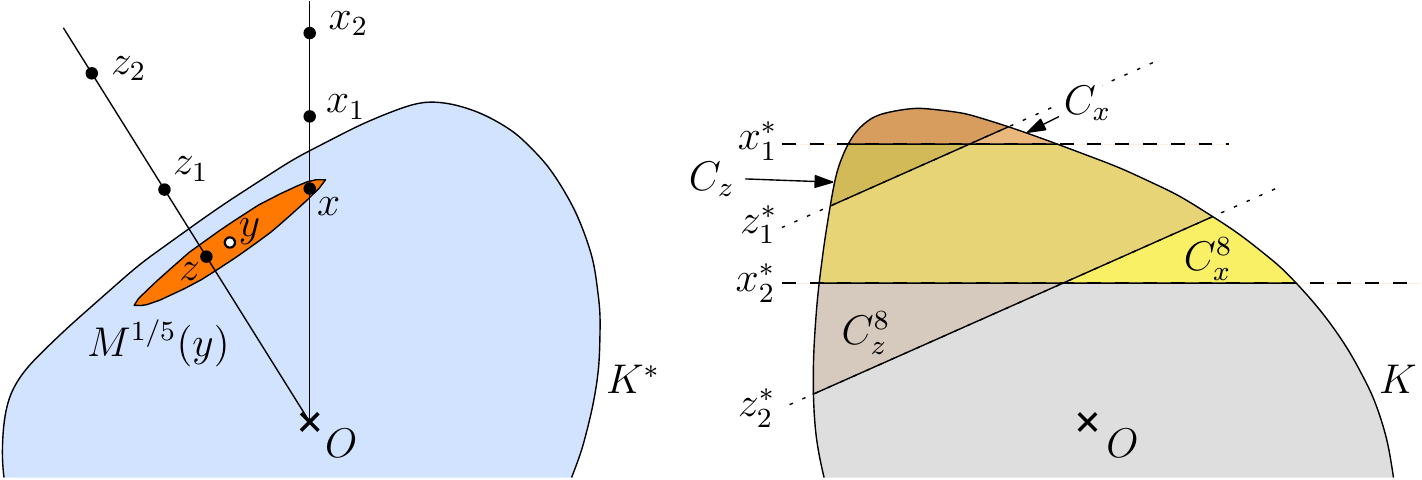}
\caption{\label{f:sandwichproof} Proof of Lemma~\ref{lem:sandwich}.}
\end{figure}

Since $\ray(x_2) = 8\eps, \ray(y) \leq \eps$ and $\ray(z_1) = \eps$, we have $\ray(x_2) > 2\ray(z_1) + 4\ray(y)$. It follows from Lemma~\ref{lem:guarding} that $C_z \subseteq C_x^8$. A symmetrical argument shows that $C_x \subseteq C_z^8$. Therefore $C_x$ and $C_z$ are 8-similar, as desired.
\end{proof}

The next lemma shows that similarity holds, even if ray distances are altered by a constant factor.

\begin{corollary} \label{cor:sandwich} 
Let $\eps \leq \frac{1}{16}$, and let $y \in K^*$ such that $\ray(y) \leq \eps$. Let $C_x$ be a cap of $K$ such that $\eps/2 \leq \width(C_x) \leq 2\eps$, and such that the ray shot from the origin orthogonal to the base of $C_x$ intersects $M^{1/5}(y)$. Then the cap $C_x$ and the $\eps$-representative cap $C_z$ of any point $z \in M^{1/5}(y)$ are 16-similar.
\end{corollary}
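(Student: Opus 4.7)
The plan is to introduce an intermediate $\eps$-representative cap and chain two similarity relations. By hypothesis, the ray from the origin orthogonal to the base of $C_x$ meets $M^{1/5}(y)$; pick any point $p$ in that intersection and let $C_p$ denote its $\eps$-representative cap. Since $p$ and $z$ both lie in $M^{1/5}(y)$ with $\ray(y)\le\eps\le\frac{1}{16}$, Lemma~\ref{lem:sandwich} immediately yields that $C_p$ and $C_z$ are $8$-similar.

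Next, I would compare $C_x$ to $C_p$ directly. The base of $C_x$ is orthogonal to the ray $Op$ by construction, and the base of $C_p$ is the polar hyperplane $\hat{p}^*$, which is orthogonal to $Op$ because $\hat{p}\in Op$. Thus the bases are parallel and the two caps share the same apex hyperplane $h_0$ (the supporting hyperplane of $K$ in the direction of $p$). Letting $w_x,w_p$ denote their absolute widths, the identity $\width_K(C)=w/\dist(O,h_0)$ gives
\[
   \frac{w_x}{w_p} ~=~ \frac{\width(C_x)}{\eps} ~\in~ \Bigl[\tfrac{1}{2},2\Bigr].
\]
From the definition of the expansion $C^\lambda$, parallel caps sharing a common apex hyperplane are nested according to the ratio of their absolute widths, so $C_x\subseteq C_p^{\,2}$ and $C_p\subseteq C_x^{\,2}$. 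Hence $C_x$ and $C_p$ are $2$-similar.

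To conclude, I chain the two relations using Lemma~\ref{lem:cap-containment-exp} together with the elementary identity $(C^a)^b=C^{ab}$ (immediate from the absolute-width definition). From $C_x\subseteq C_p^{\,2}$ and $C_p\subseteq C_z^{\,8}$, Lemma~\ref{lem:cap-containment-exp} gives $C_p^{\,2}\subseteq (C_z^{\,8})^{2}=C_z^{\,16}$, so $C_x\subseteq C_z^{\,16}$. Symmetrically, $C_z\subseteq C_p^{\,8}\subseteq (C_x^{\,2})^{8}=C_x^{\,16}$. Therefore $C_x$ and $C_z$ are $16$-similar, as claimed.

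There is no serious obstacle here: the whole argument is essentially a $2$-step composition of Lemma~\ref{lem:sandwich} with a routine cap-expansion estimate. The only point requiring care is the alignment step: one must recognize that the hypothesis "the ray orthogonal to the base of $C_x$ intersects $M^{1/5}(y)$" exists precisely to let us identify $C_x$ (up to a factor-$2$ width discrepancy) with an $\eps$-representative cap of a point $p\in M^{1/5}(y)$, bringing the problem into the setting where Lemma~\ref{lem:sandwich} applies.
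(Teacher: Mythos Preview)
Your proof is correct and follows essentially the same approach as the paper: introduce an intermediate point on the ray inside $M^{1/5}(y)$, use Lemma~\ref{lem:sandwich} to get $8$-similarity between its $\eps$-representative cap and $C_z$, observe that this cap and $C_x$ have parallel bases with width ratio at most $2$ (hence are $2$-similar), and chain via Lemma~\ref{lem:cap-containment-exp}. Your write-up is in fact slightly more explicit than the paper's, spelling out the parallel-base justification and the identity $(C^a)^b=C^{ab}$.
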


\begin{proof}
Let $r$ denote the ray shot from the origin orthogonal to the base of $C_x$. Let $x$ be any point that lies in $r \cap M^{1/5}(y)$. Let $C_x'$ be the $\eps$-representative cap of $x$. By Lemma~\ref{lem:sandwich}, the caps $C_x'$ and $C_z$ are 8-similar. Also, it follows from our choice of point $x$ that the caps $C_x$ and $C_x'$ have parallel bases and their widths differ by a factor of at most two. Thus $C_x$ and $C_x'$ are 2-similar. Using the fact that $C_x'$ and $C_z$ are 8-similar, and applying Lemma~\ref{lem:cap-containment-exp}, it is easy to see that $C_x$ and $C_z$ are 16-similar.
\end{proof}

\section{Caps in the Polar: Mahler Relationship} \label{s:mahler}

As mentioned in Section~\ref{s:techniques}, a central element of our analysis is establishing a Mahler-like reciprocal relationship between volumes of caps in $K$ and corresponding caps of $K^*$. While our new result is similar in spirit to those given by Arya {\etal}~\cite{AAFM22} and that of Nasz{\'o}di {\etal}~\cite{NNR20}, it is stronger than both. Compared to \cite{AAFM22}, the dependency of the Mahler volume on dimension is improved from $2^{-O(n\log n)}$ to $2^{-O(n)}$, which is critical in the high-dimensional setting in reducing terms of the form $n^{O(n)}$ to $2^{O(n)}$. Further, our result is presented in a cleaner form, which is affine-invariant. Compared to Nasz{\'o}di {\etal}~\cite{NNR20}, which was focused on sampling from just the boundary of $K$, our results can be applied to caps of varying widths, and hence it applies to sampling from the interior of $K$. This fact too is critical in the applications we consider. Our improvements are obtained by a more sophisticated geometric analysis and our affine-invariant approach.

For the sake of concreteness, we state the lemmas of this section in terms of an arbitrary direction, which we call ``vertical,'' and any hyperplane orthogonal to this direction is called ``horizontal.'' Since the direction is arbitrary, there is no loss of generality.

\subsection{Dual Caps and the Difference Body} \label{s:diff-body}

This subsection is devoted to a key construction in our analysis. Given a full dimensional convex body $K$ and a point $z \not\in K$, the following lemma identifies an $(n-1)$-dimensional body $\Upsilon$ such that $\dcap{\Upsilon}{z} = \dcap{K}{z}$, where $\Upsilon$ is related to the base $B$ of a certain $\eps$-width cap in the sense that $\Upsilon$ can be sandwiched between $B$ and a scaled copy of the difference body of $B$.

\begin{figure}[htb]
  \centerline{\includegraphics[scale=.8,page=2]{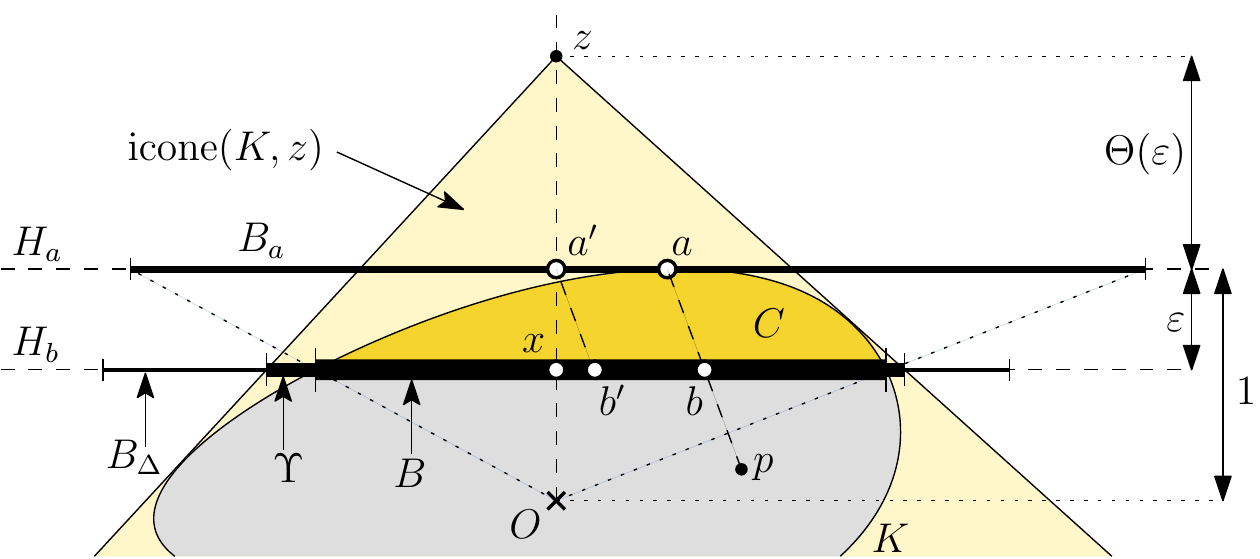}}
  \caption{\label{f:dualcaps2}Statement of Lemma~\ref{lem:sandwich-dualcaps}.}
\end{figure}

\begin{lemma} \label{lem:sandwich-dualcaps} 
Let $\eps \le \frac{1}{8}$. Let $K$ be a convex body with the origin $O$ in its interior. Let $z \notin K$ be a point on the ray from the origin directed vertically upwards such that $\ray(z) = 2\eps$. Consider an $\eps$-width cap $C$ above the origin whose base $B$ intersects $Oz$ and is horizontal. Let $H_b$ be the hyperplane passing through the base $B$, and let $\Upsilon = \icone{K}{z} \cap H_b$. Let $x$ denote the point of intersection of $B$ with $Oz$, and let $B_{\Delta} = 5\Delta(B) + x$. Then $B \subseteq \Upsilon \subseteq B_{\Delta}$ (see Figure~\ref{f:dualcaps2}). 
\end{lemma}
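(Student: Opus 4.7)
The plan is to prove the two inclusions separately; the first is immediate, while the second is the heart of the argument.

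For $B\subseteq\Upsilon$, observe that any $b\in B\subseteq K$ lies in $H_b$, and the ray from $z$ through $b$ meets $K$ at $b$ itself, so $b\in\icone{K}{z}\cap H_b=\Upsilon$.

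For $\Upsilon\subseteq B_\Delta$, set up coordinates so that the last axis is the ``vertical'' direction from $O$ toward $z$, $x$ is the horizontal origin of $H_b$, and a point of $K$ is written as $(p',y_p)\in\RE^{n-1}\times\RE$. Let $y_0$ be the height of the top supporting hyperplane of the cap $C$ and $t_q$ the height where $Oz$ exits $K$; the hypotheses $\width(C)=\eps$ and $\ray(z)=2\eps$ translate to $t_x=y_0(1-\eps)$ and $t_z=t_q/(1-2\eps)$, with $t_q\in[t_x,y_0]$ because $x\in B\subseteq K$. Since $x\in B$, the horizontal projection of $B$ (still denoted $B$) contains the origin of $\RE^{n-1}$. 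For any $q\in\Upsilon$, fix $p=(p',y_p)\in K$ on the ray from $z$ through $q$; elementary geometry yields $q-x=sp'$ with $s=(t_z-t_x)/(t_z-y_p)$. The goal is to exhibit $q-x$ in the form $5(b_1-b_2)$ with $b_1,b_2\in B$, which gives $q-x\in 5(B-B)=5\Delta(B)$ and hence $q\in B_\Delta$.

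Split into two cases. In \emph{Case~A} ($y_p\ge t_x$), convexity shows the segment $Op$ is contained in $K$ and crosses $H_b$ at $(t_x/y_p)p'$, which therefore lies in $B$; write $p'=(y_p/t_x)b'$ for some $b'\in B$. Then $q-x=g\,b'$ with $g=sy_p/t_x$, and routine algebra using the height relations gives $g\le2$ whenever $\eps\le1/8$. Taking $b_1=(g/5)b'$ (a convex combination of $b'$ with $0\in B$) and $b_2=0$ yields the required representation. In \emph{Case~B} ($y_p<t_x$), pick any apex $(c',y_0)\in K$ on the top supporting hyperplane of $C$. The segment from $(c',y_0)$ to $p$ lies in $K$ and crosses $H_b$ at $\alpha c'+\beta p'\in B$, where $\alpha=(t_x-y_p)/(y_0-y_p)$ and $\beta=1-\alpha$; applying the Case~A cone argument to the apex itself gives $(t_x/y_0)c'\in B$. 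Solving for $p'$ and substituting,
\[
    q - x ~=~ sp' ~=~ \rho^* \bigl[\alpha c' + \beta p'\bigr] - \sigma^* \bigl[(t_x/y_0) c'\bigr], \qquad \rho^* = \frac{s}{\beta}, \quad \sigma^* = \frac{s\alpha y_0}{\beta t_x}.
\]
Algebraic manipulation using the height relations yields $\rho^*\le(3-2\eps)/(1-2\eps)$ and $\sigma^*\le\rho^*/(1-\eps)$, both at most~$5$ when $\eps\le1/8$. Taking $b_1=(\rho^*/5)[\alpha c'+\beta p']$ and $b_2=(\sigma^*/5)[(t_x/y_0)c']$, each a convex combination of a $B$-element with $0\in B$, gives the required representation.

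The main obstacle is the Case~B algebra; the constant~$5$ in $B_\Delta$ is essentially calibrated so that $\rho^*/5,\sigma^*/5\in[0,1]$, enabling the final convex-combination step. The hypothesis $\eps\le1/8$ is used precisely here: as $\eps\to0$ both scalars tend to $3$, and the constant~$5$ absorbs the lower-order increases at finite $\eps$.
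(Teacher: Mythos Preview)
Your proof is correct. The first inclusion and the split into Case~A ($p$ at or above the base) and Case~B ($p$ below the base) mirror the paper's own decomposition into the cap $C$ and $K\setminus C$, and both arguments use the same geometric anchors: the cone from the origin for Case~A and the apex of $C$ for Case~B.

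The execution, however, is genuinely different. The paper works at the level of set inclusions: for $K\setminus C$ it passes through an auxiliary body $B^{+}=\conv(B\cup(B+t))$ (where $t$ is the horizontal offset of the apex), shows $K\setminus C\subseteq\icone{B^{+}}{a'}$, translates this cone to $z$, and then argues $B^{+}\subseteq(1+2\eps)\Delta(B)$ via Minkowski-sum reasoning. You instead work pointwise, explicitly writing $q-x=\rho^{*}b_1-\sigma^{*}b_2$ with $b_1,b_2\in B$ and bounding the scalars. Your route is more computational and self-contained, avoiding the $B^{+}$ construction entirely; the paper's route is more structural and makes the role of the difference body more transparent. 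Both arrive at the constant~$5$ through essentially the same numerics (the paper gets $4(1+2\eps)\le 5$; your $\rho^{*}\le(3-2\eps)/(1-2\eps)$ and $\sigma^{*}\le\rho^{*}/(1-\eps)$ are each at most $5$ when $\eps\le 1/8$).
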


\begin{proof}
By definition, $K \subseteq \icone{K}{z}$, and so $B \subseteq \Upsilon$. Thus, it suffices to show that $\Upsilon \subseteq B_{\Delta}$. To prove this, we will show that $K \subseteq \icone{B_{\Delta}}{z}$.

Let $a$ denote an apex of $C$ and let $a'$ be the point obtained by projecting $a$ orthogonally onto $O z$ (see Figure~\ref{f:dualcaps3}). Without loss of generality, assume that $\|O a'\| = 1 $. Note that $\|xa'\| = \eps$, where $x$ is the point of intersection of the ray $O z$ with the base of cap $C$. It is easy to check that $\eps \le \|a'z\| \le 3 \eps$.

\begin{figure}[htb]
  \centerline{\includegraphics[scale=.8,page=1]{fig/dcap-sandwich}}
  \caption{\label{f:dualcaps3}Proof of Lemma~\ref{lem:sandwich-dualcaps}.}
\end{figure}

For the remainder of this proof, it will be convenient to imagine that the origin is at $x$. Our strategy will be to show that $C \subseteq \icone{2(1+2\eps)B}{z}$ and $K \setminus C \subseteq \icone{4(1+2\eps) \Delta(B)}{z}$. Since $B$ contains the origin, it follows easily that $B \subseteq \Delta(B)$. This implies that $K \subseteq \icone{4(1+2\eps) \Delta(B)}{z} \subseteq \icone{5\Delta(B)}{z}$ since $\eps \le \frac{1}{8}$. By definition of $B_{\Delta}$, this would complete the proof.

First, we will prove that $C \subseteq \icone{2(1+2\eps)B}{z}$. It follows from convexity that $C$ is contained in the truncated portion of $\icone{B}{O}$ between the hyperplane $H_b$ and the hyperplane above $H_b$ that is parallel to it at distance $\eps$ (call it $H_a$). Note that $\icone{B}{O} \cap H_a$ is the $(n-1)$-dimensional convex body obtained by scaling $B$ about $x$ by a factor of $1/(1-\eps)$ and translating it vertically upwards by amount $\eps$. Call 
this body $B_a$. (Formally, $B_a = (1/(1-\eps)) B + a'$.) It 
is easy to see that $C \subseteq \icone{B_a}{z}$. Since $\|z x\| \le 2 \|z a'\|$, it follows that $\icone{B_a}{z} \cap H_b \subseteq 2 (1/(1-\eps)) B$. Thus $C \subseteq \icone{2 (1/(1-\eps)) B}{z} \subseteq \icone{2 (1+2\eps) B}{z}$, where in the last containment we have used the fact that $\eps \le \frac{1}{8}$.

It remains to prove that $K \setminus C \subseteq \icone{4(1+2\eps) \Delta(B)}{z}$. By convexity, it follows that $K \setminus C \subseteq \icone{B}{a}$. Define $t = a' - a$ and $B^+ = \conv(B \cup (B+t))$. We claim that $K \setminus C \subseteq \icone{B^+}{a'}$. To prove this, let $p$ be any point in $K \setminus C$. Since $K \setminus C \subseteq \icone{B}{a}$, it follows that $\overline{a p}$ intersects the base $B$; let $b$ denote this point of intersection. Since $b \in B$, we have $b \in B^+$. Define $b' = b + t$. Clearly $b' \in B + t$ and hence $b' \in B^+$. Note that the points $b, b', a', a$ form a parallelogram (because $b'-a' = b-a$). By elementary geometry, $p$ also lies in the 2-dimensional flat of this parallelogram and $\overline{a' p}$ intersects $\overline{b b'}$. Since $b, b' \in B^+$ and $B^+$ is convex, it follows that $\overline{b b'}$ is contained in $B^+$. Thus $\overline{a' p}$ intersects $B^+$, which implies that $p \in \icone{B^+}{a'}$. This proves that $K \setminus C \subseteq \icone{B^+}{a'}$, as desired.

Next consider the cone obtained by translating $\icone{B^+}{a'}$ vertically upwards to $z$. Clearly the resulting cone contains $K \setminus C$, and since $\|z x\| \le 4\|a' x\|$, it follows that the intersection of this cone with $H_b$ is contained in $4B^+$. Thus $K \setminus C \subseteq \icone{4B^+}{z}$.

To complete the proof we need to relate $B^+$ to $\Delta(B)$. To be precise, we will show that $B^+ \subseteq (1+2\eps) \Delta(B)$. Recall that $B^+ = \conv(B \cup (B+t))$. By our earlier remarks, $a \in B_a$ and hence $-t = a - a' \in (1/(1-\eps)) B$. It follows that $B + t \subseteq (1/(1-\eps)) B - (-t) \subseteq \Delta((1/(1-\eps)) B)$, where the first containment is trivial and the second is immediate from the definition of difference bodies. Also, $B \subseteq \Delta((1/(1-\eps)) B)$ holds trivially. By convexity of difference bodies, it follows that $B^+ \subseteq \Delta((1/(1-\eps)) B)$. Thus $B^+ \subseteq (1/(1-\eps)) \Delta(B) \subseteq (1+2\eps) \Delta(B)$. Recalling that $K \setminus C \subseteq \icone{4B^+}{z}$, it follows that $K \setminus C \subseteq \icone{4 (1+2\eps) \Delta(B)}{z}$, which completes the proof.
\end{proof}

\subsection{Relating Caps in the Primal and Polar} \label{s:primal-polar}

In order to establish a Mahler-like relation between the volumes of caps of $K$ and $K^*$, it will be helpful to consider projections in one lower dimension, $n-1$. We will make use of a special case of a result appearing in \cite{AAFM22} (Lemma~{3.1}).
Consider a convex body $K$ lying on an $(n-1)$-dimensional hyperplane and a point $z$ that lies on the opposite side of this hyperplane from the origin (see Figure~\ref{f:polars}). The polar of the dual cap of $K$ with respect to $z$ is an $(n-1)$-dimensional convex body on the hyperplane $z^*$. Letting $G$ denote this object, the following lemma shows that if we project both $K$ and $G$ onto a suitable $(n-1)$-dimensional hyperplane, $G$ is the polar of $K$ up to scale factor.  

\begin{lemma}[Arya {\etal}~\cite{AAFM22}] \label{lem:polars}
Let $z \in \RE^n$ be a point that lies on a vertical ray from the origin $O$, and let $K$ be an $(n-1)$-dimensional convex body whose interior intersects the segment $O z$ at some point $x$. Further, suppose that $K$ lies on a hyperplane orthogonal to $Oz$. Let $G = (\dcap{K}{z})^*$ and let $t$ be the point of intersection of the vertical ray from $O$ with $z^*$. Then $G - t = \alpha (K - x)^*$, where $\alpha = \|x z\| / \|O z\|$.
\end{lemma}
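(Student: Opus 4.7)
The plan is to choose coordinates that exploit the vertical symmetry of the configuration and then perform a direct computation. I would take the $n$-th coordinate axis to be the vertical direction, so that $x = (0, h)$ and $z = (0, h_z)$ for some $h_z > h > 0$ (with $0 \in \RE^{n-1}$), giving $\alpha = (h_z - h)/h_z$. The hyperplane containing $K$ is then $\{y_n = h\}$, and I would write each $k \in K$ as $k = (k', h)$ with $k' \in \RE^{n-1}$; the horizontal projection $K' \subset \RE^{n-1}$ is precisely $K - x$ viewed in the horizontal hyperplane through the origin. The hyperplane $z^*$ is $\{y_n = 1/h_z\}$ and the point $t$ is $(0, 1/h_z)$.

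Next I would parametrize hyperplanes through $z$. Any such hyperplane can be written as $H_a = \{y : \langle a, y\rangle = 1\}$ with $\langle a, z\rangle = 1$, which forces $a_n = 1/h_z$. Thus the polar point of any hyperplane through $z$ lies on $z^*$, and the map $H_a \mapsto a$ gives a bijection between hyperplanes through $z$ and points of $z^*$. Now I would translate the defining condition of $\dcap{K}{z}$ into a condition on $a$: since the origin and $K$ both lie below $z$ along the segment $Oz$, a hyperplane $H_a$ through $z$ misses the interior of $K$ precisely when $K$ is kept on the origin's side, i.e.\ $\langle a, k\rangle \leq 1$ for every $k \in K$. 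Writing $a = (a', 1/h_z)$ and $k = (k', h)$, this simplifies to $\langle a', k'\rangle \leq 1 - h/h_z = \alpha$ for all $k' \in K'$, equivalently $a'/\alpha \in (K')^{*} = (K-x)^{*}$, where polarity is interpreted within the $(n-1)$-dimensional horizontal subspace through the origin.

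Combining the two steps, $G = \{(a', 1/h_z) : a' \in \alpha (K-x)^{*}\}$, and subtracting $t$ yields $G - t = \alpha (K-x)^{*}$, as desired. The main subtlety I expect to encounter is making the $(n-1)$-dimensional polarity on the right-hand side consistent with the $n$-dimensional polarity on the left: the body $K - x$ sits in a horizontal $(n-1)$-dimensional subspace through the origin, and $G - t$ likewise lies in a parallel horizontal $(n-1)$-dimensional subspace through the origin, so identifying these two subspaces via their common $\RE^{n-1}$ horizontal coordinates is the clean bridge. A secondary point is fixing the correct direction of the inequality, which is forced because $O$ and $K$ lie on the same side of $z$ along the $O z$ segment; it is also worth noting that since $x$ is in the interior of $K$, the origin is in the interior of $K - x$, so $(K-x)^{*}$ is a well-defined bounded convex body.
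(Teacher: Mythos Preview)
Your proposal is correct. The paper does not supply its own proof of this lemma; it is quoted from~\cite{AAFM22} (Lemma~3.1 there) and invoked without argument, so there is no in-paper proof to compare against. Your direct coordinate computation is exactly the natural route: writing $z=(0,h_z)$, $x=(0,h)$, parametrizing hyperplanes through $z$ by their polar points $a=(a',1/h_z)\in z^*$, and reducing the non-intersection condition to $\langle a',k'\rangle\le\alpha$ for all $k'\in K-x$. The two subtleties you flag are both handled cleanly. For the sign of the inequality, the key observation is that $x\in\text{relint}(K)$ and $\langle a,x\rangle=h/h_z<1$, so any hyperplane $H_a$ through $z$ that avoids the relative interior of $K$ must leave $K$ on the $\{\langle a,\cdot\rangle\le 1\}$ side; the opposite inequality is impossible. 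One small point worth adding explicitly: hyperplanes through $z$ that also pass through the origin cannot be written as $\{\langle a,y\rangle=1\}$, but any such hyperplane contains the line $Oz$ and hence $x\in\text{relint}(K)$, so it is excluded from $\dcap{K}{z}$ and your parametrization loses nothing.
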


\begin{figure}[htbp]
 \centerline{\includegraphics[scale=.8]{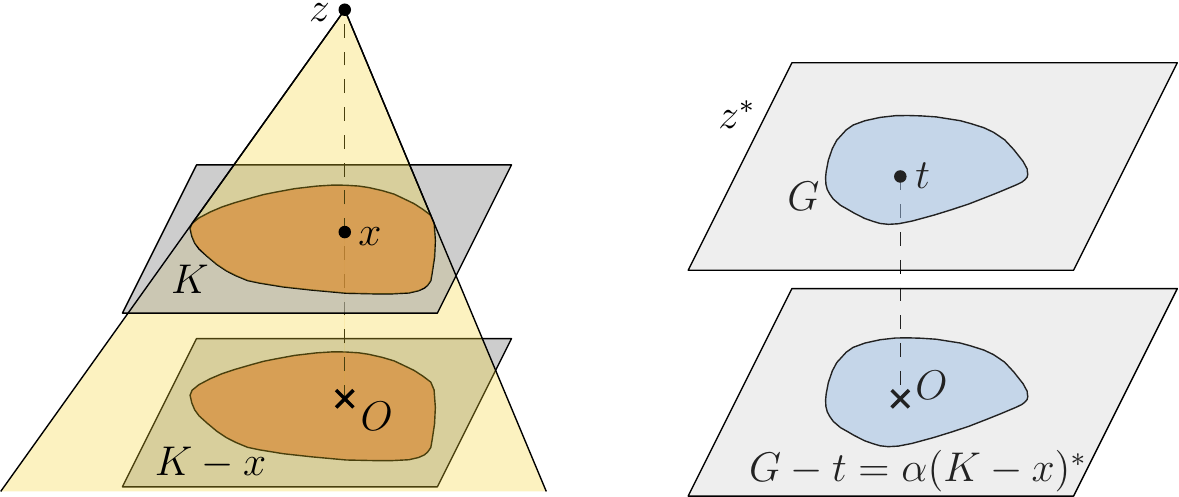}}
 \caption{\label{f:polars}Statement of Lemma~\ref{lem:polars}.}
\end{figure}

The following lemma describes the correspondence between caps in $K$ and its polar $K^*$, and it establishes the critical Mahler-type relationship between the volumes of these caps.

\begin{lemma} \label{lem:vol-product}
Let $0 < \eps \leq \frac{1}{8}$, and let $K \subseteq \RE^n$ be a well-centered convex body. Let $C$ be a cap of $K$ of width at least $\eps$. Consider the ray shot from the origin orthogonal to the base of $C$, and let $D$ be a cap of $K^*$ of width at least $\eps$ such that this ray intersects the interior of its base (see Figure~\ref{f:volprod}).  Then
\[
    \vol_K(C) \cdot \vol_{K^*}(D) 
        ~ \geq ~ 2^{-O(n)} \eps^{n+1}.
\]
\end{lemma}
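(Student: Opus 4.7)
The plan is to use Lemma~\ref{lem:sandwich-dualcaps} as the central tool, combined with a Mahler-type reciprocity applied to an $(n-1)$-dimensional body related to $\base(C)$. First I would invoke Lemma~\ref{lem:cap-tech} to reduce, without loss of generality, to the case $\width(C) = \width(D) = \eps$ exactly, since shrinking each cap preserves the hypotheses and only strengthens the target inequality. Orient coordinates so the ray $r$ perpendicular to $\base(C)$ points upward, with $B = \base(C)$ in a horizontal hyperplane $H_b$ meeting $r$ at a point $x$. Let $z$ be the point on $r$ outside $K$ with $\ray_K(z) = 2\eps$. Applying Lemma~\ref{lem:sandwich-dualcaps} produces the $(n-1)$-dimensional body $\Upsilon = \icone{K}{z} \cap H_b$, sandwiched as $B \subseteq \Upsilon \subseteq 5\Delta(B) + x$.

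The next move is the Mahler-type chain. Since $\Upsilon$ is the horizontal slice of $\icone{K}{z}$, the dual caps coincide: $\dcap{K}{z} = \dcap{\Upsilon}{z}$. By Lemma~\ref{lem:polars} applied to the $(n-1)$-dimensional body $\Upsilon$, combined with Lemma~\ref{lem:polardcap}, we get
\[
  \base(\pcap{K^*}{z^*}) - t ~=~ \alpha\,(\Upsilon - x)^*, \qquad \alpha = \|xz\|/\|Oz\| = \Theta(\eps),
\]
where $t$ is the intersection of $r$ with the hyperplane $z^*$. The Mahler lower bound (Lemma~\ref{lem:mahler-bounds}) in dimension $n-1$ gives $\vol((\Upsilon - x)^*) \geq 2^{-O(n)} \omega_{n-1}^2 / \vol(\Upsilon)$, and Lemma~\ref{lem:vol-diffbody} bounds $\vol(\Upsilon) \leq 20^{n-1} \vol(B)$. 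Applying Lemma~\ref{lem:vol-cap} to relate each cap's volume to its base area times its absolute width (both widths being $\Theta(\eps)$ after normalization), dividing by $\vol(K)\vol(K^*)$, and invoking the Santal\'o upper bound from Lemma~\ref{lem:centroid} together with $\omega_{n-1}^2/\omega_n^2 = \Theta(n) = 2^{O(n)}$ to absorb dimensional factors, yields
\[
  \vol_K(C) \cdot \vol_{K^*}(\pcap{K^*}{z^*}) ~\geq~ 2^{-O(n)}\,\eps^{n+1}.
\]

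The final step replaces $\pcap{K^*}{z^*}$ with the arbitrary cap $D$. Both caps have widths of order $\eps$, have bases intersected by $r$, and share the point $p_0^* = r \cap \bd K^*$; indeed $p_0^* \in D$ because $t_D \in r$ lies in the interior of $\base(D)$, so the ray $r$ enters $D$ through its base and exits at $p_0^*$. The plan is to show $\pcap{K^*}{z^*} \subseteq D^{\lambda}$ for a universal constant $\lambda$, after which Lemma~\ref{lem:cap-exp} immediately gives $\vol(\pcap{K^*}{z^*}) \leq \lambda^n \vol(D) = 2^{O(n)}\vol(D)$, completing the proof. This containment is the main technical obstacle: the hypothesis $\width(D) \geq \eps$ restricts the tilt of $H_D$ relative to $r$, and a careful geometric argument is needed to bound the absolute distance from any point of $\pcap{K^*}{z^*}$ (which lies within absolute width $\Theta(\eps)$ of $K^*$'s supporting hyperplane at $p_0^*$) to $D$'s own supporting hyperplane by $O(1)$ times $D$'s absolute width, uniformly over all permitted orientations of $H_D$ — with the delicate regime being the one in which $\ray(t_D)$ is much smaller than $\eps$, so that $H_D$ must be substantially tilted to achieve width $\eps$.
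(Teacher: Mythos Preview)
Your setup applies Lemma~\ref{lem:sandwich-dualcaps} on the primal side, with $B=\base(C)$, whereas the paper applies it on the polar side, with $B=\base(D)$. This swap introduces two genuine gaps.

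First, Lemma~\ref{lem:sandwich-dualcaps} requires the ray $Oz$ to meet the base $B$. In your arrangement $Oz=r$ is orthogonal to the hyperplane $H_b$, so it meets $H_b$ at the foot of the perpendicular from $O$; nothing guarantees this foot lies in $\base(C)=K\cap H_b$ (take a well-centered $K$ whose apex for $C$ sits far off the axis of $r$). The hypothesis of Lemma~\ref{lem:vol-product} only promises that $r$ hits the interior of $\base(D)$---exactly the intersection condition needed when the sandwich lemma is invoked in $K^*$ with $D$'s base, but not the one you need in $K$ with $C$'s base.

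Second, even granting the sandwich, your chain yields a lower bound for the particular cap $\pcap{K^*}{z^*}$ (base orthogonal to $r$), and you correctly flag the passage to the given $D$ via $\pcap{K^*}{z^*}\subseteq D^{\lambda}$ as an unresolved obstacle. The paper avoids both issues at once: it first applies a linear transformation in the polar making $r$ vertical \emph{and} $\base(D)$ horizontal; since $H_C^*$ lies on $r$, duality forces $\base(C)$ to be horizontal in the primal as well. Then Lemma~\ref{lem:sandwich-dualcaps} is applied in $K^*$ with the point $z=H_C^*$ and the cap $D$, giving $B_D\subseteq\Upsilon\subseteq 5\Delta(B_D)$ and, via Lemma~\ref{lem:polars}, $B_C=\alpha\,\Upsilon^*$. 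The Mahler--Santal\'o chain then relates $\area(B_C)$ directly to $\area(B_D)$, so no post-hoc comparison between two differently oriented caps of $K^*$ is ever needed. The single linear transformation is the missing idea.
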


\begin{figure}[htbp]
  \centerline{\includegraphics[scale=.8,page=1]{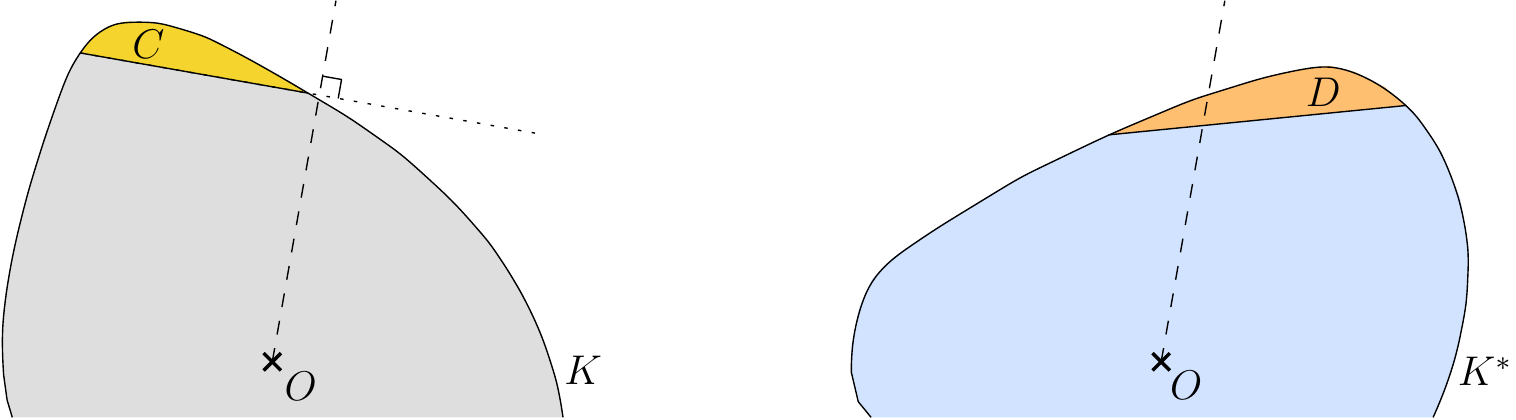}}
  \caption{\label{f:volprod}Statement of Lemma~\ref{lem:vol-product}.}
\end{figure}

\begin{proof}
Let $C'$ be a cap of width $2\eps$ whose base is parallel to the base of $C$ and which is on the same side of the origin as $C$. Clearly such a cap can be obtained by translating the base of $C$ parallel to itself. Note that $C' \subseteq C^2$ and so, by Lemma~\ref{lem:cap-exp}, it follows that $\vol(C') \le 2^{O(n)} \cdot \vol(C)$. Let $r$ denote the ray in the polar space, emanating from the origin of $K^*$ in a direction orthogonal to the base of $C$ (see Figure~\ref{f:volprodproof}). Recall that $r$ intersects the interior of the base of $D$. By Lemma~\ref{lem:cap-tech}, we can find a cap $D' \subseteq D$ whose width is $\eps$ and such that ray $r$ intersects the interior of the base of $D'$.  It is now easy to see that it suffices to prove the lemma with $C'$ and $D'$ in place of $C$ and $D$, respectively. As a convenience, in the remainder of this proof, we will write $C$ and $D$ in place of $C'$ and $D'$, respectively.

\begin{figure}[htbp]
  \centerline{\includegraphics[scale=.8,page=2]{fig/volprod.pdf}}
  \caption{\label{f:volprodproof}Proof of Lemma~\ref{lem:vol-product}.}
\end{figure}

As the product considered in this lemma is affine-invariant, we will apply a suitable linear transformation to simplify the subsequent analysis. Specifically, we apply a linear transformation in the polar space such that the base of $D$ becomes horizontal while the ray $r$ is directed vertically upwards. It is easy to see that the effect of this transformation in the original space is to make the base of cap $C$ horizontal (because it is the polar of a point on ray $r$). To summarize, after the transformation, the hyperplanes passing through the bases of the caps $C$ and $D$ are horizontal and above the origin and as relative measures the widths of both caps are unchanged. Further, the ray $r$ is directed vertically upwards in the polar and intersects the interior of the base of $D$. Also, after uniform scaling, we may assume that the absolute distance between the origin and the supporting hyperplane of cap $C$ that is parallel to its base is unity.

Let $B_C$ denote the base of cap $C$ and $H_C$ denote the hyperplane passing through $B_C$. Also, let $B_D$ denote the base of cap $D$ and $H_D$ denote the hyperplane passing through $B_D$. Define $z = H_C^*$. Note that $z$ lies outside $K^*$ on the ray from the origin directed vertically upwards and $\ray(z) = \width(C) = 2\eps$. By Lemma~\ref{lem:polardcap}, $B_C = (\dcap{K^*}{z})^*$. Define $\Upsilon = \icone{K^*}{z} \cap H_D$. Clearly $\dcap{K^*}{z} = \dcap{\Upsilon}{z}$. Thus $B_C = (\dcap{\Upsilon}{z})^*$.

Let $y$ denote the point of intersection of the vertical ray from $O$ with $B_C$, and let $x$ denote the point of intersection of the vertical ray from $O$ with $B_D$. Henceforth, in this proof, we will treat $y$ as the origin in the primal space and $x$ as the origin in the polar space. Applying Lemma~\ref{lem:polars} (setting $K$ in that lemma to $\Upsilon$), it follows that $B_C = \alpha \Upsilon^*$, where $\alpha = \|x z\| / \|O z\|$. Noting that $B_C$ is $(n-1)$-dimensional and $\alpha = \Theta(\eps)$, it follows that
\[
    \area(B_C) ~ \geq ~ 2^{-O(n)} \eps^{n-1} \cdot \area(\Upsilon^*).
\]
By Lemma~\ref{lem:vol-cap}, we have $\vol(C) \geq 2^{-O(n)} \eps \cdot \area(B_C)$ and $\vol(D) \geq 2^{-O(n)} \eps \cdot \area(B_D)$. Thus,
\begin{equation} \label{eq:mah1}
    \vol(C) \cdot \vol(D) 
        ~ \geq ~ 2^{-O(n)} \eps^2 \cdot \area(B_C) \cdot \area(B_D)
        ~ \geq ~ 2^{-O(n)} \eps^{n+1} \cdot \area(\Upsilon^*) \cdot \area(B_D).
\end{equation}

By Lemma~\ref{lem:sandwich-dualcaps}, $\Upsilon \subseteq B_{\Delta}$, where $B_\Delta = 5 \Delta(B_D)$. Recalling from Lemma~\ref{lem:vol-diffbody} that $\area(\Delta(B_D)) \leq 4^{n-1} \cdot \area(B_D)$, we have
\[
    \area(\Upsilon) 
        ~ \leq ~ \area(B_{\Delta}) 
        ~ =    ~ 5^{n-1} \cdot \area(\Delta(B_D)) 
        ~ \leq ~ 5^{n-1} \cdot 4^{n-1} \cdot \area(B_D) 
        ~ \leq ~ 2^{O(n)} \cdot \area(B_D).
\]
Substituting this bound into Eq.~\eqref{eq:mah1}, we obtain
\[
    \vol(C) \cdot \vol(D) 
        ~ \geq ~ 2^{-O(n)} \eps^{n+1} \cdot \area(\Upsilon^*) \cdot \area(\Upsilon) 
        ~ \geq ~ 2^{-O(n)} \eps^{n+1} \cdot \omega_{n-1}^2,
\]
where we have applied Lemma~\ref{lem:mahler-bounds} to lower bound the Mahler volume in the last step. Since $K$ is well-centered, it follows from Lemma~\ref{lem:centroid} that $K$ satisfies the Santal{\'o} property, that is, $\vol(K) \cdot \vol(K^*) \leq 2^{O(n)} \cdot \omega_n^2$. Recalling the definition of $\omega_n$ from Section~\ref{s:centrality}, we have $\omega_{n-1} / \omega_n = \Theta(\sqrt{n})$. Thus
\[
    \vol_K(C) \cdot \vol_{K^*}(D) 
        ~ \geq ~ 2^{-O(n)} \eps^{n+1},
\]
as desired.
\end{proof}

Finally, we present the main ``take-away'' of this section. This lemma shows that the bound on the product of volumes from the previous lemma holds within the neighborhood of the ray, specifically to any shrunken Macbeath region that intersects the ray.

\begin{lemma} \label{lem:mahler-mac}
Let parameter $\eps$, convex body $K$ and cap $C$ of $K$ be as defined in Lemma~\ref{lem:vol-product}. Suppose that the ray $r$ shot from the origin orthogonal to the base of $C$ intersects a Macbeath region $M^{1/5}(x)$ of $K^*$, where $\ray(x) = \eps$ (see Figure~\ref{f:mahlermac}). Then 
\[
    \vol_K(C) \cdot \vol_{K^*}(M^{1/5}(x)) 
        ~ \geq ~ 2^{-O(n)} \eps^{n+1}.
\]
\end{lemma}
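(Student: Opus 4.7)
The plan is to reduce the statement to Lemma~\ref{lem:vol-product} by constructing a cap $D$ of $K^*$ that serves as a ``proxy'' for $M^{1/5}(x)$: it will satisfy $\width(D) \geq \eps$, its base will be intersected in the interior by $r$, and its volume will be at most $2^{O(n)} \vol(M^{1/5}(x))$. Once such a $D$ is in hand, Lemma~\ref{lem:vol-product} applied to $C$ and $D$ gives $\vol_K(C) \cdot \vol_{K^*}(D) \geq 2^{-O(n)} \eps^{n+1}$, and absorbing the $2^{O(n)}$ volume ratio yields the claimed bound.

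For the construction, I would pick any $y \in r \cap M^{1/5}(x)$; since $\ray(x) = \eps > 0$, the shrunken Macbeath region $M^{1/5}(x)$ lies in $\interior(K^*)$, so $y$ is interior. Let $D^*$ be the minimum volume cap of $K^*$ whose base passes through $y$, so by the standard property $y$ is the centroid of the base of $D^*$. Since $\ray(x) = \eps \leq 1/8$, the point $x$ is $\frac{1}{2}$-shallow, so Lemma~\ref{lem:core-ray} applied in $K^*$ gives $\ray(y) \geq \ray(x)/2 = \eps/2$, and then Lemma~\ref{lem:raydist-width} yields $\width(D^*) \geq \ray(y) \geq \eps/2$. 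If $\width(D^*) \geq \eps$, set $D = D^*$; otherwise, set $D = (D^*)^{\lambda}$ with $\lambda = \eps/\width(D^*) \in (1,2]$, so that $\width(D) = \eps$. Since $\eps \leq 1/8 < 1$, the expanded cap is proper (its base hyperplane does not reach the origin), and by Lemma~\ref{lem:cap-exp}, $\vol(D) \leq 2^n \vol(D^*)$. In either case, $r$ meets the interior of the base of $D$: if $D = D^*$, at the interior point $y$; in the expanded case, at a point between $O$ and $y$ along $r$, which is interior to $K^*$ by convexity.

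The final step will be to bound $\vol(D)$ against $\vol(M^{1/5}(x))$. Corollary~\ref{cor:min-vol-cap2} applied inside $K^*$ to $y$ gives $\vol(D^*) \leq 2^{O(n)} \vol(M(y))$. Since $y$ lies in both $M^{1/5}(x)$ and $M^{1/5}(y)$, Lemma~\ref{lem:mac-mac} yields $M^{1/5}(y) \subseteq M^{4/5}(x)$, whence $\vol(M(y)) = 5^n \vol(M^{1/5}(y)) \leq 5^n \cdot 4^n \vol(M^{1/5}(x))$, giving the desired estimate $\vol(D) \leq 2^n \vol(D^*) \leq 2^{O(n)} \vol(M^{1/5}(x))$. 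The main obstacle in this plan is the choice of anchor point: the most natural candidate, the minimum volume cap at $x$, need not have $r$ transverse to its base, so one must anchor at a point $y$ on $r$ inside $M^{1/5}(x)$ and then transfer the Macbeath-volume estimate from $y$ back to $x$ via Lemma~\ref{lem:mac-mac}. This Macbeath-to-Macbeath transfer is precisely what keeps the loss at a factor of $2^{O(n)}$ rather than the $n^{O(n)}$ factors that a cruder argument would incur.
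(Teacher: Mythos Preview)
Your proposal is correct and follows essentially the same route as the paper: pick $y \in r \cap M^{1/5}(x)$, take the minimum volume cap $D$ of $K^*$ at $y$, bound $\vol(D)$ against $\vol(M^{1/5}(x))$ via Corollary~\ref{cor:min-vol-cap2} and Lemma~\ref{lem:mac-mac}, lower bound $\width(D)$ via Lemmas~\ref{lem:core-ray} and~\ref{lem:raydist-width}, and invoke Lemma~\ref{lem:vol-product}. The only cosmetic difference is that the paper simply applies Lemma~\ref{lem:vol-product} with parameter $\eps/2$ (absorbing the resulting $2^{-(n+1)}$ into the $2^{-O(n)}$), whereas you first expand $D^*$ to width exactly $\eps$; both are fine.
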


\begin{figure}[htbp]
  \centerline{\includegraphics[scale=.8]{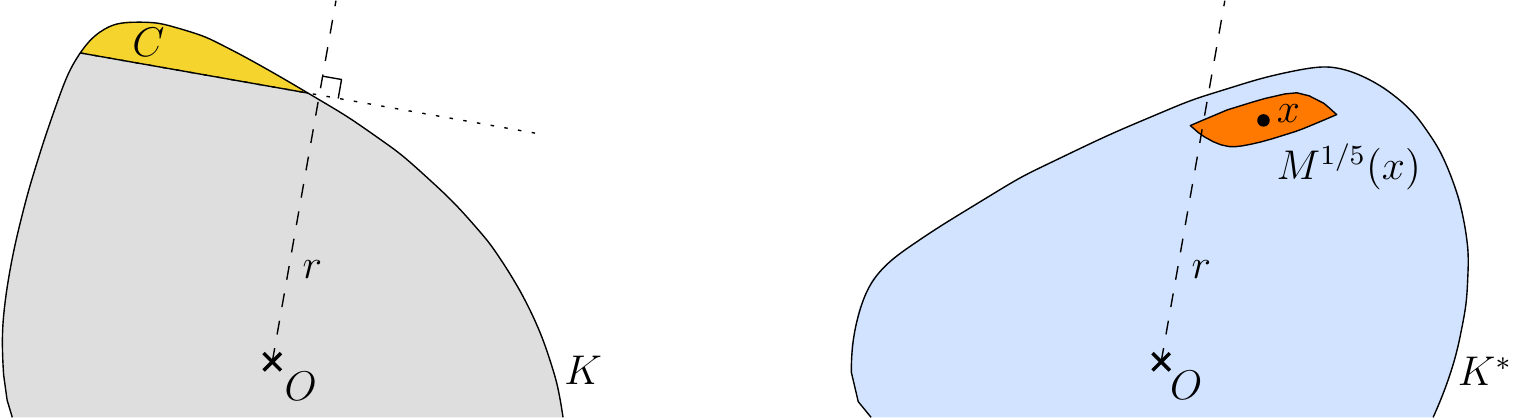}}
  \caption{\label{f:mahlermac}Statement of Lemma~\ref{lem:mahler-mac}.}
\end{figure}

\begin{proof}
Let $y$ be a point in the intersection of the ray $r$ with $M^{1/5}(x)$ and let $D$ denote the minimum volume cap of $K^*$ that contains $y$. Since $M^{1/5}(y) \cap M^{1/5}(x) \neq \emptyset$, by Lemma~\ref{lem:mac-mac}, we have $M^{1/5}(y) \subseteq M^{4/5}(x)$. Thus $\vol(M^{1/5}(x)) \geq 2^{-O(n)} \cdot \vol(M^{1/5}(y))$. Also, by Corollary~\ref{cor:min-vol-cap2}, we have $\vol(M^{1/5}(y)) \geq 2^{-O(n)} \cdot \vol(D)$. Thus $\vol(M^{1/5}(x)) \geq 2^{-O(n)} \cdot \vol(D)$. To complete the proof, it suffices to show the inequality given in the statement of the lemma with $D$ in place of $M^{1/5}(x)$. By Lemma~\ref{lem:core-ray}, we have $\ray(y) \geq \ray(x) / 2$, and by Lemma~\ref{lem:raydist-width}, we have $\width(D) \geq \ray(y)$. Thus $\width(D) \geq \ray(x)/2 = \eps / 2$. Applying Lemma~\ref{lem:vol-product} on caps $C$ and $D$, the desired inequality now follows.
\end{proof}

\section{Covers of Convex Bodies} \label{s:cover}

As mentioned earlier, we employ a Macbeath region-based adaptation of $(c,\eps)$-coverings in our solution to approximate CVP. Since our construction will involve composing coverings of various regions of $K$, we define our coverings in the following restricted manner. Let $K \subseteq \RE^n$ be a convex body, let $\Lambda$ be an arbitrary subset of $\interior(K)$, and let $c \geq 2$ be any constant. Define a \emph{$\Lambda$-limited $c$-covering} to be a collection $\QQ$ of convex bodies that cover $\Lambda$, such that the $c$-factor expansion of each body about its centroid is contained within $K$. 

Our coverings will be based on Macbeath regions. Given $X \subseteq K$, define $\MM_K^{\lambda}(X) = \{ M_K^{\lambda}(x) : x \in X\}$. Define a \emph{$(K, \Lambda, c)$-MNet} to be any maximal set of points $X \subseteq \Lambda$ such that the shrunken Macbeath regions $\MM_K^{1/4c}(X)$ are pairwise disjoint. Through basic properties of Macbeath regions, we can obtain a covering by suitable expansion as shown in the following lemma, which summarizes the properties of MNets.

\begin{lemma} \label{lem:delone}
Given a convex body $K \subseteq \RE^n$, $\Lambda \subset \interior(K)$, and $c \ge 2$, a $(K,\Lambda,c)$-MNet $X$ satisfies the following properties:
\begin{enumerate}
    \item[$(a)$] (Packing) The elements of $\MM_K^{1/4c}(X)$ are pairwise disjoint.
    \item[$(b)$] (Covering) The union of $\MM_K^{1/c}(X)$ covers $\Lambda$.
    \item[$(c)$] (Buffering) The union of $\MM_K(X)$ is contained within $K$.
\end{enumerate}
\end{lemma}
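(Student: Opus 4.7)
The proof is short and I would organize it around the three clauses.

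The plan is to dispatch parts (a) and (c) immediately from definitions, and then spend the bulk of the argument on part (b), which is the only place where the maximality of the MNet is actually used. For part (a), pairwise disjointness of the regions in $\MM_K^{1/4c}(X)$ is literally the defining property of a $(K,\Lambda,c)$-MNet, so nothing needs to be done. For part (c), I would simply recall that $M_K(x) = M_K^1(x)$ is by definition the largest centrally symmetric body centered at $x$ that is contained in $K$; in particular $M_K(x) \subseteq K$ for each $x \in X$, and hence the union is contained in $K$.

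The substantive step is part (b). I would argue by contradiction: suppose some $y \in \Lambda$ is not covered by $\bigcup \MM_K^{1/c}(X)$, and derive a contradiction with the maximality of $X$. The key is to show that $M_K^{1/4c}(y)$ is disjoint from every $M_K^{1/4c}(x)$ with $x \in X$, which would allow $y$ to be added to $X$ without destroying the disjointness property. Suppose for contradiction that $M_K^{1/4c}(y) \cap M_K^{1/4c}(x) \neq \emptyset$ for some $x \in X$. Since $c \geq 2$, we have $1/4c \leq 1/8 \leq 1/5$, so Lemma~\ref{lem:mac-mac} applies with $\lambda = 1/4c$, yielding
\[
    M_K^{1/4c}(y) ~\subseteq~ M_K^{4 \cdot 1/4c}(x) ~=~ M_K^{1/c}(x).
\]
In particular $y \in M_K^{1/4c}(y) \subseteq M_K^{1/c}(x) \subseteq \bigcup \MM_K^{1/c}(X)$, contradicting the assumption that $y$ is uncovered. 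Hence $M_K^{1/4c}(y)$ is disjoint from $\MM_K^{1/4c}(X)$, so $X \cup \{y\}$ is a strictly larger set satisfying the disjointness property, contradicting the maximality of $X$.

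The only thing worth flagging is the verification that $1/4c \leq 1/5$ (required to invoke Lemma~\ref{lem:mac-mac}), which is why the hypothesis $c \geq 2$ is present — there is no genuine obstacle here, merely bookkeeping. Everything else is immediate from the definitions of Macbeath regions and of a $(K,\Lambda,c)$-MNet.
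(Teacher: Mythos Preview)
Your proof is correct and matches the paper's own argument essentially line for line: parts~(a) and~(c) are dismissed as immediate from the definitions, and part~(b) follows from maximality together with Lemma~\ref{lem:mac-mac} applied at scale $1/4c$. The only cosmetic difference is that the paper states part~(b) directly (``by maximality there is $x \in X$ with $M^{1/4c}(x) \cap M^{1/4c}(y) \neq \emptyset$, hence $y \in M^{1/c}(x)$'') whereas you phrase the same implication as a proof by contradiction.
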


\begin{proof}
Part~(a) is an immediate consequences of the definition. Part~(c) follows by basic properties of Macbeath regions. To prove part (b), let $\lambda = 1/c$ and consider any point $y \in \Lambda$. By maximality, there is $x \in X$ such that $M^{\lambda/4}(x)$ overlaps $M^{\lambda/4}(y)$. By Lemma~\ref{lem:mac-mac}, $M^{\lambda/4}(y) \subseteq M^{\lambda}(x)$, which implies that $y \in M^{\lambda}(x)$.
\end{proof}

Observe that property~(b) implies that if $X$ is a $(K, \Lambda, c)$-MNet, then $\MM_K^{1/c}(X)$ is a $\Lambda$-limited $c$-covering. Further, recalling that $K_{\eps} = (1+\eps)K$, if $X$ is a $(K_{\eps}, K, c)$-MNet, then $\MM_K^{1/c}(X)$ is a $(c,\eps)$-covering of $K$ (see Figure~\ref{f:cover}).

\begin{figure}[htbp]
\centering
\includegraphics[scale=.8,page=1]{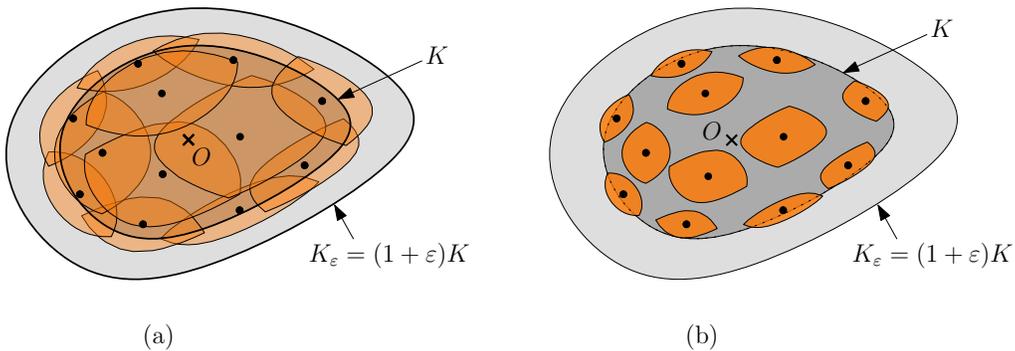}
\caption{\label{f:cover} (a) A $(c,\eps)$-covering of $K$ by Macbeath regions. (b) The corresponding maximal set of disjoint Macbeath regions.}
\end{figure}

\subsection{Instance Optimality} \label{s:instance-opt}

In this section we show that an MNet for $K_{\eps}$ naturally generates an \emph{instance optimal} $(2,\eps)$-covering in the sense that its size cannot exceed that of any $(2,\eps)$-covering of $K$ by a factor of $2^{O(n)}$ (Lemma~\ref{lem:cover-inst} and Theorem~\ref{thm:cover-inst}). It is worth noting that this fact holds irrespective of the location of the origin in $\interior(K)$. In other words, we require no centrality assumptions for this result.

We begin with two lemmas that are straightforward adaptations of lemmas in \cite{NaV22}. The first lemma shows that one incurs a size penalty of only $2^{O(n)}$ by restricting to $c$-coverings to centrally symmetric convex bodies. The second shows that a constant change in the expansion factor results in a similar penalty.

\begin{lemma} \label{lem:cover-sym}
Let $c \ge 2$ be a constant. Let $Q \subseteq \RE^n$ be a convex body with its centroid at the origin. There exists a set of $2^{O(n)}$ centrally symmetric convex bodies which together cover $Q$, such that the central $c$-expansion of any of these bodies is contained within $2 Q$.
\end{lemma}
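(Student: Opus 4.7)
The plan is to apply the MNet framework of Lemma~\ref{lem:delone} with the ambient body $K := 2Q$ and the target region $\Lambda := Q$. Since the centroid of $Q$ lies at the origin, $0 \in \interior(Q)$, and by convexity $Q \subset \interior(2Q)$, so the hypotheses of Lemma~\ref{lem:delone} are met. Let $X$ be a $(2Q, Q, c)$-MNet, and take the proposed covering to be $\MM_{2Q}^{1/c}(X)$. Each element is a Macbeath region and is thus centrally symmetric about its center $x \in X$; Lemma~\ref{lem:delone}(b) shows that the union covers $Q$; and the $c$-expansion of $M_{2Q}^{1/c}(x)$ about its center equals $M_{2Q}(x) \subseteq 2Q$ by Lemma~\ref{lem:delone}(c). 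This handles all structural requirements of the lemma.

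It remains to prove that $|X| \le 2^{O(n)}$. By Lemma~\ref{lem:delone}(a), the bodies $\{M_{2Q}^{1/(4c)}(x) : x \in X\}$ are pairwise disjoint subsets of $2Q$, so a standard volume packing bound reduces the problem to a uniform lower bound of the form $\vol(M_{2Q}(x)) \ge 2^{-O(n)} \vol(Q)$ valid for every $x \in Q$. Given this bound, and using $\vol(2Q) = 2^n \vol(Q)$ together with $c = O(1)$, the packing inequality $|X| \cdot (4c)^{-n} \cdot 2^{-O(n)} \vol(Q) \le 2^n \vol(Q)$ yields $|X| \le 2^{O(n)}$.

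The main step, and the only place where the centroid hypothesis enters, is the geometric containment
\[
    M_{2Q}(x) ~\supseteq~ x + (Q \cap -Q) \qquad \text{for every } x \in Q.
\]
Recalling the definition $M_{2Q}(x) = x + \bigl((2Q - x) \cap (x - 2Q)\bigr)$, this reduces to showing $Q \cap -Q \subseteq (2Q - x) \cap (x - 2Q)$, which follows by two midpoint arguments: for $z \in Q \cap -Q$, we have $(z+x)/2 \in Q$ by convexity of $Q$ (whence $z \in 2Q - x$), and since $-z \in Q$ also $(x - z)/2 \in Q$ (whence $z \in x - 2Q$). Because the centroid of $Q$ is at the origin, Lemma~\ref{lem:centroid} yields the Kovner-Besicovitch inequality $\vol(Q \cap -Q) \ge 2^{-O(n)} \vol(Q)$, which combined with the containment gives the required volume lower bound on $M_{2Q}(x)$. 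The main obstacle is spotting the containment $M_{2Q}(x) \supseteq x + (Q \cap -Q)$; once it is in hand, the centrality assumption via Kovner-Besicovitch and the MNet packing/covering machinery finish the argument mechanically.
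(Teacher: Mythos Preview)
Your proof is correct. The approach is close in spirit to the paper's but packaged differently: the paper covers $Q$ by translates of the \emph{fixed} body $\tfrac{1}{c}(Q\cap -Q)$ and argues expansion, packing, and coverage directly, whereas you invoke the MNet machinery with ambient body $2Q$ and take Macbeath regions $M_{2Q}^{1/c}(x)$ as the covering elements. The geometric core is identical in both---your key containment $x + (Q\cap -Q) \subseteq M_{2Q}(x)$ is exactly the statement $x + R \subseteq 2Q$ (with $R = Q\cap -Q = -R$) that the paper uses for its expansion property, and both proofs then appeal to Kovner--Besicovitch via Lemma~\ref{lem:centroid} for the volume lower bound. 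Your version buys uniformity with the rest of the paper's framework; the paper's version is slightly more explicit and yields covering elements that are all congruent, which is not needed here but is a mild bonus.
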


\begin{proof}
Let $R = M_Q(O) = Q \cap -Q$, and let $R' =  \frac{1}{c} R$ and $R'' = \frac{1}{2c} R$. Clearly, all these bodies are centrally symmetric about the origin. By Lemma~\ref{lem:centroid}, $\vol(R) \geq 2^{-O(n)} \vol(Q)$, and since $c$ is a constant, the volumes of $R'$ and $R''$ are similarly bounded. Let $X \subset Q$ be a maximal discrete set of points such that the translates $X \oplus R'' = \{x + R'' : x \in X\}$ are pairwise disjoint. We will show that the bodies $X \oplus R'$ satisfy the lemma.

To establish the expansion property, observe that for all $x \in X$, $x + c R' = x + R \subseteq Q \oplus R \subseteq 2 Q$. To prove the size bound, by disjointness we have
\[
    |X| \cdot \vol(R'') 
        ~ \leq ~ \vol(2 Q) 
        ~ \leq ~ 2^{O(n)} \vol(Q) 
        ~ \leq ~ 2^{O(n)} \vol(R''),
\]
and therefore $|X| = 2^{O(n)}$. Finally, to prove coverage, consider any $y \in Q$. By maximality there exists $x \in X$ such that $x + R''$ overlaps $y + R''$. Since $c \geq 2$, it follows that $y \in x + 2 R'' = x + R'$.
\end{proof}

\begin{lemma} \label{lem:cover-transform}
Let $K \subseteq \RE^n$ be a convex body, let $\Lambda \subset \interior(K)$, and let $c \ge 2$ be a constant. Let $\QQ$ be a $\Lambda$-limited $c$-covering with respect to $K$. For any constant $c' \ge 2$, there exists a $\Lambda$-limited $c'$-covering with respect to $K$ consisting of centrally symmetric convex bodies whose size is at most $2^{O(n)} |\QQ|$.
\end{lemma}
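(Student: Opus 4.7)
The plan is to apply Lemma~\ref{lem:cover-sym} body-by-body to the given covering $\QQ$. Specifically, each $Q \in \QQ$ will be replaced by at most $2^{O(n)}$ centrally symmetric convex bodies that together cover $Q$ and satisfy the $c'$-expansion requirement with respect to $K$. Summing over $|\QQ|$ yields a total of $2^{O(n)} |\QQ|$ bodies, matching the claimed bound.

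To execute this, fix $Q \in \QQ$ with centroid $g$. Translate so that $g$ is at the origin; by hypothesis the central $c$-expansion satisfies $cQ \subseteq K$ (relative to the translated copy of $K$). Invoke Lemma~\ref{lem:cover-sym} with its expansion parameter taken to equal $c'$ (permissible since $c' \geq 2$). This produces a family of at most $2^{O(n)}$ centrally symmetric convex bodies covering $Q$ whose central $c'$-expansion is contained in $2Q$. Because $c \geq 2$, we have $2Q \subseteq cQ \subseteq K$, so the $c'$-expansion of every body in the family lies in $K$. Since each body is centrally symmetric, its center of symmetry coincides with its centroid, so the expansion about the center matches the expansion about the centroid demanded by the definition of a $\Lambda$-limited $c'$-covering. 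Translate back by $+g$ and repeat this procedure for every $Q \in \QQ$.

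The union of all resulting families covers $\bigcup_{Q \in \QQ} Q \supseteq \Lambda$, so this new collection is a $\Lambda$-limited $c'$-covering of the required size, consisting entirely of centrally symmetric convex bodies. There is no significant obstacle to the argument: Lemma~\ref{lem:cover-sym} does all the heavy lifting, and the only nontrivial verification is the inclusion $2Q \subseteq K$, which follows immediately from $c \geq 2$ together with the hypothesis $cQ \subseteq K$. The implicit constant in $2^{O(n)}$ depends on $c'$, but since $c'$ is a constant this dependence is harmless.
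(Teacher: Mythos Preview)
Your proposal is correct and follows essentially the same approach as the paper: apply Lemma~\ref{lem:cover-sym} to each $Q \in \QQ$ (with expansion parameter $c'$) to replace it by $2^{O(n)}$ centrally symmetric bodies whose $c'$-expansion lies in $2Q \subseteq cQ \subseteq K$, then take the union. Your write-up is in fact slightly more explicit than the paper's, which simply asserts that ``it is easy to see'' the result is a $\Lambda$-limited $c'$-covering; you spell out the chain $2Q \subseteq cQ \subseteq K$ and the centroid/center coincidence for symmetric bodies.
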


\begin{proof}
By Lemma~\ref{lem:cover-sym}, we can replace each body $Q \in \QQ$ by a set of $2^{O(n)}$ centrally symmetric convex bodies which together cover $Q$ and such that the $c'$-expansion of any of these bodies is contained within the 2-expansion of $Q$ (about its centroid). It is easy to see that the resulting set of bodies is a $\Lambda$-limited $c'$-cover with respect to $K$ with the desired size.
\end{proof}

We are now ready to show that a $(K,\Lambda,c)$-MNet can be used to generate an instance-optimal limited covering.

\begin{lemma} \label{lem:cover-inst}
Let $K \subseteq \RE^n$ be a convex body, let $\Lambda \subset \interior(K)$, and let $c \geq 2$ be a constant. Let $X$ be a $(K,\Lambda,c)$-MNet, and let $\MM = \MM_K^{1/c}(X)$ be the associated $\Lambda$-limited $c$-covering with respect to $K$. Given any $\Lambda$-limited $c$-covering $\QQ$ with respect to $K$, $|\MM| \leq 2^{O(n)} |\QQ|$.
\end{lemma}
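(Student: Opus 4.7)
The plan is to first reduce to the case where $\QQ$ consists of centrally symmetric convex bodies, then charge each MNet point to a covering body and bound each charge by a volumetric packing argument. Applying Lemma~\ref{lem:cover-transform} with $c' = c$, I may replace $\QQ$ by a $\Lambda$-limited $c$-covering $\QQ'$ of centrally symmetric convex bodies satisfying $|\QQ'| \le 2^{O(n)} |\QQ|$. Since $X \subseteq \Lambda$ is covered by $\QQ'$, every $x \in X$ lies in at least one $B \in \QQ'$. It therefore suffices to show that $|X \cap B| \le 2^{O(n)}$ for each $B \in \QQ'$; summing then gives $|\MM| = |X| \le 2^{O(n)} |\QQ'| \le 2^{O(n)} |\QQ|$.

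The geometric core of the argument is to show that for each $B \in \QQ'$ with center $b$ and each $x \in B$, the scaled translate $x + \tfrac{1}{4c}(B - b)$ lies inside $M_K^{1/4c}(x)$. After translating so that $b$ is at the origin, $B$ is centered at the origin with $cB \subseteq K$, and it suffices to show $B \subseteq (K - x) \cap (x - K)$. For any $y \in B$, convexity of $B$ yields $y + x \in B + B = 2B \subseteq cB \subseteq K$ (the final containment uses $c \ge 2$ together with $0 \in B$); applying the same reasoning to $-y \in B$ gives $x - y \in K$. Thus $y \in (K - x) \cap (x - K)$, and scaling the containment $B \subseteq (K-x) \cap (x-K)$ by $1/4c$ about $x$ gives $x + \tfrac{1}{4c}B \subseteq M_K^{1/4c}(x)$. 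I expect this embedding step to be the main subtlety, as it is precisely where central symmetry of $B$ and the expansion condition $cB \subseteq K$ combine to place a faithful scaled copy of $B$ inside the shrunken Macbeath region; without symmetry (as in the original $\QQ$), no such embedding is available, which is what forces the reduction via Lemma~\ref{lem:cover-transform}.

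The packing estimate is then routine. Fix $B \in \QQ'$ and set $N_B = |X \cap B|$. By Lemma~\ref{lem:delone}(a), the regions $\{M_K^{1/4c}(x) : x \in X\}$ are pairwise disjoint, hence so are their subsets $\{x + \tfrac{1}{4c}(B - b) : x \in X \cap B\}$. Since $x \in B$, each such set lies in the Minkowski sum $B + \tfrac{1}{4c}(B - b) = (1 + \tfrac{1}{4c}) B$ (scaled about $b$). Comparing volumes gives
\[
    N_B \cdot \left(\tfrac{1}{4c}\right)^n \vol(B)
        ~ \le ~ \left(1 + \tfrac{1}{4c}\right)^n \vol(B),
\]
so $N_B \le (4c + 1)^n = 2^{O(n)}$ since $c$ is a constant. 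Summing over $B \in \QQ'$ completes the proof.
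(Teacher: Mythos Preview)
Your proof is correct and, while it shares the same opening reduction via Lemma~\ref{lem:cover-transform}, the core of your argument is genuinely different from the paper's. The paper passes to a $\Lambda$-limited $5$-covering $\QQ'$, observes that each $Q \in \QQ'$ centered at $y$ satisfies $Q \subseteq M^{1/5}(y)$, assigns each $x \in X$ to such a $y$, and then invokes the Macbeath proximity lemma (Lemma~\ref{lem:mac-mac}) twice to get two-sided volume comparability $M^{1/4c}(x) \subseteq M^{4/5}(y)$ and $\vol(M^{1/4c}(x)) \ge 2^{-O(n)} \vol(M^{4/5}(y))$, finishing with a packing argument inside $M^{4/5}(y)$. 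You instead stay with $c' = c$, and for each $x \in B$ embed a \emph{congruent} scaled translate $x + \tfrac{1}{4c}(B-b)$ directly into $M_K^{1/4c}(x)$ using only the definition of the Macbeath region together with $B+B = 2B \subseteq cB \subseteq K$; the packing bound then compares identical-volume translates inside $(1+\tfrac{1}{4c})B$. Your route is more elementary in that it bypasses Lemma~\ref{lem:mac-mac} entirely and never needs to compare volumes of different Macbeath regions; the paper's route, on the other hand, keeps the argument phrased purely in terms of Macbeath regions, which is thematically consistent with the rest of the paper. One minor wording nit: the parenthetical ``the final containment uses $c \ge 2$ together with $0 \in B$'' actually justifies the \emph{first} containment $2B \subseteq cB$, not the second.
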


\begin{proof}
By Lemma~\ref{lem:cover-transform}, there exists a $\Lambda$-limited 5-covering with respect to $K$ consisting of at most $2^{O(n)} |\QQ|$ centrally symmetric convex bodies. Let $\QQ'$ denote this covering, and let $Y$ denote the set of centers of these bodies. Consider any $Q \in \QQ'$, and let $y$ denote its center. By definition, $M(y) = M_K(y)$ is the largest centrally symmetric body centered at $y$ that is contained within $K$. Since $Q$ is a centrally symmetric convex body whose 5-expansion about $y$ is contained within $K$, it follows that $Q \subseteq M^{1/5}(y)$. Therefore, $\MM^{1/5}(Y)$ is a $\Lambda$-limited 5-covering of the same cardinality as $\QQ'$. 

By the packing property of Lemma~\ref{lem:delone}, the Macbeath regions $\MM^{1/4c}(X)$ are pairwise disjoint. To relate these two coverings, assign each $x \in X$ to any $y \in Y$ such that $x \in M^{1/5}(y)$. We will show that at most $2^{O(n)}$ elements of $X$ are assigned to any $y \in Y$. Assuming this for now, we have 
\[
    |\MM|
        ~ =    ~ |X| 
        ~ \leq ~ 2^{O(n)} |Y| 
        ~ =    ~ 2^{O(n)} |\QQ'| 
        ~ \leq ~ 2^{O(n)} |\QQ|,
\]
thus completing the proof.

To prove the assertion, consider any $x \in X$ assigned to some $y \in Y$. Since $M^{1/5}(x) \cap M^{1/5}(y) \neq \emptyset$, by Lemma~\ref{lem:mac-mac} and the fact that $c \geq 2$, we have
\[
    M^{1/4c}(x)
        ~ \subseteq ~ M^{1/5}(x) 
        ~ \subseteq ~ M^{4/5}(y).
\]
Lemma~\ref{lem:mac-mac} also implies that $M^{1/5}(y) \subseteq M^{4/5}(x)$, and so $\vol(M^{1/4c}(x)) \geq 2^{-O(n)} \vol(M^{4/5}(y))$. Since the Macbeath regions of $M^{1/4c}(X)$ are pairwise disjoint, by a simple packing argument, the number of points of $X$ assigned to any $y \in Y$ is at most $2^{O(n)}$, as desired.
\end{proof}

Recall that a $K$-limited $c$-covering with respect to $K_{\eps} = (1+\eps) K$ is a $(c,\eps)$-covering for $K$. Applying the above lemma in this case, we obtain the main result of this section.

\begin{theorem} \label{thm:cover-inst}
Let $0 < \eps \leq 1$, let $K \subseteq \RE^n$ be a convex body such that $O \in \interior(K)$, and let $c \ge 2$ be a constant. Let $X$ be a $(K_{\eps}, K, c)$-MNet, and let $\MM = \MM_{K_\eps}^{1/c}(X)$ be the associated $(c,\eps)$-covering with respect to $K$. Given any $(c,\eps)$-covering $\QQ$ with respect to $K$, $|\MM| \leq 2^{O(n)} |\QQ|$.
\end{theorem}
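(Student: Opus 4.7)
The plan is to derive Theorem \ref{thm:cover-inst} as an essentially immediate specialization of Lemma \ref{lem:cover-inst}, by recognizing that the notion of a $(c,\eps)$-covering of $K$ coincides precisely with the notion of a $K$-limited $c$-covering with respect to the convex body $K_{\eps} = (1+\eps)K$. Unpacking both definitions, each requires a collection of convex bodies whose union covers $K$ such that the $c$-fold expansion of every body about its centroid is contained in $K_{\eps}$.

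First I would verify that the hypotheses of Lemma \ref{lem:cover-inst} are met when we take the outer body to be $K_{\eps}$, the subset to be $\Lambda = K$, and the constant to be $c$. Since $O \in \interior(K)$ and $\eps > 0$, uniform scaling by $1+\eps$ about the origin maps every point of $K$ strictly into $\interior(K_{\eps})$, so $K \subset \interior(K_{\eps})$ as required.

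Second, I would invoke Lemma \ref{lem:cover-inst} with these parameters. The MNet produced there is by definition a $(K_{\eps}, K, c)$-MNet, so it agrees with the set $X$ appearing in the theorem, and the corresponding covering $\MM_{K_{\eps}}^{1/c}(X)$ is exactly the collection $\MM$ in the theorem. Any $(c,\eps)$-covering $\QQ$ of $K$, reinterpreted as a $K$-limited $c$-covering with respect to $K_{\eps}$, is a valid competitor in Lemma \ref{lem:cover-inst}, which yields $|\MM| \leq 2^{O(n)} |\QQ|$, as desired.

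I do not anticipate a genuine obstacle in this step; the entire technical content is already packaged inside Lemma \ref{lem:cover-inst} (which in turn leans on Lemmas \ref{lem:cover-sym} and \ref{lem:cover-transform} for symmetrization and expansion adjustment, and on Lemma \ref{lem:mac-mac} to control overlaps among shrunken Macbeath regions). The only thing to be careful about is the purely definitional unpacking that identifies the two kinds of coverings, and in particular noting that no centrality assumption on $K$ is required: the argument goes through for any $K$ containing the origin in its interior, matching the statement of the theorem.
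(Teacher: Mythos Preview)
Your proposal is correct and matches the paper's own argument essentially verbatim: the paper derives Theorem~\ref{thm:cover-inst} by observing that a $(c,\eps)$-covering of $K$ is precisely a $K$-limited $c$-covering with respect to $K_{\eps}$ and then invoking Lemma~\ref{lem:cover-inst}. Your additional remark that no centrality assumption is needed is also noted explicitly in the paper.
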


\subsection{Worst-Case Optimality} \label{s:worst-opt}

Our main result in this section, given in Lemma~\ref{lem:cover-worst}, establishes the existence of a $(c,\eps)$-covering of size $2^{O(n)}/\eps^{(n-1)/2}$. This directly implies Theorem~\ref{thm:cover-worst}. Before presenting this result, it will be useful to first establish a bound on the maximum number of disjoint Macbeath regions associated with $\Theta(\eps)$-width caps. The proof is based on the relationship between caps in $K$ and $K^*$.

Let $K \subseteq \RE^n$ be a well-centered convex body. Given $0 < \eps \leq \frac{1}{32}$, let $\Lambda \subseteq K$ denote the centroids of the bases of all caps whose relative widths are between $\eps$ and $2\eps$. Given a constant $c \geq 2$, let $X$ be a $(K, \Lambda, c)$-MNet, and let $\MM(X) = \MM_K^{1/c}(X)$ be the associated covering. We will show that $|X| \le 2^{O(n)} / \eps^{(n-1)/2}$, which will imply a similar bound on the size of the associated $\Lambda$-limited $c$-covering. 

Recall that for any region $\Lambda \subseteq K$, its relative volume is $\vol_K(\Lambda) = \vol(\Lambda)/\vol(K)$. Let $t = \eps^{(n+1)/2}$. Define $X_{\geq t} = \{x \in X : \vol_K(M_K^{1/c}(x)) \geq t\}$ to be the centers of the ``large'' Macbeath regions in the covering of relative volume at least $t$, and let $X_{< t} = X \setminus X_{\geq t}$ denote the centers of the remaining ``small'' Macbeath regions.

To bound the number of small Macbeath regions, we will make use of the polar body $K^*$. Let $\Lambda'$ denote the boundary of $(1-\eps) K^*$. Let $Y$ be a $(K^*, \Lambda', 5)$-MNet, and let $\MM(Y) = \MM_{K^*}^{1/5}(Y)$ be the associated covering. Let $t' = 2^{-O(n)} \eps^{(n+1)/2}$, where the constant hidden in $O(n)$ is sufficiently large, and analogously define $Y_{\geq t'} = \{y \in Y : \vol_{K^*}(M_{K^*}^{1/5}(y)) \geq t'\}$ to be the set of centers of the ``large'' Macbeath regions in the polar covering $\MM(Y)$ whose relative volume is at least $t'$.

The following lemma summarizes the essential properties of the resulting Macbeath regions. 

\begin{lemma} \label{lem:layer}
Given a well-centered convex body $K \subseteq \RE^n$, $0 < \eps \leq \frac{1}{32}$, constant $c \ge 2$, and the entities $\Lambda$, $\Lambda'$, $X$, $Y$, $t$, and $t'$ defined above, the following hold:
\begin{enumerate}
    \item[$(a)$] The regions $\MM_K^{1/c}(X)$ are contained in $\Lambda_K(\eps) = K \setminus (1-4\eps)K$, and $\vol_K(\Lambda_K(\eps)) = O(n \eps)$.
    \item[$(b)$] For any $x \in X_{\geq t}$, $\vol_K(M^{1/c}(x)) \geq \eps^{(n+1)/2}$, and $|X_{\geq t}| \le 2^{O(n)} / \eps^{(n-1)/2}$.
    \item[$(c)$] The regions $\MM_{K^*}^{1/5}(Y)$ are contained in $\Lambda_{K^*}(\eps) = K^* \setminus (1-2\eps)K^*$, and $\vol_{K^*}(\Lambda_{K*}(\eps)) = O(n \eps)$.
    \item[$(d)$] For any $y \in Y_{\geq t'}$, $\vol_{K^*}(M^{1/5}(y)) \geq 2^{-O(n)} \eps^{(n+1)/2}$, and $|Y_{\geq t'}| \le 2^{O(n)} / \eps^{(n-1)/2}$.
    \item[$(e)$] For any $x \in X_{< t}$, there is $y \in Y_{\geq t'}$ such that for any point $z \in M^{1/5}(y)$, we have $M^{1/c}(x) \subseteq C_z^{32}$, and $\vol(M^{1/c}(x)) \geq 2^{-O(n)} \vol(C_z^{32})$, where $C_z \subseteq K$ is $z$'s $\eps$-representative cap.
    \item[$(f)$] $|X| \le 2^{O(n)} / \eps^{(n-1)/2}$.
\end{enumerate}
\end{lemma}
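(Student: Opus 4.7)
The plan is to prove the six parts in the order $(a), (c), (b), (d), (e), (f)$, grouping the geometric shell bounds first and reserving the Mahler-type argument for $(e)$.

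For $(a)$, I would note that each $x\in X\subseteq\Lambda$ is the centroid of the base of a cap $D$ of width at most $2\eps$, hence $x\in D$. Applying Lemma~\ref{lem:mac-cap-var} with $\lambda=1/c\le 1/2$ gives $M^{1/c}(x)\subseteq D^{1+1/c}\subseteq D^{3/2}$, a cap of relative width at most $3\eps$, and Lemma~\ref{lem:raydist-width} then forces every point in this region to have ray distance at most $3\eps$, hence into $K\setminus(1-4\eps)K$. The elementary inequality $1-(1-4\eps)^n\le 4n\eps$ settles the volume bound. Part $(c)$ is similar but cleaner: every $y\in\Lambda'=\partial(1-\eps)K^*$ has $\ray(y)=\eps$, so Lemma~\ref{lem:core-ray} applied to $M^{1/5}(y)$ yields ray distance at most $2\eps$ for every point of the region, placing it in $K^*\setminus(1-2\eps)K^*$, with an analogous volume bound. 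Parts $(b)$ and $(d)$ are then direct volume packing arguments: Lemma~\ref{lem:delone}$(a)$ asserts that the regions $M^{1/(4c)}(X)$ and $M^{1/20}(Y)$ are pairwise disjoint, and for $x\in X_{\ge t}$ (respectively $y\in Y_{\ge t'}$) the corresponding shrunken region has relative volume at least $2^{-O(n)}\eps^{(n+1)/2}$, so dividing the shell volume $O(n\eps)$ from $(a)$ or $(c)$ by this bounds the count by $2^{O(n)}/\eps^{(n-1)/2}$.

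The real obstacle is $(e)$, which invokes the Mahler-type machinery of Section~\ref{s:mahler}. Given $x\in X_{<t}$, let $C\subseteq K$ be the cap of width in $[\eps,2\eps]$ whose base has centroid $x$, and let $y_0$ be the point of ray distance exactly $\eps$ along the ray shot from the origin orthogonal to $C$'s base; then $y_0\in\Lambda'$, and the covering property of the MNet $Y$ (Lemma~\ref{lem:delone}$(b)$) yields $y\in Y$ with $y_0\in M^{1/5}(y)$. Crucially, $Y\subseteq\Lambda'$ forces $\ray(y)=\eps$ exactly. To show $y\in Y_{\ge t'}$, I combine Lemma~\ref{lem:min-vol-cap2} (which gives $\vol_K(C)\le 2^{O(n)}\vol_K(M^{1/c}(x))<2^{O(n)}\eps^{(n+1)/2}$) with Lemma~\ref{lem:mahler-mac} applied to $C$ and $y_0$ to produce $\vol_{K^*}(M^{1/5}(y_0))\ge 2^{-O(n)}\eps^{(n+1)/2}$; since $y_0\in M^{1/5}(y)\cap M^{1/5}(y_0)$, Lemma~\ref{lem:mac-mac} transports this to $\vol_{K^*}(M^{1/5}(y))\ge 4^{-n}\cdot 2^{-O(n)}\eps^{(n+1)/2}$, which is at least $t'$ once the hidden constant in $t'$ is chosen sufficiently large. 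For the cap containment, since $\ray(y)=\eps$ and the ray orthogonal to $C$'s base pierces $M^{1/5}(y)$ at $y_0$, Corollary~\ref{cor:sandwich} gives that $C$ and the $\eps$-representative cap $C_z$ of any $z\in M^{1/5}(y)$ are $16$-similar; Lemma~\ref{lem:mac-cap-var} yields $M^{1/c}(x)\subseteq C^2$, and Lemma~\ref{lem:cap-containment-exp} lifts this to $M^{1/c}(x)\subseteq C_z^{32}$. The matching volume lower bound follows from $\vol(C_z^{32})\le 2^{O(n)}\vol(C)$ (Lemma~\ref{lem:cap-exp} combined with the $16$-similarity) and $\vol(M^{1/c}(x))\ge 2^{-O(n)}\vol(C)$ (Lemma~\ref{lem:min-vol-cap2}).

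Finally, $(f)$ is a clean aggregation: write $|X|=|X_{\ge t}|+|X_{<t}|$, bound the first term using $(b)$, and for the second use $(e)$ to assign each $x\in X_{<t}$ to some $y\in Y_{\ge t'}$. The disjoint regions $M^{1/(4c)}(x)$ assigned to a fixed $y$ all sit inside $C_y^{32}$, and each occupies a $2^{-O(n)}$ fraction of $\vol(C_y^{32})$ (combining the volume inequality from $(e)$ with the $(1/4)^n$ scaling), so at most $2^{O(n)}$ points can map to any given $y$; multiplying by $|Y_{\ge t'}|\le 2^{O(n)}/\eps^{(n-1)/2}$ from $(d)$ yields $|X_{<t}|\le 2^{O(n)}/\eps^{(n-1)/2}$ and hence the final bound on $|X|$.
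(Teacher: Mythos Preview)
Your proposal is correct and follows essentially the same approach as the paper's proof. The only differences are cosmetic: in $(c)$ you invoke Lemma~\ref{lem:core-ray} where the paper uses Lemma~\ref{lem:mac-cap-var} with a supporting-hyperplane cap, and in $(e)$ you make explicit the point $y_0\in\Lambda'$ on the ray (and route the Mahler bound through $M^{1/5}(y_0)$ via Lemma~\ref{lem:mac-mac}), whereas the paper observes directly that the ray meets some $M^{1/5}(y)$ with $\ray(y)=\eps$ and applies Lemma~\ref{lem:mahler-mac} to $y$ itself---both routes yield the same conclusion.
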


\begin{proof}
To prove~(a), let $x$ be any point of $X$ and let $M_x = M^{1/c}(x)$ be the associated covering Macbeath region. Because $X$ is a $(K, \Lambda, c)$-MNet, $M_x$ is centered at the centroid of the base of a cap $C_x$ of width between $\eps$ and $2\eps$. Since $c \geq 1$, by Lemma~\ref{lem:mac-cap-var}, $M_x \subseteq C_x^2$. As $C_x^2$ has width at most $4\eps$, it follows that $C_x^2 \subseteq \Lambda_K(\eps)$, and so too is $M_x$. Clearly, $\vol_K(\Lambda_K(\eps)) = 1 - (1-4\eps)^n = O(n\eps)$. 

To prove~(b), observe that the Macbeath regions $\MM^{1/4c}(X_{\geq t})$ are pairwise disjoint, and each has relative volume at least $t/4^n \geq 2^{-O(n)} \eps^{(n+1)/2}$. By a simple packing argument, $|X_{\geq t}| \leq \vol_K(\Lambda_K(\eps)) / (t/4^n) \leq 2^{O(n)} / \eps^{(n-1)/2}$.

To prove~(c), let $y$ be any point of $Y$ and let $M_y = M^{1/5}(y)$ be the associated covering Macbeath region. Since $y$ lies on the boundary of $(1-\eps) K^*$, $y$ lies on the base of a cap $C_y$ of $K^*$ induced by the supporting hyperplane of $(1-\eps) K^*$. By Lemma~\ref{lem:mac-cap-var}, $M_y \subseteq C_y^2$. Since $C_y^2$ has width $2\eps$, it follows that $C_y^2 \subseteq \Lambda_{K*}(\eps)$, and so too is $M_y$. Also, $\vol_{K^*}(\Lambda_{K*}(\eps)) = 1 - (1-2\eps)^n = O(n\eps)$.

To prove~(d), observe that by Lemma~\ref{lem:delone}, the Macbeath regions $\MM^{1/(4 \cdot 5)}(Y_{\geq t'})$ are pairwise disjoint, and each has relative volume at least $t'/4^n = 2^{-O(n)} \eps^{(n+1)/2}$. By a simple packing argument, $|Y_{\geq t'}| \leq \vol_{K^*}(\Lambda_{K*}(\eps)) / (t'/4^n) \leq 2^{O(n)} / \eps^{(n-1)/2}$. 

To prove~(e), let $x$ be any point of $X_{< t}$ and let $M_x = M^{1/c}(x)$ be the associated covering Macbeath region. As in~(a), $M_x$ is centered at the centroid of the base of a cap $C_x$ of width between $\eps$ and $2\eps$. Since $c$ is a constant, by Lemma~\ref{lem:min-vol-cap2}, $\vol(C_x) \leq 2^{O(n)} \vol(M_x)$. Since $\vol_K(M_x) \leq t = \eps^{(n+1)/2}$, we have $\vol_K(C_x) \leq 2^{O(n)} \eps^{(n+1)/2}$. 

In the polar, consider the ray $r$ shot from the origin orthogonal to the base of $C_x$. This ray will intersect some covering Macbeath region $M_y = M^{1/5}(y)$, for some $y \in Y$. We will show that $y$ satisfies all the properties given in part~(e). As $K$ is well-centered, we can apply the Mahler-like volume relation from Lemma~\ref{lem:mahler-mac} to obtain $\vol_K(C_x) \cdot \vol_{K^*}(M_y) \geq 2^{-O(n)} \eps^{n+1}$. Using the upper bound on $\vol_K(C_x)$ shown above, it follows that $\vol_{K^*}(M_y) \geq 2^{-O(n)} \eps^{(n+1)/2}$. Thus, $y \in Y_{\geq t'}$.

It is easy to verify that the preconditions of Corollary~\ref{cor:sandwich} are satisfied where $C_x$ plays the role of $C$, $M_y$ plays the role of $M^{1/5}(y)$, and $z$ is any point in $M_y$. It follows that the caps $C_x$ and $C_z$ are 16-similar, that is, $C_x \subseteq C_z^{16}$ and $C_z \subseteq C_x^{16}$. By Lemma~\ref{lem:mac-cap-var}, $M_x \subseteq C_x^2$, and by Lemma~\ref{lem:cap-containment-exp}, $C_x^2 \subseteq C_z^{32}$. Thus $M_x \subseteq C_z^{32}$. Also, since $C_z \subseteq C_x^{16}$, it follows from Lemma~\ref{lem:cap-exp} that $\vol(C_x) \geq 2^{-O(n)} \vol(C_z)$. By Lemma~\ref{lem:min-vol-cap2}, $\vol(M_x) \geq 2^{-O(n)} \vol(C_x)$. Thus $\vol(M_x) \geq 2^{-O(n)} \vol(C_z) \ge 2^{-O(n)} \vol(C_z^{32})$, which establishes~(e).

Finally, to prove~(f), observe that in light of~(b), it suffices to show that $|X_{< t}| \le 2^{O(n)} / \eps^{(n-1)/2}$. This quantity can be bounded by the following charging argument. For each $y \in Y_{\geq t'}$, we say that it \emph{charges} all the points $x \in X$ whose Macbeath region $M^{1/4c}(x)$ is contained in $C_y^{32}$ and whose volume is at least $2^{-O(n)} \vol(C_y^{32})$, where the constant hidden in $O(n)$ is sufficiently large. Note that any point of $Y_{\geq t'}$ charges at most $2^{O(n)}$ points of $X$. Applying part (e), it follows that every $x \in X_{< t}$ is charged by some $y \in Y_{\geq t'}$. Since $|Y_{\geq t'}| \leq 2^{O(n)} / \eps^{(n-1)/2}$ and each point of $Y_{\geq t'}$ charges at most $2^{O(n)}$ points of $X$, it follows that $|X_{< t}| \leq 2^{O(n)} / \eps^{(n-1)/2}$, which completes the proof.
\end{proof}

We are now ready to present the main result of this section. Recall that $K \subseteq \RE^n$ is a well-centered convex body. Given $0 < \eps \leq 1$, define a \emph{layered decomposition} of $K$ as follows. Recalling that $K_{\eps} = (1+\eps)K$, for each $x \in K$, define its \emph{width}, denoted $\width(x)$, to be the width of the associated minimum volume cap of $K_{\eps}$. Since $\ray_{K_{\eps}}(x) \geq \eps / (1 + \eps) \geq \eps / 2$, it follows from Lemma~\ref{lem:raydist-width} that $\width(x) \geq \eps / 2$. Let $\beta$ be a sufficiently small constant, and let $k_0 = \ceil{\log\frac{\beta}{\eps}}$. For $0 \leq i \leq k_0$, define the layer $i$ be the set of points $x \in K$ such that $\width(x) \in [2^{i-1},2^i)\eps$. Define layer $k_0 + 1$ to be the set of remaining points of $K$, which have width at least $\beta$. Note that the number of layers is $O(\log\frac{1}{\eps})$. 

\begin{lemma} \label{lem:cover-worst}
Let $0 < \eps \leq 1$, let $K \subseteq \RE^n$ be a well-centered convex body, and let $c \ge 2$ be a constant. Let $X$ be a $(K_{\eps},K,c)$-MNet, and let $\MM = \MM_{K_{\eps}}^{1/c}(X)$. Then $\MM$ is a $(c,\eps)$-covering for $K$ consisting of at most $2^{O(n)} / \eps^{(n-1)/2}$ centrally symmetric convex bodies.
\end{lemma}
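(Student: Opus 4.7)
My plan has two parts: verifying the covering property, which is essentially definitional, and bounding $|X|$ via a layered decomposition that invokes Lemma~\ref{lem:layer} layer by layer. For the covering property I would apply Lemma~\ref{lem:delone} with $K_\eps$ in the role of $K$ and $K$ in the role of $\Lambda$: part (b) gives that $\MM_{K_\eps}^{1/c}(X)$ covers $K$, and part (c) gives $M_{K_\eps}(x) \subseteq K_\eps = (1+\eps)K$ for every $x \in X$, so the $c$-fold central expansion of each $M_{K_\eps}^{1/c}(x)$ lies within $(1+\eps)K$. Macbeath regions are centrally symmetric by definition, so $\MM$ has the claimed form. Note also that $K_\eps$ inherits well-centeredness from $K$, since $K_\eps \cap (-K_\eps) = (1+\eps)(K \cap -K)$.

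For the cardinality bound, I use the layered decomposition defined just before the lemma. Each $x \in K$ satisfies $\ray_{K_\eps}(x) \geq \eps/(1+\eps) \geq \eps/2$, so Lemma~\ref{lem:raydist-width} gives $\width(x) \geq \eps/2$, and $X$ partitions as $X_0 \cup \cdots \cup X_{k_0+1}$, where $X_i$ collects the points in layer $i$. For the topmost layer, every $x \in X_{k_0+1}$ satisfies $\width(x) \geq \beta$, so Lemma~\ref{lem:wide-cap} applied in $K_\eps$ yields $\vol_{K_\eps}(M_{K_\eps}(x)) \geq 2^{-O(n)}$, and therefore also $\vol_{K_\eps}(M_{K_\eps}^{1/4c}(x)) \geq 2^{-O(n)}$ since $c$ is constant. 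The MNet disjointness together with a volume packing argument gives $|X_{k_0+1}| \leq 2^{O(n)}$. For each intermediate layer $0 \leq i \leq k_0$, set $\eta_i = 2^{i-1}\eps$. By the Ewald--Larman--Rogers property, each $x \in X_i$ is the centroid of the base of its minimum-volume cap in $K_\eps$, and this cap has width in $[\eta_i, 2\eta_i)$. The shrunken Macbeath regions $\MM_{K_\eps}^{1/4c}(X_i)$ are pairwise disjoint (inherited from $X$), and by extending $X_i$ to a maximal such collection within its layer and invoking Lemma~\ref{lem:layer}(f) on $K_\eps$ with parameter $\eta_i$ (valid because choosing $\beta \leq 1/64$ enforces $2\eta_i \leq 1/32$), we obtain $|X_i| \leq 2^{O(n)}/\eta_i^{(n-1)/2}$.

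Summing the bounds,
\[
  |X| \;\leq\; 2^{O(n)} + \sum_{i=0}^{k_0} \frac{2^{O(n)}}{(2^{i-1}\eps)^{(n-1)/2}} \;=\; \frac{2^{O(n)}}{\eps^{(n-1)/2}} \left(1 + \sum_{i=0}^{k_0} 2^{-(i-1)(n-1)/2}\right),
\]
and for $n \geq 2$ the inner geometric series is bounded by an absolute constant (the case $n = 1$ is trivial), which absorbs the $O(\log(1/\eps))$ number of layers into the $2^{O(n)}$ factor and yields $|X| \leq 2^{O(n)}/\eps^{(n-1)/2}$ as claimed. The substantive geometric work has already been done in Section~\ref{s:mahler} and Lemma~\ref{lem:layer}, so the main obstacle here is a conceptual one: $X_i$ is not defined as an MNet in its own right but rather as a subset of the global MNet $X$, and it need not be maximal within its layer. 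Inspection of the proof of Lemma~\ref{lem:layer}(f) confirms that only the pairwise disjointness of the shrunken Macbeath regions and the centroid-of-base property are used in the size bound (maximality is used only to argue coverage), so the step goes through; the one precaution is to fix $\beta$ small enough that every intermediate $\eta_i$ satisfies the hypothesis $\eta_i \leq 1/32$ required by Lemma~\ref{lem:layer}.
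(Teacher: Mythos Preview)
Your proposal is correct and follows essentially the same route as the paper's proof: establish the covering property via Lemma~\ref{lem:delone}, partition $X$ into layers by width of the minimum-volume cap in $K_\eps$, invoke Lemma~\ref{lem:layer}(f) on each intermediate layer with parameter $2^{i-1}\eps$, handle the deepest layer by Lemma~\ref{lem:wide-cap} and packing, and sum the resulting geometric series. You are in fact more explicit than the paper on one point---that $X_i$ need not be a maximal MNet within its layer, and that only disjointness and the centroid-of-base property are actually used in the size bound of Lemma~\ref{lem:layer}(f)---which the paper leaves implicit.
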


\begin{proof} 
By Lemma~\ref{lem:delone}, $\MM$ is a $(c,\eps)$-covering for $K$. We will bound the size of the covering by partitioning the points of $X$ based on the layered decomposition (defined above) and then use Lemma~\ref{lem:layer} to bound the number of points in each layer. 

For $0 \leq i \leq k_0$, let $X_i$ be subset of points of $X$ that are in layer $i$. Since $K$ is well-centered, $K_{\eps}$ is also well-centered. By Lemma~\ref{lem:layer}(f), $|X_i| \leq 2^{O(n)} /(2^i \eps)^{(n-1)/2}$. Summing $|X_i|$ over all layers $0$ to $k_0$ we have at most $2^{O(n)} / \eps^{(n-1)/2}$ points in all these layers.

It remains only to bound $|X_{k_0+1}|$. Consider the set $\MM_{K_{\eps}}^{1/4c}(X_{k_0+1})$ of the associated packing Macbeath regions. By Lemma~\ref{lem:delone}, these Macbeath regions are pairwise disjoint. Recall that the minimum volume cap of any point in $X_{k_0+1}$ has width at least $\beta$ (used in the definition of $k_0$). Hence by Lemma~\ref{lem:wide-cap} (and the fact that $c$ is a constant), each of these Macbeath regions has relative volume of at least $2^{-O(n)}$. By a simple packing argument, it follows that $|X_{k_0+1}| \leq 2^{O(n)}$, which completes the proof. 
\end{proof}

\section{Applications: Banach-Mazur Approximation} \label{s:approx-BM}

In this section we show that the convex hull of the centers of any $(c,\eps)$-covering implies the existence of an approximating polytope in the Banach-Mazur distance. The main result is given in the following lemma. Combining this with our covering from Theorem~\ref{thm:cover-worst} establishes Theorem~\ref{thm:approx-BM}.

\begin{lemma} \label{lem:approx-BM}
Let $0 < \eps < 1$, let $K \subseteq \RE^n$ be a well-centered convex body, and let $c \ge 2$ be a constant. Let $X$ be the set of centers of any $(c,\eps')$-covering of $K(1 + \eps/c)$, where $\eps' = \frac{1+\eps}{1+\eps/c} - 1$. Then $K \subset \conv(X) \subset K(1+\eps)$.
\end{lemma}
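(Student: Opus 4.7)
The plan is to unpack the definition of the covering and use a hyperplane-separation argument for the lower containment.

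Let $\QQ$ be the given $(c,\eps')$-covering of $(1+\eps/c)K$, indexed so that $Q_i \in \QQ$ has centroid $x_i$. By definition, the $c$-expansion about $x_i$ satisfies $x_i + c(Q_i - x_i) \subseteq (1+\eps')(1+\eps/c)K = (1+\eps)K$. Equivalently, for every $q \in Q_i$ there exists $w_q \in (1+\eps)K$ with
\[
    q ~=~ \tfrac{c-1}{c}\, x_i + \tfrac{1}{c}\, w_q,
\]
and since $c \geq 2$ this is a genuine convex combination. This single identity is essentially the whole engine of the proof.

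For the upper containment, apply the identity with $q = x_i \in Q_i$, which gives $x_i = \tfrac{c-1}{c}x_i + \tfrac{1}{c}w_{x_i}$, i.e.\ $x_i = w_{x_i} \in (1+\eps)K$. Since $(1+\eps)K$ is convex, $\conv(X) \subseteq (1+\eps)K$.

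For the lower containment, I would argue by contradiction via separation. Suppose some $y \in K$ is not in $\conv(X)$. Then there is a unit vector $u$ and a scalar $s$ with $\langle u, x_i\rangle \leq s$ for all $i$ while $\langle u, y \rangle > s$. Taking support functions in direction $u$ and using the displayed convex combination,
\[
    \sup_{q \in \bigcup_i Q_i} \langle u, q\rangle
        ~\leq~ \tfrac{c-1}{c}\, s + \tfrac{1}{c}\, h_{(1+\eps)K}(u)
        ~=~ \tfrac{c-1}{c}\, s + \tfrac{1+\eps}{c}\, h_K(u).
\]
Since $\bigcup_i Q_i \supseteq (1+\eps/c)K$, the left-hand side is at least $(1+\eps/c)\, h_K(u)$. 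Rearranging and using the key identity
\[
    \Big(1+\tfrac{\eps}{c}\Big) - \tfrac{1+\eps}{c} ~=~ \tfrac{c-1}{c},
\]
the coefficients of $h_K(u)$ collapse cleanly to give $h_K(u) \leq s$. But $y \in K$ forces $\langle u, y\rangle \leq h_K(u) \leq s$, contradicting $\langle u, y\rangle > s$.

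The only nontrivial step is the support-function computation, and its success hinges entirely on the algebraic identity above, which in turn pins down the precise choice $\eps' = (1+\eps)/(1+\eps/c) - 1$ in the statement: with this choice, the $(1+\eps/c)$ slack absorbed into the cover is exactly what the $\tfrac{1}{c}$ fraction of the $(1+\eps)K$ envelope can afford. I do not foresee any real obstacle beyond bookkeeping; the main temptation to resist is trying to prove $K \subseteq \conv(X)$ directly by exhibiting convex combinations, which is awkward, whereas the separation argument is immediate.
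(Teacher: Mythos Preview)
Your proof is correct and is essentially the same argument as the paper's, just packaged analytically via support functions rather than geometrically. The paper, for each supporting hyperplane $H$ of $K$, picks a point $y \in \partial\big((1+\eps/c)K\big)$ on the outward ray and shows that the center $p$ of any covering body through $y$ must lie on the far side of $H$ (else $p + c(y-p)$ would overshoot the supporting hyperplane of $(1+\eps)K$); your inequality $(1+\eps/c)h_K(u) \le \tfrac{c-1}{c}s + \tfrac{1+\eps}{c}h_K(u)$ is precisely the dual of this, driven by the same expansion identity and the same arithmetic cancellation that pins down $\eps'$.
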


\begin{proof}
Let $\MM$ denote the covering mentioned in the statement of the lemma. By definition, the bodies of $\MM$ together cover $K(1+\eps/c)$ and the $c$-expansion of any such body about its center is contained within $K (1+\eps)$. Since each body of $\MM$ is contained within $K(1+\eps)$, it follows that $X \subset K(1+\eps)$ and so $\conv(X) \subset K(1+\eps)$. To prove that $K \subset \conv(X)$, it suffices to show that there is a point of $X$ in every cap of $K(1+\eps)$ defined by a supporting hyperplane of $K$. 

\begin{figure}[htb]
  \centerline{\includegraphics[scale=.8]{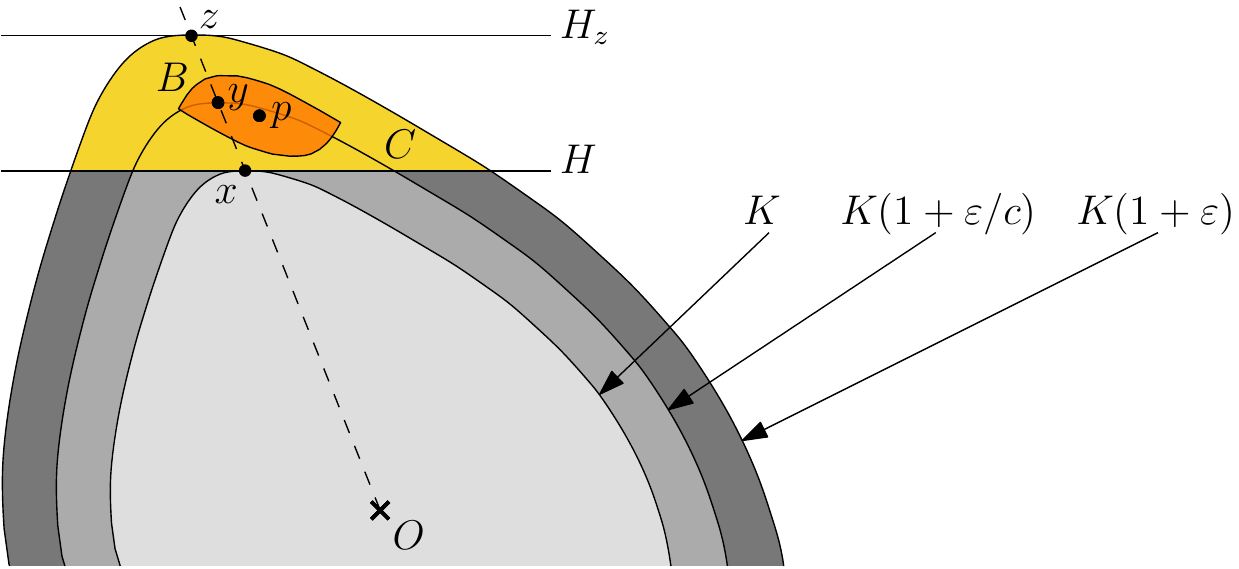}}
  \caption{\label{f:thm2}Proof of Lemma~\ref{lem:approx-BM}.}
\end{figure}

Let $C$ be a cap of $K(1+\eps)$ defined by a supporting hyperplane $H$ of $K$. Let $x$ be a point at which $H$ touches $K$. For the sake of concreteness, assume that $H$ is horizontal and $K$ lies below $H$. Consider the ray emanating from the origin passing through $x$. Suppose that this ray intersects the boundary of $K(1+\eps/c)$ at $y$ and the boundary of $K(1+\eps)$ at $z$. Let $H_z$ denote the supporting hyperplane of $K(1+\eps)$ at $z$. Clearly $H_z$ is parallel to $H$ and the distance between $H$ and $H_z$ is $c$ times the distance between $y$ and $H$. 

Consider any body $B$ of $\MM$ that contains point $y$. We claim that the center $p$ of the body $B$ is contained within $C$. By our earlier remarks, $p \in K(1+\eps)$. Thus, we only need to show that $p$ cannot lie below $H$. To see this, recall that the body formed by expanding $B$ about its center $p$ by a factor of $c$ is contained within $K(1+\eps)$. In particular, the point $p + c (y-p) \in K(1+\eps)$. However, if $p$ lies below $H$, then the point $p + c (y-p)$ would lie above $H_z$, and hence outside $K(1+\eps)$. It follows that $p$ cannot lie below $H$, which completes the proof.
\end{proof}

By Lemma~\ref{lem:cover-worst}, there exists a $(c,\eps')$-covering $\MM$ for $K(1 + \eps/c)$ consisting of at most $2^{O(n)} / (\eps')^{(n-1)/2}$ centrally symmetric convex bodies. The bound on vertices in Theorem~\ref{thm:approx-BM} now follows immediately from the above lemma (setting $P = \conv(X)$ and noting that $\eps' = \Theta(\eps)$), and the bound on facets follows via polarity and scaling by a factor of $(1+\eps)$. 

\section{Applications: Approximate CVP and IP} \label{s:apps}

\subsection{Preliminaries} \label{s:apps-prelim}

An $n$-dimensional lattice $L \subseteq \RE^n$ is the set of all integer linear combinations of a basis $b_1, \ldots, b_n$ of $\RE^n$. Given a lattice $L$, a convex body $K$ and a target $t \in \RE^n$, the \emph{closest vector problem} (CVP) seeks to find a closest vector in $L$ to $t$ under $\|\cdot\|_K$.  Given a parameter $\eps > 0$, the \emph{$(1+\eps)$-approximate CVP problem} seeks to find any lattice vector whose distance to $t$ under $\|\cdot\|_K$ is at most $(1+\eps)$ times the true closest.

We employ a standard computational model in our $(1+\eps)$-CVP algorithm. Given reals $0 < r \leq r'$ and $x \in \RE^n$, we say that a convex body $K \subseteq \RE^n$ is \emph{$(x,r,r')$-centered} if $x + r B_2^n \subseteq K \subseteq x + r' B_2^n$, where $B_2^n$ is the unit Euclidean ball centered at the origin. We assume that the convex body $K$ inducing the norm is $(O,r,r')$-centered, where both $r$ and $r'$ are given explicitly as inputs.
We assume that the basis vectors of the lattice $L$ are presented as an $n \times n$ matrix over the rationals.  Input size is measured as the total number of bits used to encode $r$, $r'$, $t$, and $\eps$ and the basis vectors of $L$ (all rationals). 

Following standard conventions, we assume that access to $K$ is provided through a \emph{membership oracle}, which on input $x \in \RE^n$ returns 1 if $x \in K$ and 0 otherwise. Our algorithms apply more generally where $K$ is presented using a \emph{weak membership oracle}, which takes an extra parameter $\delta > 0$ and only needs to return the correct answer when $x$ is at Euclidean distance at least $\delta$ from the boundary of $K$. 

In the oracle model of computation, the running time is measured by the number of oracle calls and bit complexity of arithmetic operations. Note that the running time of our $(1+\eps)$-CVP algorithm will be exponential in the dimension $n$. We will follow standard practice and suppress polynomial factors in $n$ and the input size. We will also simplify the presentation by expressing our algorithms assuming exact oracles, but the adaptation to weak oracles is straightforward.

Our approach to approximate CVP follows one introduced by Eisenbrand {\etal}~\cite{EHN11} for $\ell_{\infty}$ and later extended in a number of works~\cite{NaV22,EiV22,RoV22}, which employs coverings of $K$. Given any constant $c \geq 2$, a \emph{$(c,\eps)$-covering} of an $(O,r,r')$-centered convex body $K$ is a collection $\QQ$ of convex bodies, such that a factor-$c$ expansion of each $Q \in \QQ$ about its centroid lies within $K_{\eps}$. Nasz{\'o}di and Venzin showed that a $(2,\eps)$-covering of $K$ can be used to boost the approximation factor of any $2$-CVP solver for general norms.

\begin{lemma}[Nasz{\'o}di and Venzin~\cite{NaV22}] \label{lem:boost}
Let $L$ be a lattice and let $K$ be an $(O,r,r')$-centered convex body. Given a $(2,\eps)$-covering of $K$ consisting of $N$ centrally symmetric convex bodies, we can solve $(1 + 7\eps)$-CVP under $\|\cdot\|_K$ with $\widetilde{O}(N)$ calls to a 2-CVP solver for norms (where $\widetilde{O}$ conceals polylogarithmic factors). 
\end{lemma}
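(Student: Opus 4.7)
The plan is to reduce $(1+7\eps)$-CVP in the norm $\|\cdot\|_K$ to $\widetilde{O}(N)$ instances of $2$-CVP, one per body of the covering and per guess of the optimal distance. Since $K$ is $(O,r,r')$-centered, the optimal distance $d^* = \min_{v \in L}\|v-t\|_K$ is sandwiched between explicit polynomial bounds, and one can enumerate candidate values $d$ in a geometric progression with ratio $1+\eps$; this yields $O(\log(r'/r)/\eps)$ guesses, a polylogarithmic factor that is absorbed into the $\widetilde{O}$ notation. At least one guess satisfies $d^* \leq d \leq (1+\eps)\,d^*$.

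Fix such a good $d$ and consider any $Q_i \in \QQ$ with centroid $c_i$ (which coincides with its center of symmetry, since $Q_i$ is centrally symmetric). Because $\QQ$ covers $K$, and since $v^* - t \in d^*K \subseteq dK$ for the true closest lattice vector $v^*$, there is some $i$ with $v^* - t \in d\,Q_i$; equivalently, $v^*-t-d c_i \in d(Q_i - c_i)$. Thus, in the norm $\|\cdot\|_{Q_i - c_i}$ induced by the centrally symmetric body $Q_i - c_i$, the point $v^*$ lies at distance at most $d$ from the shifted target $t + d c_i$. I would therefore invoke the 2-CVP solver on $(L, \|\cdot\|_{Q_i - c_i})$ with target $t + d c_i$. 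It returns a lattice vector $v$ with $\|v-(t+d c_i)\|_{Q_i-c_i} \leq 2d$, i.e., $v - t \in d(2Q_i - c_i)$.

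By the definition of a $(2,\eps)$-covering, $2Q_i - c_i$ is exactly the $2$-expansion of $Q_i$ about its centroid $c_i$, and hence lies in $(1+\eps)K$. Therefore $v-t \in d(1+\eps)K$, i.e., $\|v-t\|_K \leq d(1+\eps) \leq (1+\eps)^2\,d^*$. Returning the best such $v$ over all $i$ and all $d$, the final approximation factor is at most $(1+\eps)^2 \leq 1+7\eps$ for $\eps$ in the intended range, using altogether $N \cdot O(\log(r'/r)/\eps) = \widetilde{O}(N)$ calls to the 2-CVP solver.

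The main technical hurdle is the oracle model: the 2-CVP solver must be invoked on the norm $\|\cdot\|_{Q_i-c_i}$ given only through a (weak) membership oracle, so one must verify that access to the covering yields such an oracle with only polynomial overhead, and that the bit complexity of $c_i$ and of the shifted target $t+dc_i$ remain polynomial in the input size. A secondary issue is confirming that the range of $d^*$, bounded above and below using $rB_2^n \subseteq K \subseteq r'B_2^n$, keeps the enumeration over $d$ polylogarithmic. Apart from these routine bookkeeping matters, the argument is the clean "guess the radius, translate to the centroid, and run 2-CVP in the local norm" reduction outlined above.
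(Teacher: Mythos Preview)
The paper does not give its own proof of this lemma; it is quoted from Nasz{\'o}di and Venzin~\cite{NaV22} and used as a black box. So there is no paper proof to compare against, and I will just assess your argument on its own.

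Your core reduction is exactly right: for a fixed scale $d$ and each centrally symmetric $Q_i$ with center $c_i$, calling a $2$-CVP solver for the norm $\|\cdot\|_{Q_i-c_i}$ on the shifted target $t+dc_i$ yields a lattice vector $v$ with $v-t\in d\,(2Q_i-c_i)\subseteq d(1+\eps)K$ whenever $v^*-t\in dQ_i$. That is precisely the Nasz{\'o}di--Venzin mechanism.

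The flaw is in how you handle the unknown scale $d$. Enumerating $d$ in a geometric progression of ratio $1+\eps$ costs $\Theta\!\big(\log(d_{\max}/d_{\min})/\eps\big)$ guesses, and the $1/\eps$ factor is not polylogarithmic, so it cannot be hidden in $\widetilde{O}(N)$. Moreover, the range $[d_{\min},d_{\max}]$ is not governed by $r'/r$ alone; it depends on the lattice basis and on $t$ (in particular $d_{\min}$ can be arbitrarily small). The standard cure, which the paper itself spells out in Section~\ref{s:apps-cvp}, is to phrase the covering step as a single $(1+O(\eps))$-\emph{gap}-CVP call (one pass over the $N$ bodies at a fixed $\gamma$), and then reduce $(1+\eps)$-CVP to gap-CVP by ordinary binary search; the number of gap calls is then $O(\log n+\log\frac{1}{\eps}+\log\frac{r'}{r}+\log b)$, which is genuinely polylogarithmic in the input size. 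With that change your argument goes through; everything else you wrote (oracle construction for $Q_i-c_i$, bit bounds on $c_i$ and the shifted target) is routine, as you note.
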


\subsection{CVP Algorithm} \label{s:apps-cvp}

As in Lemma~\ref{lem:cover-worst}, let $K \subseteq \RE^n$ be a well-centered convex body. In this section, we present our algorithm for computing a $(1+\eps)$-approximation to the closest vector (CVP) under the norm defined by $K$. 

Given a convex body $K \subseteq \RE^n$, $0 < \eps \leq 1$, and a constant $c \geq 2$, a \emph{$(c,\eps)$-enumerator} is a procedure that outputs the elements of a $(c,\eps)$-covering for $K$. Each of the elements of the covering is represented as an oracle for an $(a,r,r')$-centered convex body, where $a$, $r$, and $r'$ are given explicitly in the output (as rationals). Our enumerator will be randomized in the Monte Carlo sense, meaning that it achieves a stated running time, but the output may fail to be a $(c,\eps)$-covering with some given probability. Define an enumerator's \emph{overhead} to be its total running time divided by the number of elements output, and its \emph{space complexity} to be the amount of memory it needs.

Our enumerator is based on constructing hitting sets for coverings associated with certain MNets. The following lemma will be useful. 

\begin{lemma} \label{lem:hitting-set}
Let $K \subseteq \RE^n$ be a convex body, $\Lambda \subset \interior(K)$, and $c \geq 2$. Let $X$ be a $(K,\Lambda,4c)$-MNet and let $\MM = \MM_K^{1/4c}(X)$ be the associated covering. Let $Y$ be any hitting set for $\MM$ in the sense that for each $M \in \MM$, $Y \cap M \neq \emptyset$. Then $\MM_K^{1/c}(Y)$ is a $\Lambda$-limited $c$-covering with respect to $K$.
\end{lemma}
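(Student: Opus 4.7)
The plan is to verify the two properties that define a $\Lambda$-limited $c$-covering for $\MM_K^{1/c}(Y)$: the buffering property (the $c$-fold expansion of each body about its centroid lies in $K$) and the covering property (the bodies together cover $\Lambda$). The buffering property is immediate from the definition of Macbeath regions: each $M_K^{1/c}(y)$ is centrally symmetric about $y$, and its $c$-fold expansion about $y$ is exactly $M_K(y) = M_K^1(y)$, which lies in $K$ by the very definition of the Macbeath region as the largest centrally symmetric body centered at $y$ that is contained in $K$.

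For the covering property, I would trace a chain through the MNet and the hitting set. Fix an arbitrary $p \in \Lambda$. Applying the covering part of Lemma~\ref{lem:delone} with the MNet parameter $4c$ (in place of the ``$c$'' of that lemma), the union $\bigcup_{x \in X} M_K^{1/(4c)}(x)$ covers $\Lambda$, so there is some $x \in X$ with $p \in M_K^{1/(4c)}(x)$. Since $Y$ is a hitting set for $\MM = \MM_K^{1/4c}(X)$, there is some $y \in Y \cap M_K^{1/(4c)}(x)$. Because every Macbeath region contains its own center, $y$ also lies in $M_K^{1/(4c)}(y)$, so the two shrunken Macbeath regions $M_K^{1/(4c)}(x)$ and $M_K^{1/(4c)}(y)$ intersect (at $y$).

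The final step applies Lemma~\ref{lem:mac-mac} with $\lambda = 1/(4c)$. Since $c \geq 2$, we have $\lambda \leq 1/8 \leq 1/5$, so the lemma applies. Using its symmetric form (swapping the roles of its two centers) yields $M_K^{1/(4c)}(x) \subseteq M_K^{4/(4c)}(y) = M_K^{1/c}(y)$. In particular, $p \in M_K^{1/c}(y)$, and since $p \in \Lambda$ was arbitrary, $\MM_K^{1/c}(Y)$ covers $\Lambda$, completing the verification.

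There is no substantive obstacle; the argument is a direct composition of the covering property of MNets (Lemma~\ref{lem:delone}) with the ``absorption'' property of overlapping Macbeath regions (Lemma~\ref{lem:mac-mac}). The only design choice is the factor $4c$ in the MNet, which is calibrated so that the shrinking parameter $1/(4c)$ lies in the regime $\leq 1/5$ required by Lemma~\ref{lem:mac-mac}, and so that its factor-$4$ expansion lands exactly at the target scale $1/c$ used for the output covering.
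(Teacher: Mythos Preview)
Your proof is correct and follows essentially the same route as the paper's: establish buffering trivially via $c \cdot M_K^{1/c}(y) = M_K(y) \subseteq K$, then for covering use Lemma~\ref{lem:delone} to find $x \in X$ with $p \in M_K^{1/4c}(x)$, pick $y \in Y \cap M_K^{1/4c}(x)$, and apply Lemma~\ref{lem:mac-mac} with $\lambda = 1/(4c) \le 1/5$ to conclude $M_K^{1/4c}(x) \subseteq M_K^{1/c}(y)$. Your additional remarks on why the parameter $4c$ is calibrated this way are accurate and match the design.
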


\begin{proof} 
Since $c > 1$, the $c$-expansion of any Macbeath region of $M^{1/c}(Y)$ is contained within $K$. To prove the covering property, let $z$ be any point of $\Lambda$. By Lemma~\ref{lem:delone}, there is a point $x \in X$ such that $z \in M^{1/4c}(x)$. Let $y$ be a point of $Y$ that is contained in $M^{1/4c}(x)$. Since $M^{1/4c}(x) \cap M^{1/4c}(y) \neq \emptyset$, by Lemma~\ref{lem:mac-mac}, $M^{1/4c}(x) \subseteq M^{1/c}(y)$. Thus $z \in M^{1/c}(y)$. It follows that $M^{1/c}(Y)$ is a $\Lambda$-limited $c$-covering with respect to $K$.
\end{proof}

The following lemma shows that membership oracles for $K$ can be extended to its polar as well as Macbeath regions and caps that are $\eps$-deep.

\begin{lemma} \label{lem:oracle}
Given an $(O,r,r')$-centered convex body $K$, specified by a weak membership oracle, in time polynomial in $n$, $\log\frac{1}{\eps}$, and $\log\frac{r'}{r}$  we can do the following:
\begin{enumerate}
    \item[$(i)$] Construct a weak membership oracle for $K^*$.
    \item[$(ii)$] Given a point $x \in K$ such that $\ray(x) \geq \eps$, construct a weak membership oracle for $M^{\lambda}_K(x)$ for any constant $\lambda > 0$.
    \item[$(iii)$] Given a hyperplane $h$ intersecting $K$ which induces a cap $C$ of width at least $\eps$, construct a weak membership oracle for $C$.
\end{enumerate}
\end{lemma}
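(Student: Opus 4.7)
The plan is to treat the three parts in order of increasing difficulty, reducing each to either (a) the given weak membership oracle for $K$, or (b) the classical Gr\"otschel--Lov\'asz--Schrijver (GLS) equivalences between weak membership, weak separation, and weak optimization over well-centered convex bodies. The hypothesis that $K$ is $(O,r,r')$-centered plays the role of the ``guarantee'' needed to invoke GLS, and it also controls all precision/tolerance conversions.

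\textbf{Part (iii), the cap $C$.} A point $y$ lies in $C$ if and only if $y\in K$ and $y$ lies on the correct side of the hyperplane $h$ defining $C$. The sidedness test is a single comparison of an explicit linear form, so it is exact and has polynomial bit complexity. Membership of $y$ in $K$ is given directly by the oracle. To support a weak membership oracle for $C$ with tolerance $\delta$, I would call the weak membership oracle for $K$ with tolerance $\delta$ and additionally require that $y$ lies at distance at least $\delta$ from $h$ on the correct side. The hypothesis $\width(C)\ge\eps$ is used only to guarantee that $C$ is full-dimensional and its inradius is polynomially bounded below in terms of $r$ and $\eps$, so that the weak oracle is meaningful.

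\textbf{Part (ii), the Macbeath region.} By definition, $M^{\lambda}_K(x)=x+\lambda((K-x)\cap(x-K))$, so $y\in M^{\lambda}_K(x)$ if and only if both $x+(y-x)/\lambda$ and $x-(y-x)/\lambda$ belong to $K$. A weak membership query for $M^{\lambda}_K(x)$ therefore reduces to two weak membership queries to $K$, at a tolerance multiplied by $\lambda$ (a constant) and by a bounded Lipschitz factor. The condition $\ray(x)\ge\eps$, together with $K\supseteq rB_2^n$, forces $M^\lambda_K(x)$ to contain a Euclidean ball of radius $\Omega(\lambda r\eps)$ around $x$, which certifies the polynomial bit complexity of the tolerance conversion and keeps the reduction in the stated running-time budget.

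\textbf{Part (i), the polar $K^*$.} Since $rB_2^n\subseteq K\subseteq r'B_2^n$, polarity gives $(1/r')B_2^n\subseteq K^*\subseteq (1/r)B_2^n$, so $K^*$ is itself well-centered with explicit inner and outer radii. Membership of $u$ in $K^*$ is equivalent to $h_K(u)=\max_{v\in K}\ang{u,v}\le 1$. To evaluate $h_K(u)$ approximately, I would invoke the GLS chain: the weak membership oracle for $K$ yields a weak separation oracle for $K$ in polynomial time (ellipsoid-based reduction), and weak separation for $K$ yields a weak optimization oracle for $K$, again in polynomial time, producing $h_K(u)$ to any prescribed additive precision. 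Comparing the result to $1$ with an appropriate slack gives the desired weak membership oracle for $K^*$. The running time is polynomial in $n$, $\log(r'/r)$, and $\log(1/\eps)$ as required by the quantitative GLS statements.

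The main obstacle is part (i): although the reduction is conceptually a direct appeal to GLS, one must verify that the guarantees of $(O,r,r')$-centeredness meet the hypotheses of the quantitative GLS theorems, and propagate tolerances carefully through the membership $\to$ separation $\to$ optimization chain so that the final error in evaluating $h_K(u)$ translates into a valid weak membership answer for $K^*$. Parts (ii) and (iii) are essentially bookkeeping once the scaling factors from $\lambda$, $\eps$, and $r$ are tracked.
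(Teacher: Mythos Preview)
Your proposal is correct and follows essentially the same approach as the paper: part~(i) via the GLS reductions (the paper cites Theorem~4.3.2 and Lemma~4.4.1 of GLS directly), part~(ii) by reducing Macbeath membership to two $K$-membership queries and using $\ray(x)\ge\eps$ to lower-bound the inscribed ball, and part~(iii) by intersecting the $K$-oracle with a halfspace test.

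One small point worth noting: you rank (iii) as easiest, but the paper singles out centering as the nontrivial issue there. To output an oracle in the model used downstream (e.g., for sampling), you must also exhibit an explicit center $b$ and inner radius for $C$, and the center is not given to you. The paper does this by approximately locating the apex $a$ of $C$ (via weak optimization to find the supporting hyperplane parallel to $h$) and taking $b$ midway between the base and $a$ on the segment $Oa$; this yields $C$ as $(b,\Omega(\eps r),2r')$-centered. Your remark that the inradius is polynomially bounded below is correct but stops short of saying how to \emph{compute} a witnessing center; filling that in requires exactly the same GLS machinery you invoke in part~(i).
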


\begin{proof}
Assertion~(i) follows directly from standard reductions (see Theorem~{4.3.2} and Lemma~{4.4.1} from Gr{\"o}tschel, Lov{\'a}sz, and Schrijver~\cite{GLS88}).
Note that $K^*$ is $\big(O,\frac{1}{r'},\frac{1}{r}\big)$-centered. To prove (ii), note that we can construct a membership oracle for $M(x)$ by using the fact that a point $y \in M(x)$ if and only if $y \in K$ and $2 x - y \in K$. If $\ray(x) \ge \eps$, it is straightforward to show that $M(x)$ is $(x, \Omega(\eps r), r')$-centered. The generalization of this construction to $M^{\lambda}_K(x)$ for any constant $\lambda > 0$ is immediate. Finally, to prove (iii), observe that the membership oracle is easy, but centering is the issue. We first determine the apex $a$ of $C$ (approximately) by finding the supporting hyperplane of $K$ that is parallel to $h$. We let $b$ denote the point midway on the segment $O a$ between base of the cap and $a$. It is easy to show that a Euclidean ball of radius $\Omega(\eps r)$ can be centered at $b$, which is contained within $C$. Thus $C$ is $(b,\Omega(\eps r), 2 r')$-centered.
\end{proof}

We will make use of standard sampling results (see, e.g., \cite{DFK91,Vem05}), which state that given $\eta > 0$, there exists an algorithm that outputs an $\eta$-uniform $X \in K$ using at most $\poly\big(n, \ln \frac{1}{\eta}, \ln \frac{r'}{r}\big)$ calls to a membership oracle for $K$ and arithmetic operations. (A random point $X \in K$ is \emph{$\eta$-uniform} if the total variation distance between the sample $X$ and uniform vector in $K$ is at most $\eta$.) As with membership oracles, it will simplify the presentation to state our constructions in terms of a true uniform sampler, but the generalization is straightforward.

\begin{lemma} \label{lem:sample-cover} 
Given $0 < \eps \leq 1$, constant $c \geq 2$, and an oracle for a convex body $K \subseteq \RE^n$ which is both well-centered and $(O,r,r')$-centered, there exists a randomized $(c,\eps)$-enumerator for $K$, which generates a covering of size 
\[
    2^{O(n)} \cdot \frac{1}{\eps^{(n-1)/2}} \cdot \log\frac{1}{\eps},
\]
such that the cover elements are $(a, O(\eps r), r')$-centered. The enumerator succeeds with probability $1 - 2^{-O(n)}$, and its overhead and space complexity are both polynomial in $n$, $\log\frac{r'}{r}$  and $\log\frac{1}{\eps}$.
\end{lemma}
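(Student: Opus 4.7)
The plan is to turn the existence argument in the proof of Lemma~\ref{lem:cover-worst} into a constructive, randomized procedure by combining uniform sampling in $K_\eps$ and in $K_\eps^*$ with the hitting-set reformulation of Lemma~\ref{lem:hitting-set}. The extra $\log(1/\eps)$ factor in the size bound will come naturally from processing the layered decomposition of Section~\ref{s:worst-opt} one layer at a time. Throughout, I can lift the given membership oracle for $K$ to oracles for $K_\eps$, $K_\eps^*$, caps of $K_\eps$ of width at least the relevant threshold, and Macbeath regions around sufficiently deep points, by Lemma~\ref{lem:oracle}.

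For each layer $i \in \{0,\dots,k_0{+}1\}$, with width scale $\eps_i = 2^i\eps$, I would run two coupled sampling subroutines. First (primal sampling), I would draw $N_i = 2^{cn}/\eps_i^{(n-1)/2}$ points uniformly from $K_\eps$. By Lemma~\ref{lem:layer}(b) applied to the $(K_\eps,\Lambda_i,4c)$-MNet for layer $i$ (where $\Lambda_i$ is the centroid set used in Section~\ref{s:worst-opt}), there are at most $2^{O(n)}/\eps_i^{(n-1)/2}$ ``large'' Macbeath regions, and each occupies a $\ge \eps_i^{(n+1)/2}$ fraction of $K_\eps$. Choosing the constant $c$ large enough, a Chernoff/union bound shows that every large Macbeath region contains at least one primal sample with failure probability $2^{-cn}$ per layer.

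Second (polar sampling), I would draw $\widetilde{N}_i = 2^{cn}/\eps_i^{(n-1)/2}$ points uniformly from $K_\eps^*$ using the polar oracle from Lemma~\ref{lem:oracle}(i). By Lemma~\ref{lem:layer}(d), all centers $y \in Y_{\ge t'}$ of the polar $(K_\eps^*,\Lambda'_i,5)$-MNet are hit with failure probability $2^{-cn}$. For every polar hit $z$, I would build an oracle for its $\eps_i$-representative cap $C_z$ in $K_\eps$ (Lemma~\ref{lem:oracle}(iii)), then draw $2^{cn}$ uniform samples from $C_z^{32}$. Lemma~\ref{lem:layer}(e) (which ultimately relies on Corollary~\ref{cor:sandwich} and the Mahler-like relation of Section~\ref{s:mahler}) guarantees that every ``small'' Macbeath region $M_{K_\eps}^{1/c}(x)$ in layer $i$ lies inside some such $C_z^{32}$ and constitutes a $2^{-O(n)}$ fraction of it, so these inner samples catch every small Macbeath region with failure probability $2^{-cn}$. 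Union-bounding over all $O(\log\frac{1}{\eps})$ layers and all targets yields an overall failure probability of $2^{-cn}$, as required.

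At the end, for each hit point $p$ I emit a description of $M_{K_\eps}^{1/c}(p)$ via the construction of Lemma~\ref{lem:oracle}(ii); by that lemma these bodies are $(p,\Omega(\eps r),r')$-centered, which is the stated output format. Correctness of the resulting collection as a $(c,\eps)$-covering follows from Lemma~\ref{lem:hitting-set} with MNet parameter $4c$, together with the layer-wise analysis above. The total number of emitted bodies is $\sum_i \widetilde{O}(N_i) = 2^{O(n)}/\eps^{(n-1)/2}\cdot \log(1/\eps)$, and each emission costs $\poly\bigl(n,\log\tfrac{r'}{r},\log\tfrac{1}{\eps}\bigr)$ sampler and oracle calls, giving the claimed overhead and space bound. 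The main obstacle I anticipate is the coupled primal/polar step: the small Macbeath region we aim to catch is never sampled directly, but only reached via a polar sample and a representative-cap oracle; making this chain rigorous requires invoking the affine-invariant similarity result (Corollary~\ref{cor:sandwich}) together with Lemma~\ref{lem:layer}(e) to be sure that every small layer-$i$ Macbeath region is a $2^{-O(n)}$ fraction of some $C_z^{32}$ whose defining $z$ lies in a Macbeath region we do hit with the polar samples.
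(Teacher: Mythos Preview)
Your overall architecture is the same as the paper's: layer-by-layer, hit the ``large'' Macbeath regions of a $(K_\eps,\Lambda_i,4c)$-MNet by direct sampling, hit the ``small'' ones indirectly by first sampling in the polar to locate a large polar Macbeath region and then sampling in the associated $\eps_i$-representative cap $C_z^{32}$, and finally promote every hit $p$ to the covering element $M_{K_\eps}^{1/c}(p)$ via Lemma~\ref{lem:hitting-set}. That part is fine.

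There is, however, a genuine quantitative gap in the sampling step. You sample uniformly from all of $K_\eps$ (and all of $K_\eps^*$). A large Macbeath region in layer $i$ has relative volume at least $\eps_i^{(n+1)/2}$ in $K_\eps$, so with your budget $N_i = 2^{cn}/\eps_i^{(n-1)/2}$ the expected number of samples landing in a fixed large region is only
\[
    N_i \cdot \eps_i^{(n+1)/2} ~=~ 2^{cn}\,\eps_i,
\]
which tends to $0$ as $\eps \to 0$. Your Chernoff/union-bound claim therefore fails; with this budget you do \emph{not} hit the large regions with probability $1-2^{-cn}$. The same defect recurs in the polar phase: a large polar Macbeath region has relative volume $2^{-O(n)}\eps_i^{(n+1)/2}$ in $K_\eps^*$, so $\widetilde N_i$ polar samples from all of $K_\eps^*$ are again a factor $1/\eps_i$ short.

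The paper avoids this by sampling not from $K_\eps$ but from the thin annulus $\Lambda_{K_\eps}(\eps_i) = K_\eps \setminus (1-4\eps_i)K_\eps$, which by Lemma~\ref{lem:layer}(a) contains all the relevant Macbeath regions and has relative volume $O(n\eps_i)$. In this annulus each large region occupies a $2^{-O(n)}\eps_i^{(n-1)/2}$ fraction, so $(2^{O(n)}/\eps_i^{(n-1)/2})\cdot\log(1/\eps_i)$ annulus samples suffice; likewise for the polar annulus $\Lambda_{K_\eps^*}(\eps_i)$. (Annulus sampling is easy: draw $p$ uniformly in $K_\eps$, then binary-search along the ray $Op$ for a point in the annulus.) Once you restrict the sampling domain this way, the rest of your argument goes through and coincides with the paper's proof; the $\log(1/\eps)$ in the final size bound comes from the per-layer sample count, not from the number of layers, since the geometric sum over layers is dominated by layer $0$.
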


In our construction, the elements of the covering will be centrally symmetric, and more specifically, the covering element centered at a point $a \in K$ will be a Macbeath region of the form $M_{K_{\eps}}^{1/c'}(a)$, where $c' = O(c)$.

\begin{proof}
Recall the layered decomposition of $K$ described just before Lemma~\ref{lem:cover-worst}. For $0 \le i \le k_0$, layer $i$ consists of points $x \in K$ such that $\width(x) \in [2^{i-1},2^i)\eps$, and layer $k_0+1$ consists of the remaining points $x \in K$. Note that for points in layer $k_0+1$, $\width(x) \ge \beta$. Here $\beta$ is a constant and the number of layers $k_0+2 = O(\log\frac{1}{\eps})$. Let $\Lambda_i$ denote the points in layer $i$. Our enumerator runs in phases, where the $i$-th phase generates elements of a $\Lambda_i$-limited $c$-covering with respect to $K_{\eps}$. Clearly, the elements generated in all the phases together constitute a $(c,\eps)$-covering for $K$. 

For $0 \le i \le k_0$, to describe phase $i$ of the enumerator, it will simplify notation to write $K, \Lambda, \eps$, and $c$ for $K_{\eps}, \Lambda_i, 2^{i-1}\eps$, and $4c$, respectively. Our (new) objective is to generate a $\Lambda$-limited $(c/4)$-covering in this phase. Let $X$ be a $(K,\Lambda,c)$-MNet, let $\MM = \MM_{K}^{1/c}(X)$ be the associated covering, and let $X'$ be a hitting set for $\MM$. By Lemma~\ref{lem:hitting-set}, $\MM_K^{4/c}(X')$ is a $\Lambda$-limited $(c/4)$-covering.

We show how to generate the hitting set $X'$ for $\MM$ along with the elements of $\MM_K^{4/c}(X')$ in the desired form. In addition to the quantities $K, \Lambda, \eps, c, X$ defined above, define also the quantities $\Lambda',Y,t,t'$, as in Lemma~\ref{lem:layer}. By Lemma~\ref{lem:layer}(a), the regions of $\MM$ are contained in $\Lambda_K(\eps) = K \setminus (1-4\eps) K$. Recall the distinction between ``large'' and ``small'' Macbeath regions of $\MM$, based on whether its relative volume is at least $t$. We will use a different strategy for hitting these two kinds of regions. 

First, let us consider the large Macbeath regions. We claim that it suffices to choose $(2^{O(n)} / \eps^{(n-1)/2}) \cdot \log\frac{1}{\eps}$ points uniformly in $\Lambda_K(\eps)$ to hit all the large Macbeath regions with high probability. Before proving this, note that we can sample $\Lambda_K(\eps)$ uniformly by first choosing a point $p$ from the uniform distribution in $K$ and then choosing a point uniformly from the portion of the ray $Op \cap \Lambda_K(\eps)$. Using binary search, we can find such a point with constant probability in $O(\log\frac{r'}{r} + \log\frac{1}{\eps})$ steps. We omit the straightforward details.

To prove the claim, let $M$ be a large Macbeath region. By Lemma~\ref{lem:layer}(a) and (b), $M \subseteq \Lambda_K(\eps)$, $\vol_K(M) \ge \eps^{(n+1)/2}$, and $\vol_K(\Lambda_K(\eps)) = O(n\eps)$. Thus $\vol(M) / \vol(\Lambda_K(\eps)) \geq 2^{-O(n)} \eps^{(n-1)/2}$. Also, by Lemma~\ref{lem:layer}(b), the number of large Macbeath regions is at most  $2^{O(n)}/\eps^{(n-1)/2}$. A standard calculation implies that the probability of failing to hit some large Macbeath region in a layer is no more than $\eps^{O(n)}$.

Next we show how to generate a hitting set for the small Macbeath regions. Intuitively, as these are small, they cannot be stabbed efficiently by uniform sampling in $\Lambda_K(\eps)$. Instead, we will hit them by exploiting the relationship between the small Macbeath regions of $\MM$ and the large Macbeath regions of $\MM' = \MM_{K^*}^{1/5}(Y)$. Recall that $Y$ is a $(K^*,\Lambda',5)$-MNet, where $\Lambda'$ is the boundary of $(1-\eps)K^*$, and the large Macbeath regions of $\MM'$ have volume at least $t' = 2^{-O(n)} \eps^{(n+1)/2}$. Our high-level idea for hitting the small Macbeath regions of $\MM$ is to hit the large Macbeath regions of $\MM'$ and then uniformly sample the associated $\eps$-representative cap of $K$. 

More precisely, we perform $(2^{O(n)} / \eps^{(n-1)/2}) \cdot \log(1/\eps)$ iterations of the following procedure. First, we choose a point $p$ uniformly in $\Lambda_{K^*}(\eps) = K^* \setminus (1-2\eps) K^*$. (This can be done in a manner analogous to uniformly sampling $\Lambda_K(\eps)$, which we described above.) Next, we sample uniformly in the cap $C_p^{32}$, where $C_p$ is $p$'s $\eps$-representative cap in $K$. We claim that this procedure stabs all the small Macbeath regions of $\MM$ with high probability.

To see why, recall from Lemma~\ref{lem:layer}(e) that for any small Macbeath region $M \in \MM$, there is a large Macbeath region $M' \in \MM'$ with the following properties. Let $y$ be any point in $M'$ and let $C_y$ be $y$'s $\eps$-representative cap in $K$. Then $M \subseteq C_y^{32}$ and $\vol(M) \ge 2^{-O(n)} \vol(C_y^{32})$. Also, by properties (c) and (d) of Lemma~\ref{lem:layer}, we have $M' \subseteq \Lambda_{K^*}(\eps)$, $\vol_{K^*}(M') \ge 2^{-O(n)} \eps^{(n+1)/2}$, and  $\vol_{K^*}(\Lambda_{K^*}(\eps)) = O(n\eps)$. It follows that the probability of hitting a fixed small Macbeath region $M$ of $\MM$ in any one trial (\textit{i.e.}, sampling $p$ uniformly in $\Lambda_{K^*}(\eps)$, followed by sampling a point uniformly in the cap $C_p^{32}$) is at least $2^{-O(n)} \eps^{(n-1)/2}$. Also, by Lemma~\ref{lem:layer}(f), the number of small Macbeath regions of $\MM$ is at most  $2^{O(n)}/\eps^{(n-1)/2}$. The same calculation as for large Macbeath regions implies that the probability of failing to hit some small Macbeath region of $\MM$ is no more than $\eps^{O(n)}$. 

Putting it together, it follows that we can hit the Macbeath regions in all the layers $i$, $0 \leq i \leq k_0$ with failure probability bounded by $2^{-O(n)}$. 

Finally, we describe phase $k_0+1$ of the enumerator. Recall that $\Lambda_{k_0+1}$ consists of points such that the associated minimum volume cap has width at least $\beta$, where $\beta$ is a constant. Let $X$ be a $(K_{\eps},\Lambda_{k_0+1},4c)$-MNet and let $\MM = \MM_{K_{\eps}}^{1/4c}(X)$ be the associated covering. By Lemma~\ref{lem:wide-cap}, the Macbeath regions of $\MM$ have relative volume at least $2^{-O(n)}$. Thus, we can hit all the Macbeath regions of $\MM$ with $2^{O(n)}$ uniformly sampled points in $K$ with failure probability no more that $2^{-O(n)}$. 

In closing, we mention that Lemma~\ref{lem:oracle} shows that the enumerator can construct the three membership oracles it needs for its operation. Specifically, for each point in the hitting set, by part (ii), we can construct an oracle for the associated Macbeath region. By part (i), we can  construct an oracle for $K^*$, which we need to sample uniformly in $K^*$, and by part (iii), we can construct oracles for the caps of $K$ which need to be sampled uniformly. This completes the proof.
\end{proof}

Our algorithm and its analysis follows the general structure presented by Eisenbrand \etal~\cite{EHN11} and Nasz{\'o}di and Venzin~\cite{NaV22}. We solve the $(1+\eps)$-CVP in the norm $\|\cdot\|_K$ by reducing it to the $(1+\eps)$-gap CVP problem in this norm. In the $(1+\eps)$-gap CVP problem, given a target $t$ and a number $\gamma > 0$, we have to either find a lattice vector whose distance to $t$ is at most $\gamma$ or assert that all lattice vectors have distance more than $\gamma / (1+\eps)$. We solve the $(1+\eps)$-CVP problem via binary search on the distance from the target. Given the problem parameters $n$, $\eps$, $\rho = \frac{r'}{r}$, and letting $b$ denote the number of bits in the numerical inputs, the number of different distance values that need to be tested can be shown to be $O(\log n + \log\frac{1}{\eps} + \log \rho + \log b)$. Let $\Phi(n, \eps, \rho, b)$ denote this quantity. For each distance, we need to solve the $(1+\eps)$-gap CVP problem. In turn, the $(1+\eps)$-gap CVP problem is solved by invoking the $(c,\eps)$-enumerator. For each of the $N$ bodies generated by the enumerator, we need to call a 2-gap CVP solver. For this purpose, we use Dadush and Kun's deterministic algorithm~\cite{DaK16} as the 2-gap CVP solver. As this 2-gap CVP solver always yields the correct answer, the only source of error in our algorithm arises from the fact that a valid covering may not be generated. The failure rate of our $(c,\eps)$-enumerator is $2^{-O(n)}$, which we reduce further by running it $\log \Phi(n, \eps, \rho, b)$ times. This ensures that all the coverings generated over the course of solving the $(1+\eps)$-CVP problem are correct with probability at least $1-2^{-O(n)}$. Recalling that the algorithm by Dadush and Kun takes $2^{O(n)}$ time and $O(2^n)$ space, we have established Theorem~\ref{thm:cvp} (neglecting polynomial factors in the input size).

\subsection{Approximate Integer Programming} \label{s:apps-ip}

Through a reduction by Dadush, our CVP result also implies a new algorithm for approximate integer programming (IP). We are given a convex body $K \subseteq \RE^n$ and an $n$-dimensional lattice $L \subset \RE^n$, and we are to determine either that $K \cap L = \emptyset$ or return a point $y \in K \cap L$. The best algorithm known for this problem takes $n^{O(n)}$ time~\cite{Kan87}, which has sparked interest in the approximate version. In approximate integer programming, the algorithm must return a lattice point in $(1+\eps)K$ (where the $(1+\eps)$-expansion of $K$ is about the centroid), or assert that there are no lattice points in $K$.

Dadush~\cite{Dad14} has shown that approximate IP can be reduced to $(1+\eps)$-CVP problem under a well-centered norm. His method is to first find an approximate centroid $p$ and then make one call to a $(1+\eps)$-CVP solver for the norm induced by $K-p$. By plugging in our solver, we obtain an immediate improvement with respect to the $\eps$-dependencies (neglecting polynomial factors in the input size).

\begin{theorem} \label{thm:approx-ip}
There exists a $2^{O(n)}/\eps^{(n-1)/2}$-time and $O(2^{n})$-space randomized algorithm which solves the approximate integer programming problem with probability at least $1 - 2^{-n}$ .
\end{theorem}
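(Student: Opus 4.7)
The plan is to apply Dadush's reduction from approximate integer programming to $(1+\eps)$-CVP in a well-centered norm~\cite{Dad14}, using our algorithm from Theorem~\ref{thm:cvp} as the CVP subroutine.

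First, I would compute an approximate centroid $p$ of $K$ using its weak membership oracle. This is a standard subroutine that runs in time polynomial in $n$ and the input size via uniform sampling from $K$. Setting $K' = K - p$, the origin approximately coincides with the centroid of $K'$, so by Lemma~\ref{lem:centroid}, $K'$ satisfies the Kovner-Besicovitch property and is therefore well-centered in exactly the sense required by Theorem~\ref{thm:cvp}.

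Next, I would invoke the algorithm of Theorem~\ref{thm:cvp} on lattice $L$ with target $p$ under the norm $\|\cdot\|_{K'}$. A lattice vector $y \in L$ lies in the $(1+\eps)$-expansion of $K$ about $p$ if and only if $y - p \in (1+\eps) K'$, equivalently $\|y-p\|_{K'} \leq 1+\eps$. If $K \cap L \neq \emptyset$, then the minimum distance from $p$ to $L$ under $\|\cdot\|_{K'}$ is at most $1$, so the CVP solver returns some $y \in L$ with $\|y-p\|_{K'} \leq 1+\eps$, i.e., $y \in (1+\eps)K$. I would verify this containment via the membership oracle and output $y$ if the test succeeds; otherwise declare that $K \cap L = \emptyset$.

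The running time $2^{O(n)}/\eps^{(n-1)/2}$, space bound $O(2^n)$, and failure probability $2^{-n}$ all follow directly from Theorem~\ref{thm:cvp}, with only polynomial-factor overhead from the centroid computation and the final membership test. The main point to verify is that Dadush's reduction is genuinely black-box in the CVP solver (so that substituting a faster solver yields a correspondingly faster IP algorithm), which is precisely the content of~\cite{Dad14}; no new geometric or algorithmic ideas are required beyond plugging in the subroutine from Theorem~\ref{thm:cvp}.
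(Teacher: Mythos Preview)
Your proposal is correct and follows essentially the same approach as the paper: invoke Dadush's reduction~\cite{Dad14}, which computes an approximate centroid $p$ and makes a single call to a $(1+\eps)$-CVP solver for the well-centered norm induced by $K-p$, and plug in the solver of Theorem~\ref{thm:cvp}. The paper's own argument is no more detailed than yours; the theorem is presented as an immediate corollary of the black-box reduction and Theorem~\ref{thm:cvp}.
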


\section{Conclusions} \label{s:conc}

In this paper we have demonstrated the existence of concise coverings for convex bodies. In particular, we have shown that given a real parameter $0 < \eps \leq 1$ and constant $c \geq 2$, any well-centered convex body $K$ in $\RE^n$ has a $(c,\eps)$-covering for $K$ consisting of at most $2^{O(n)} / \eps^{(n-1)/2}$ centrally symmetric convex bodies. This bound is optimal with respect to $\eps$-dependencies. Furthermore, we have shown that the size of the covering is instance-optimal up to factors of $2^{O(n)}$. Coverings are useful structures. One consequence of our improved coverings is a new (and arguably simpler) construction of $\eps$-approximating polytopes in the Banach-Mazur metric. We have also demonstrated improved approximation algorithms for the closest-vector problem in general norms and integer programming.

 In contrast to earlier approaches, our covering elements are based on scaled Macbeath regions for the body $K$. This raises the question of what is the best choice of covering elements. Eisenbrand \etal~\cite{EHN11} showed that the size of any covering based on ellipsoids grows as $\Omega(n^{n/2})$, even when the domain being covered is a hypercube. Our Macbeath-based approach results in a reduction of the dimensional dependence to $2^{O(n)}$ for any convex body. Macbeath regions have many nice properties, including the fact that it is easy to construct membership oracles from a membership oracle for the original body. Unfortunately, Macbeath regions have drawbacks, including the fact that their boundary complexity can be as high as $K$'s boundary complexity.

It is natural to wonder whether we can do better than ellipsoid-based coverings with uniform covering elements. For example, can we build more economical coverings based on affine transformations of some other fixed convex body. Recent results from the theory of volume ratios imply that this is not generally possible. The work of Galicer, Merzbacher, and Pinasco \cite{GMP21} (combined with polarity) implies that for any convex body $L$, there exists a convex body $K$, such that for any affine transformation $T$, if $T(L)$ is contained within $K$, then $\vol(T(L))$ is at most $\vol(K) / (b n)^{n/2}$, where $b$ is an absolute constant. A straightforward packing argument implies that if we restrict covering elements to affine images of a fixed convex body, the worst-case size of a $(c,\eps)$ covering grows as $\Omega(n^{n/2})$ (independent of $\eps$).


\pdfbookmark[1]{References}{s:ref}
\setlength{\bibitemsep}{1ex} 
\DeclareFieldFormat[article, inproceedings, incollection]{title}{#1} 
\printbibliography 

\end{document}